\newtheorem{theorem}{Theorem}[section]
\newtheorem{example}[theorem]{Example}
\newtheorem{proposition}[theorem]{Proposition}
\newtheorem{remark}[theorem]{Remark}
\numberwithin{equation}{section}
\newcommand{\rmd}{{\color{red}{\rm d}}}
\newcommand{\scp}[2]{{\big\langle {#1}\, , \, {#2}\big\rangle}}
\newcommand{\Scp}[2]{{\Big\langle {#1}\, , \, {#2}\Big\rangle}}
\newcommand{\SCP}[2]{{\left\langle {#1}\, , \, {#2}\right\rangle}}
\newcommand{\qp}[2]{{\left( {#1}\, , \, {#2}\right)}}
\newcommand{\qv}[2]{{\left[ {#1}\, , \, {#2}\right]}}
\newcommand{\mb}[1]{\mbox{\boldmath{$#1$}}}
\newcommand{\bs}[1]{\boldsymbol{#1}}
\newcommand{\E}[1]{\mathbb{E}\left[{#1}\right]}
\def\contract{\makebox[1.2em][c]{\mbox{\rule{.6em}
			{.01truein}\rule{.01truein}{.6em}}}}
\def\p{\partial}
\def\GRAD{{\bm \nabla}}
\def\DIV{{\bm \nabla \cdot}}
\def\CROSS{{\bm \times}}
\def\div{\hbox{div}}
\def\curl{\hbox{curl}}
\def\ep{{\epsilon}}
\def\bx{{\bf x}}
\def\bfm{{\bf m}}
\def\bR{{\bf R}}
\def\bS{{\bf S}}
\def\bu{{\bf u}}
\def\bv{{\bf v}}
\def\bd{{\bf d}}
\def\bff{{\bf f}}
\def\bsigma{{\bs \sigma}}
\def\ad{{{\rm ad}}}
\def\Ad{{{\rm Ad}}}
\def\p{{\partial}}
\def\rmd{{\color{red}{\rm d}}}
\pgfplotsset{compat=1.14}
\title{Stochastic effects of waves on currents in the ocean mixed layer}
\author{Darryl D. Holm\quad\hbox{and}\quad Ruiao Hu \\ \smallskip\large
	Department of Mathematics\\ 
	Imperial College London SW7 2AZ, UK
	\\ \bigskip\normalsize
	d.holm@ic.ac.uk and ruiao.hu15@ic.ac.uk} 
\date{ }
\begin{document}
	
\maketitle

\begin{abstract}


	This paper introduces an energy-preserving stochastic model for studying wave effects on currents in the ocean mixing layer. The model is called stochastic forcing by Lie transport (SFLT). The SFLT model is derived here from a stochastic constrained variational principle, so it has a Kelvin circulation theorem. The examples of SFLT given here treat 3D Euler fluid flow, rotating shallow water dynamics and the Euler-Boussinesq equations. In each example, one sees the effect of stochastic Stokes drift and material entrainment in the generation of fluid circulation. We also present an Eulerian-averaged SFLT model (EA SFLT), based on decomposing the Eulerian solutions of the energy-conserving SFLT model into sums of their expectations and fluctuations. 
	
\end{abstract}




\section{Introduction}





\paragraph{Wave effects on currents (WEC).}
In studies of the ocean mixing layer (OML) the problem of wave effects on currents (WEC) arises. For example, surface gravity waves can drive Langmuir circulations which have important influences on near-surface currents \cite{LC-pics}. Langmuir circulations are horizontally oriented pairs of oppositely circulating vortex tubes aligned generally  along the direction of the wind, as reviewed, e.g., in \cite{Leibovich1983,Garrett1996,Thorpe2004,{F-K2018}}. Because they represent arrays of organised fluid transport, Langmuir circulation patterns can produce vertical transport which strongly entrains sediment and detritus into the OML from both above and below. This means they can have a strong effects, for example, on the dispersion of oil spills in a shallow sea, \cite{Thorpe2001}. 

Most previous studies of turbulence in the OML have been performed in the context of wave-averaged dynamics. For a recent review, see \cite{AY2020}. The underlying assumption is that the surface gravity waves represent the fastest component in the system and are only weakly modulated by the other components (turbulence and currents). Averaging the stratified Euler equations in three spatial dimensions over a time scale longer than the wave period produces a modified set of equations, known as the CL equations, after Craik and Leibovich, \cite{CL1976}. 

The temporal-averaging basis of the CL equations introduces Stokes-drift effects which represent additional wave-averaged forces and material advection terms that emerge from multi-scale asymptotic theories \cite{CL1976, McWRL2004}. Remarkably, the ideal CL equations preserve many of the properties of the original stratified Euler equations. For example, the ideal CL equations conserve energy and have a Hamiltonian formulation \cite{HolmCL1996}. 

The aim of the present paper is to develop a stochastic theory of WEC which encompasses the Craik-Leibovich theory and still preserves the original energy. Indeed, in deriving the stochastic theory, we will take an approach which can be adapted to select whichever primary conservation laws of the deterministic theory are desired. Specifically, our approach will use a stochastic version of a method from classical mechanics known as the Reduced Lagrange-d'Alembert-Pontryagin (RLDP) formulation of constrained dynamics which is reduced by a symmetry of the Lagrangian in Hamilton's principle. Stochastic applications of the RLDP formulation are discussed, e.g., in \cite{B-RO2009,GH18a}. 

\paragraph{RLDP result for the stochastic Craik-Leibovich theory for the Euler-Boussinesq equations. }
Specific results of the theoretical developments in this paper can be assessed by simply examining the example of applying the RLDP approach to the ideal CL equations themselves in section \ref{subsec: EB}. 

One finds in this example that the energy-preserving stochastic EB equations in \eqref{eq:SEB eq} produce a stochastic contribution to the vortex force in the Craik-Leibovich equations \cite{CL1976} whose deterministic formulation with Hamilton's principle is given in \cite{HolmCL1996}. Namely, they reduce as follows, where $\rmd \bu$ denotes the Stratonovich stochastic time differential of Eulerian fluid velocity $\bu$, the Stokes drift velocity is $\bu^S(\bx)$, the Coriolis parameter is $2 \bs{\Omega}={\rm curl}\bR(\bx)$, the pressure is $p$, the volume element is $D$, the buoyancy is $b$ with gravitational constant $g$, and the stochastic spatial modes obtained via data calibration are denoted $\bff^k(\bx)$
	\begin{align}
	\begin{split}    
	\rmd \bu - \bu\times {\rm curl}\big(\bu - \bu^S(\bx) + \bR(\bx)\big) dt
	&= -\,\nabla \Big( p + \frac12 |\bu|^2 + \bu\cdot\bu^S \Big)dt  - gb\, \mathbf{\hat{z}} \, dt
	\\&\qquad + \sum_{k>0} \Big( \bu\times {\rm curl}\,\bff^k 
	- \,\nabla \big(\bu\cdot \bff^k\big)\Big) \circ dW^k_t
	\, ,   \\
	\partial_t D + {\rm div}(D\bu) &= 0\, ,   \quad \hbox{with}\quad D=1
	\, ,   \\
	\partial_t b + \bu\cdot \nabla b &= 0\,, 
	\end{split}   
	\label{SCL-eqns}
	\end{align}
	where one interprets the semimartingale $\rmd \bu^S = \bu^S(\bx)\,dt + \sum_k \bff^k(\bx)  \circ dW^k_t $ as a stochastic augmentation of the usual steady prescribed Stokes drift velocity and one recovers the deterministic Craik-Leibovich equations when the stochastic terms proportional to $\circ dW^k_t$ are absent. 
	
	The energy-preserving stochastic EB equations in \eqref{eq:SEB eq} imply the following equation for potential vorticity density, defined by $q := (\curl \bfm)\cdot \GRAD b  = (\bs{\omega} + 2 \bs{\Omega})\cdot \GRAD b$, where $\bfm=\bu+\bR$ for $D=1$. Namely,
\begin{align}
\rmd q + \bu\cdot\GRAD q\,dt =: -\,\sum_k {\rm div} \mathbf{J}^k \circ dW^k_t  
\,,
\label{eqn: PV-StochEB-intro}
\end{align}
in which the ``$J$-fluxes of PV'' on the right-hand side are discussed, e.g., in \cite{HM1987, Marshall&Nurser1992}.   See also \cite{BodnerFK2020} for LES turbulence interpretations of these fluxes.

In summary, while the energy-preserving stochastically-augmented CL vortex force and entrainment effects in this example can locally create stochastic Langmuir circulations, the total volume-integrated potential vorticity $Q = \int_{\mathcal{D}}q \,dV$ will be preserved for appropriate boundary conditions.

\subsection{Motivating question and main results of the paper} 

The present paper addresses the geometric interplay between energy and circulation, when stochasticity is introduced into fluid dynamics, in both the incompressible flow of an ideal Euler fluid and in the flows of ideal fluids with advected quantities. This is a burgeoning area of research in fluid dynamics. For recent introductory surveys of stochastic fluid dynamics with applications, see, e.g., \cite{Cruzeiro2020,FP2020,GH2020}.

The question underlying the present work is, ``What types of noise perturbations can be added to fluid dynamics which will preserve its fundamental properties of energy conservation, Kelvin-Noether circulation dynamics and conserved properties resulting from invariance under Lagrangian particle relabelling?'' Since these properties all arise from the geometric structure of fluid dynamics, the noise perturbations we consider will be introduced in a geometrical framework. 

The RLDP formulation of fluid dynamics leads to a constrained variational principle for stochastic ideal fluid dynamics through which noise may be introduced as a prescribed stochastic force. By \emph{choosing} an appropriate form of the stochastic force, the RLDP formulation can be designed to preserve whatever conservation law one may desire among those of the deterministic ideal fluid equations in any number of dimensions. 

For applications in fluid dynamics, the RLDP formulation results in the procedure mentioned above, called Stochastic Forcing by Lie Transport (SFLT). This is constructed by choosing the stochastic forces to Lie derivative of the transport velocity vector field acting on some prescribed one-form density which conserves the ideal fluid energy. Potential applications include a stochastic version of the Craik-Leibovich (CL) vortex force which presumably could generate stochastic Langmuir circulation. This result is introduced for the Euler fluid equations for 3D incompressible flow in Example \ref{Prop: Energy-conserved} and Proposition \ref{Cor: VortexForce}. The corresponding theory for ideal flows in general is also treated in section \ref{sec: SDP SFLT-erg} where stochastic advected quantities are introduced. Using RLDP, the existence of Kelvin-Noether theorem is automatic due to the variational nature of the formulation. When the advection of the fluid density is assumed, the Kelvin circulation theorem becomes a simple corollary of the Kelvin-Noether theorem as proven in Theorem \ref{thm:SFLT KN}.

\paragraph{SFLT has dual design capabilities.}
Besides adding stochastic \emph{forces} which drive the fluid motion equation, the SFLT framework can also be designed to distribute stochastic \emph{sources} in the advective transport equations, as discussed in section \ref{sec: SDP SFLT-erg}. The stochastic sources distributed in the advective transport equations by SFLT can be designed to model, for example, stochastic changes in the material properties of inertial fluid parcels which may be embedded in the flow. In particular, one can use SFLT to model the stochastic dynamics of a mixture of heavier and lighter parcels whose fluid paths deviate from passive tracers, which are carried by the drift flow velocity.  In this case, the mass density would be changing stochastically in the material frame of the flow drift velocity, because of entrainment or detrainment of sediments. One can observe the entrainment or detrainment of sediments in Langmuir circulations, see for example, \cite{LC-pics}. One can also imagine using SFLT to model the transport of a stochastically evolving, spatially distributed, algae bloom which has a distribution of shapes, so it is only partially embedded in a flow around an obstacle, such as an island in the ocean \cite{globcurrent,Messie2020}. 

Thus, SFLT has dual design capabilities. It can transport stochastic \emph{sources} in the material frame, and it can also impose stochastic \emph{non-inertial forces} arising from stochastic changes of the frame of motion, such as the Craik-Leibovich \emph{vortex force} \cite{CL1976,HolmCL1996}. The dual design capabilities of SFLT could potentially lead to a variety of important applications. For example, the use of SFLT for modelling entrainment and detrainment of various materials into flows in the ocean mixed layer (OML) can in principle be instrumental in modelling some important components of natural processes such as gas and nutrient exchanges. Besides its potential importance in modelling natural processes in the transport of materials such as sediment or algae blooms  in the OML, the SFLT stochastic modelling framework may also find a role in data assimilation for predicting the transport of pollution such as oil droplets, microplastics, etc. 

The energy preserving SFLT framework is extended to the Eulerian Averaged SFLT (EA SFLT) framework by applying an \emph{Eulerian Average} on the Eulerian quantities in the equations. These systems are \emph{non-local} in probability space in the sense that the expected momentum density and expected advected quantities are assumed to replace the drift momentum density and advected quantities respectively. These equations retain the fundamental properties of fluid dynamics and in addition, they have potential uses in climate change science since the quantities of interests have a clear sense of expectation and fluctuation, and the expected fluid motion is deterministic. For quadratic Hamiltonians, we find closed-form equations for the dynamics of the expectations and fluctuations of the Eulerian fluid variables. Total energy conservation enables us to show that, over time, the energy of the expected quantities is converted into the energy of the fluctuations, while the sum remains the same. 

To illustrate the dual design capability of the proposed SLFT framework and the EA SFLT framework, explicit applications for the motion of fluids under gravity with SFLT noise will be given in the last part of the paper. Section \ref{subsec: HeavyTop} deals with the heavy top, which is a finite degree-of-freedom subsystem of fluid motion \cite{Holm1986}; in section \ref{subsec: RSW} for rotating shallow water dynamics; and in section \ref{subsec: EB} for the stochastic Euler-Boussinesq equations. 

\paragraph{List of abbreviations}
\begin{itemize}
    \item
    Geophysical Fluid Dynamics (GFD)
    \item
    Wave Effects on Currents (WEC)
    \item 
    Reduced Lagrange d'Alembert Principle (RLDP)
    \item
    Craik-Leibovich (CL)
    \item
    Generalised Lagrangian Mean (GLM)
    \item 
    Stochastic Forcing by Lie Transport (SFLT)
    \item 
    Eulerian Averaged Stochastic Forcing by Lie Transport (EA SFLT)
    \item
    Stochastic Advection by Lie Transport (SALT)
    \item
    Lagrangian Averaged Stochastic Advection by Lie Transport (LA SALT)
    \item
    Stochastic Forced Euler-Poincar\'e (SFEP)
    \item
    Stochastic Forced Lie-Poisson (SFLP)
    \item 
    Rotating Shallow Water (RSW)
    \item 
    Euler-Boussinesq (EB)
\end{itemize}

\subsection{Plan of the paper} 



In section \ref{sec: SFLT noise} we introduce SFLT noise in the RLDP form and show that this formulation is flexible enough to accommodate a variety of different types of noise perturbations with potential applications to stochastic fluid dynamics.
In section \ref{sec: EA SFLT} we introduce the EA SFLT framework as an extension of the SFLT framework. We show that this modification to SFLT have potential applications to climate change science.
In section \ref{sec: Examples} we present several examples of conservative noise types for semidirect-product coadjoint motion. These include the finite-dimensional case of the heavy top (which is a gyroscopic analog for collective motion of a stratified fluid \cite{Holm1986}), and the infinite-dimensional fluid cases of rotating shallow water (RSW) dynamics in 2D and Euler-Boussinesq (EB) dynamics in 3D.
The paper emphasises the utility of the RLDP formulation in introducing a variety of stochastic perturbations which may be chosen to preserve the properties of the deterministic solutions of the fluid dynamics equations that are of most concern and value to the modeller. In particular, RLDP admits a unified variational formulation which combines the variational principle used in \cite{Holm2015} and SFLT, as explained in remark \ref{remark: Unified approach}.

Three appendices have been provided. These appendices are meant to supply supporting details without interfering with the main flow of the paper. They contain further discussions of the following topics related to the main part of the paper: 
\ref{app: semidirect ad^*} Coadjoint operator of semidirect-product Lie-Poisson brackets;
\ref{app: SALT} Stochastic advection by Lie transport (SALT); 
\ref{app: Ito form} It\^o form of the SFLP equation;

\paragraph{Slow + Fast decompositions of fluid flows.}
In modelling geophysical fluid dynamics (GFD) in ocean, atmosphere or climate science, one tends to focus on balanced solution states which are near certain observed equilibrium \cite{McWilliams2003,F-K2018}. These equilibria include hydrostatic and geostrophic balanced states, for example, in both the ocean and the atmosphere. Upsetting these balances can introduce both fast and slow temporal behaviour. The response depends on the range of the frequencies in the spectrum of excitations of the system away from equilibrium under the perturbations. When a separation in time scales exists in the response of the system to perturbations of one of its equilibria, then one may propose to average over the high frequency response and retain the remaining slow dynamics which remains near the equilibrium. This happens, for example, in the quasigeostrophic response to disturbances of geostrophic equilibria in the 2D rotating shallow water equations. Averaging over the high frequencies also often produces a slow \emph{ponderomotive force}, due for example to a slowly varying envelope which modulates the high frequency response.%
\footnote{In taking these averages over the fast behaviour of the dynamics one must also deal properly with any resonances which would occur. However, the effects of resonances will be neglected in our discussion here.}  In most situations in GFD, though, the solution only stays near the low-frequency \emph{slow manifold} for a rather finite time before developing a high-frequency response, See, e.g., Lorenz \cite{Lorenz1986,Lorenz1987,Lorenz1992}. In practice, for example in numerical weather prediction, the emergence of the high-frequency disturbances of the devoutly wished slow manifold introduces undeniable uncertainty which historically has often been handled by some sort of intervention, such as nonlinear re-initialisation \cite{Leith1980}. Sometimes the effects of the emergence of high frequencies can be treated to advantage in computational simulations. For example, the stochastic back-scatter approach of Leith \cite{Leith1990} is commonly used in computational simulations to feed energy from the burgeoning unstable development of high-wavenumber excitations into large-scale coherent structures at low frequencies, so as to enhance the formation of eddies in ocean flows,  \cite{Berloff2005}.

\paragraph{Hamilton's principle and Kelvin's circulation theorem.} An opportunity for further theoretical understanding of the interactions of disparate scales in GFD (and, for example, in astrophysics and planetary physics \cite{Pouquet2009}) arises when the non-dissipative part of the fluidic system dynamics under consideration in a domain $\mathcal{D}$ can be derived from Hamilton's principle, $\delta S=0$, for an action time-integral given by $S = \int \ell(u,a)dt$, where the fluid Lagrangian $\ell(u,a)$ depends on Eulerian fluid variables comprising the fluid velocity vector field $u\in \mathfrak{X}(\mathcal{D})$ and some set of advected quantities, $a\in V^*(\mathcal{D})$, dual in $L^2$ pairing to a vector space, or tensor space, $V$, defined over the domain $\mathcal{D}$ with appropriate boundary conditions. This approach leads directly to a Kelvin-Noether circulation theorem arising from the symmetry of the Lagrangian $\ell(u,a)$, written in terms of Eulerian fluid variables, under transformations of the initial material labels which preserve the initial conditions of the advected quantities along the particle trajectories in the flow \cite{HMR1998}. In this case, the averaging over high frequencies in the solution may be applied by substituting a WKB (slowly varying complex amplitude times a fast but slowly varying phase) decomposition of the fluid parcel trajectory into the Lagrangian, then averaging over the fast phase before taking variations.  A prominent example of this approach for applications in GFD is the General Lagrangian Mean (GLM) phase-averaged description of the interaction of fluctuations with a mean flow introduced in \cite{AM1978} and developed further in \cite{GjHo1996,Holm2002a,Holm2002,XV2015,AY2020}. Many of the ideas underlying GLM are also {\it standard} in the stability analysis of fluid equilibria in the Lagrangian picture. See, e.g., the classic stability analysis papers of \cite{Bernstein1958, FR1960, Newcomb1962, Hayes1970}. 

\paragraph{Stochastic variational principles.}
The present paper discusses yet another opportunity for introducing a slow-fast decomposition for the sake of further understanding of GFD. This opportunity arises when the separation in time scales in the symmetry-reduced Lagrangian $\ell(u,a)$ for Eulerian fluid variables can be posed as the decomposition of the fluid velocity $u$ into the sum of a deterministic drift velocity modelling the computationally resolvable scales and a stochastic velocity vector field modelling the correlates at the resolvable scales of the computationally unresolvable \emph{sub-grid} scales of motion. 
A stochastic variational principle using the slow-fast decomposition of the Lagrangian flow map led to the derivation of stochastic Euler--Poincar\'{e} (SEP) equations in \cite{Holm2015}. In \cite{Holm2015}, the fast motion of the Lagrangian trajectory is represented by a stochastic process, whose correlate statistics are to be calibrated from data as in \cite{CCHOS18,CCHOS18a}. The  stochastic decomposition proposed in \cite{Holm2015} was later derived using multi-time homogenization by Cotter et al.  \cite{CGH17}. Transport of fluid properties along an ensemble of these stochastic Lagrangian trajectories is called Stochastic Advection by Lie Transport (SALT). The well-posedness of the Euler fluid version of the Euler--Poincar\'{e} SALT equations in three dimensions was established in \cite{CFH19} for initial conditions in appropriate Sobolev spaces. The mathematical framework of semimartingale-driven stochastic variational principles was established in \cite{SC2020}.

\paragraph{The geometric interplay between energy and circulation in ideal fluid dynamics.}
Fluid dynamics transforms energy into circulation. As it turns out, this transformation is quite geometric. In particular, the solutions of Euler's fluid equations for ideal incompressible flow describe geodesic curves parametrised by time on the manifold of volume-preserving diffeomorphisms (smooth invertible maps). These geodesic curves are defined with respect to the metric provided by the fluid's kinetic energy, defined on the smooth, divergence-free, velocity vector fields which comprise the tangent space of the volume-preserving diffeomorphisms ${\rm SDiff}(\mathcal{D})$ acting on the flow domain $\mathcal{D}$. This 1966 result of V. I. Arnold   \cite{Arnold1966} introduced a fundamentally new geometric way of understanding energy and circulation in fluid dynamics. The incompressible Euler fluid case in \cite{Arnold1966} possesses the well-known conservation laws of energy and circulation, both of which arise via Noether's theorem from symmetries of the Lagrangian in Hamilton's variational principle under the right action of ${\rm SDiff}(\mathcal{D})$. Later, Arnold  \cite{Arnold1974} noticed the topological nature of another conservation law for the Euler fluid equations which is known as helicity. In particular, the conserved  helicity measures the topological linkage number of the vorticity field lines in Euler fluid dynamics. 

Some of this geometric interplay between energy and circulation in ideal fluid dynamics already shows up in Kelvin's circulation theorem \cite{Kelvin1869} for ideal Euler fluids, which emerges as the Kelvin-Noether theorem from right-invariance (relabelling symmetry) of the Lagrangian in Hamilton's principle when written in terms of the Eulerian representation, \cite{HMR1998}. 
In physical fluids, the particle-relabelling symmetry of the fluid Lagrangian in the Eulerian representation is broken from the full diffeomorphism group $G={\rm Diff}(\mathcal{D})$ to its subgroup $G_a={\rm Diff}(\mathcal{D})|_{a_0}$ which leaves invariant the initial conditions $a_0$ for advected fluid variables, denoted $a$, such as the mass density and thermodynamic properties. The Lagrangian histories $x_t=g_tx_0$ evolve as $\dot{x}_t=\dot{g}_tx_0 = u(g_t x_0,t)=u(x_t ,t)$, where the Eulerian velocity vector field given by $u := \dot{g}_t g_t^{-1}$ is right-invariant under the particle-relabelling $x_0\to y_0=h_0x_0$ for any fixed $h_0\in G$. 

In this geometrical setting for fluids in the Eulerian representation, the Legendre transform maps the Lagrangian variational formulation to the Hamiltonian formulation in which a Lie-Poisson bracket governs the motion generated by the Hamiltonian. This Lie-Poisson bracket is defined on the dual space of the Lie algebra of divergence-free vector fields $\mathfrak{X}_{\rm div}(\mathcal{D})$ for Euler fluids. For ideal fluids with advected quantities, the Lie-Poisson bracket is defined on the dual space of the SDP Lie algebra $\mathfrak{X}(\mathcal{D})\circledS V(\mathcal{D})$. 
As discussed in \cite{HMR1998}, the fluid motion in each case represents the coadjoint action of the corresponding Lie algebra on its dual space.
For more details in the present context, refer to appendices \ref{app: semidirect ad^*} and \ref{app: SALT}, particularly  \ref{app: KIW}.


%

\paragraph{Geometric formulation of advective transport.} The action of the full diffeomorphism group $G={\rm Diff}(\mathcal{D})$ on its \emph{order parameter} variables $a\in V(\mathcal{D})$ represents fluid advection. Advection occurs by \emph{push-forward} of functions on $V(\mathcal{D})$ (by right action by the inverse of the Lagrange-to-Euler map). This means the time-dependence of an advected quantity is given by the push-forward relation for composition of functions; namely,
\[
a(t) = g_{t\,*}a_0 := a_0g_t^{-1} 
\,.\] 
Thus, an advected quantity $a(t)$ evolves by the \emph{Lie chain rule} 
\[
\p_t a  =  - \mathcal{L}_{\dot{g}_t g_t^{-1} }a_t = - \mathcal{L}_{u_t}a
\,,
\]
in which $ \mathcal{L}_{u_t}a$ denotes the \emph{Lie derivative} of the advected quantity $a\in V(\mathcal{D})$ by the time-dependent Eulerian velocity vector field, $u_t:=\dot{g}_t g_t^{-1}\in\mathfrak{X}(\mathcal{D})$. For more details about performing this type of calculation for \emph{Lie transport}, see appendix \ref{app: semidirect ad^*}. 
The applications of these ideas in developing the SALT approach are discussed in appendix \ref{app: SALT}.

\section{Stochastic forcing by Lie transport (SFLT)}\label{sec: SFLT noise}

After briefly surveying in section \ref{subsec-RDLP} a few of the available capabilities arising from the RLDP principle for the sake of other potential research directions, we will continue in section \ref{sec: SDP SFLT-erg} toward our primary objective to develop the energy preserving SFLT theory for applications to semidirect-product fluid motion and material entrainment in the remainder of the paper.

\subsection{Reduced Lagrange d'Alembert Pontryagin (RLDP) Principle}\label{subsec-RDLP}

\paragraph{Stochastic non-inertial frames.}
Newton's law of motion in a non-inertial frame redefines momentum so as to introduce an additional force. The canonical example is the Coriolis force, which arises from redefining the momentum in a rotating frame of motion in terms of the velocity as viewed from an inertial frame. The CL vortex force arises in the fluid momentum equation by the addition of Stokes drift velocity to the frame of motion of the fluid momentum variable $m \in \mathfrak{X}^*$. In the present setting, it is natural to consider a stochastic addition to the momentum which corresponds to a stochastic change of referemce frame. 

One variational principle which permits such additional forces is the \emph{reduced Lagrange d'Alembert Pontryagin} (RLDP) principle. The general construction of fluid dynamics using the RLDP formulation is as follows \cite{GH18a}.\footnote{See \cite{B-RO2009} for the corresponding result for the finite-dimensional Euler-Lagrange equation in the absence of symmetry.} 
Consider the Lie group $G = {\rm Diff}(\mathcal{D})$ with an associated Lie algebra $\mathfrak{X}$, the RLDP principle for reduced Lagrangian $\ell:\mathfrak{X}\rightarrow \mathbb{R}$ and external force $F \in \mathfrak{X}^*$ is given by
\begin{align}
\delta \int_b^a \ell(u) + \SCP{m}{\dot{g}\,g^{-1}- u}\,dt - \int_b^a \SCP{F}{\delta g\,g^{-1}}\,dt = 0\,,
\label{eqn:RLDP}
\end{align}
where $g\in G$, the variations $\delta g, \delta u, \delta m$ are arbitrary with $\delta g$ vanishes at the end points $t=a,b$. The stationary condition \eqref{eqn:RLDP} yields the forced Euler-Poincar\'e equation with force $F$, as a 1-form density equation,
\begin{align}
\frac{\partial}{\partial t}\frac{\delta \ell}{\delta u} + \ad^*_u \frac{\delta \ell}{\delta u} + F = 0\,.
\label{Det-EPeqn}
\end{align}

Adding stochasticity to the RLDP principle with external forces can be done, as follows. For reduced Lagrangian $\ell:\mathfrak{X}\rightarrow \mathbb{R}$ and the set of external forces $F_i \in \mathfrak{X}^*$, stochastic RLDP is given by
\begin{align}
\delta \int_b^a \ell(u) + \SCP{m}{\rmd g\,g^{-1}- u}\,dt - \sum_i \int_b^a \SCP{F_i}{\eta}\circ dW_t^i = 0\,, 
\label{SRLDP action on G}
\end{align}
where $g\in G$, the variations $\delta g, \delta u, \delta m$ are arbitrary, $\delta g$ vanishes at the end points, $t=a,b$, and $\eta:=\delta g\,g^{-1}$. The stationary condition of the variational principle \eqref{SRLDP action on G} yields the following \emph{stochastically forced Euler-Poincar\'e} (SFEP) equation,
\begin{align}
\rmd \frac{\delta \ell}{\delta u} + \ad^*_u \frac{\delta \ell}{\delta u}\,dt + \sum_i F_i \circ dW^i_t = 0\,.
\label{Stoch-EPeqn}
\end{align}

\paragraph{Lie-Poisson Hamiltonian formulation.} Upon passing to the Hamiltonian side via the Legendre transform $\ell(u) = \SCP{m}{u} - h(m)$, where $m = \frac{\delta \ell}{\delta u}$ and $h$ is the reduced Hamiltonian, one finds the \emph{reduced Hamilton-d'Alembert Pontryagin phase space principle} given by
\begin{align}
\delta \int_b^a \left[\SCP{m}{\rmd g\,g^{-1}} - h(m)\right]\,dt - \sum_i \int_b^a \SCP{F_i}{\eta}\circ dW_t^i = 0\,.
\label{SRLDP-Ham}
\end{align}
As before, $g\in G$, the variations $\delta g, \delta m$ are arbitrary and $\delta g$ vanishes at the endpoints in time, $t=a,b$. The resulting \emph{stochastically forced Lie-Poisson} (SFLP) equation is
\begin{align}
\rmd m + \ad^*_u m\,dt + \sum_i F_i \circ dW^i_t = 0\,, \quad \text{where } u 
= \frac{\delta h}{\delta m}\,.
\label{SFLP-general-noise}
\end{align}
Within the construction of class of SFLP equations, choices of the external forces $F_i$ exists. With specific choices of the external forces $F_i$, the resulting equations can be either energy-preserving, or Casimir preserving as shown by subsequent examples.
\begin{example}[Energy-preserving SFLP equations] \label{Prop: Energy-conserved}
    Let $F_i = \ad^*_u f_i$ where $f_i \in \mathfrak{X}^*$, then the SFLP equation \eqref{SFLP-general-noise} becomes 
	\begin{align}
	\rmd m + \ad^*_u m\,dt + \sum_i \ad^*_u f_i \circ dW^i_t = 0\,, 
	\quad \text{where } u = \frac{\delta h}{\delta m}\,, 
	\label{SAPS-LP-eq}
	\end{align}
	in agreement with \cite{DrivasHolm2019}. 
	Energy preservation may now be immediately verified, since 
	\begin{align}
	\rmd h = \SCP{\rmd m}{\frac{\delta h}{\delta m}} = \SCP{-\ad^*_u m\,dt - \sum_i \ad^*_u f_i \circ dW^i_t}{u} = 0\,,
	\end{align}
	in which the last equality follows because of the anti-symmetry of the commutator, ${\rm ad}_u u = - [u ,u ] = 0$. However, the Casimirs are no longer conserved, because the Lie-Poisson operator in equation \eqref{SAPS-LP-eq} has been changed by the addition of noise. For completeness, the It\'o form of the equation \eqref{SAPS-LP-eq} is presented in Appendix \ref{app: Ito form}.
\end{example}

\begin{proposition}[Vortex force]\label{Cor: VortexForce}
    The energy preserving SFLP equation \eqref{SAPS-LP-eq} contains stochastic vortex forces.
\end{proposition}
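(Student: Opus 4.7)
The plan is to specialise the general SFLP equation \eqref{SAPS-LP-eq} to the setting of incompressible Euler flow in three dimensions, then identify the structural form of the stochastic noise term with that of the deterministic Craik--Leibovich vortex force $\bu \times \mathrm{curl}\,\bu^S$ displayed in \eqref{SCL-eqns}. Throughout, I will work with $\mathfrak{X}=\mathfrak{X}_{\rm div}(\mathcal{D})$ of divergence-free vector fields on $\mathcal{D}\subset\mathbb{R}^3$, whose dual $\mathfrak{X}^*$ is identified with 1-form densities modulo exact 1-forms (the exact part being absorbed into a pressure). The symbols $f_i\in\mathfrak{X}^*$ will be realised through vector proxies $\bff^i(\bx)$ via $f_i = \bff^i\cdot d\bx\otimes dV$.

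First, I would recall the standard formula for the coadjoint action on 1-form densities under divergence-free $u\in\mathfrak{X}$, namely $\ad^*_u \alpha = \mathcal{L}_u \alpha$ modulo $d(\cdot)$, and use Cartan's identity $\mathcal{L}_u\alpha = i_u\, d\alpha + d(i_u\alpha)$ for the 1-form part of $\alpha$. Translating to vector calculus with $\alpha \leftrightarrow \bv\cdot d\bx$ and $u\leftrightarrow \bu$ gives the well-known identity
\begin{align*}
\mathcal{L}_{\bu}(\bv\cdot d\bx) \;=\; \big(-\,\bu\times\mathrm{curl}\,\bv + \nabla(\bu\cdot\bv)\big)\cdot d\bx,
\end{align*}
so that the contribution $\ad^*_u f_i\circ dW^i_t$ in \eqref{SAPS-LP-eq} produces, for each $i$, the term $\big(-\bu\times\mathrm{curl}\,\bff^i + \nabla(\bu\cdot\bff^i)\big)\circ dW^i_t$, in which the gradient piece is absorbed into a modified pressure by the incompressibility constraint.

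Second, I would read off the remaining term $\bu\times\mathrm{curl}\,\bff^i\circ dW^i_t$ and compare it directly to the deterministic Craik--Leibovich vortex force $\bu\times\mathrm{curl}\,\bu^S\,dt$ appearing in \eqref{SCL-eqns}. Under the semimartingale interpretation $\rmd\bu^S = \bu^S(\bx)\,dt + \sum_i \bff^i(\bx)\circ dW^i_t$ discussed after \eqref{SCL-eqns}, one recognises $\sum_i \bu\times\mathrm{curl}\,\bff^i\circ dW^i_t$ as precisely the stochastic part of the vortex force $\bu\times\mathrm{curl}\,(\rmd\bu^S)$ associated with a stochastic augmentation of the Stokes drift. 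This is the sense in which \eqref{SAPS-LP-eq} ``contains'' stochastic vortex forces.

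The argument should be essentially a computation rather than an obstruction, so I do not expect a hard step; the only point requiring care is bookkeeping of the density factor and the pressure-like exact 1-forms when identifying $\ad^*_u f_i$ in vector form, together with clean handling of the quotient by exact forms so that the gradient terms are properly reinterpreted as pressure. Once these are in place, the identification of $\bu\times\mathrm{curl}\,\bff^i\circ dW^i_t$ as a stochastic vortex force follows immediately, and the proposition is established by reference to the Craik--Leibovich analogy carried out in \cite{CL1976,HolmCL1996}.
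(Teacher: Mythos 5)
Your proposal is correct and follows essentially the same route as the paper's proof: both express $\ad^*_u f_i$ in vector-calculus form as a sum of the vortex-force term $\bu\times{\rm curl}\,\bff^i$, a gradient absorbed into the pressure, and a term proportional to ${\rm div}\,\bu$, and then identify the result with the stochastic Craik--Leibovich vortex force. The only cosmetic difference is that you restrict to divergence-free vector fields from the outset, whereas the paper retains the $\bs{f}_i\,{\rm div}\,\bu$ term and remarks that it vanishes for incompressible flow.
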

\begin{proof}
    The stochastic term in \eqref{SAPS-LP-eq} can be expressed as
	\begin{align}
	\sum_i \ad^*_u f_i \circ dW^i_t 
	= \sum_i \Big(\bs{u}\times {\rm curl} \bs{f}_i - \nabla (\bs{u}\cdot \bs{f}_i)
	 -\bs{f}_i \div{\,\bu}\Big)\cdot d\bs{x} 
	\circ dW^i_t \,,
	\label{Vortex-force}
	\end{align}
	where $\bs{u}$ and $\bs{f}_i$ are the coefficients of the vector field $u$ and 1-form densities $f_i$. i.e. $u = \bs{u}\cdot \frac{\partial}{\partial \bs{x}}$ and $f_i= \bs{f}_i\cdot d\bs{x} \,d^3x$. Thus, a stochastic version of the CL vortex force $\sum_i \bs{u}\times {\rm curl} \bs{f}_i \circ dW^i_t $ with a stochastic contribution to the pressure $\sum_i  \GRAD(\bs{u}\cdot\bs{f}_i)\circ dW^i_t$ emerges in the energy-preserving form of the RLDP equations in \eqref{SAPS-LP-eq}, cf. \cite{CL1976}. Note that the term $\bs{f}_i\div \bs{u}$ vanishes for incompressible flow.
\end{proof}
For a specific choice of Hamiltonian, the forces presented in example \ref{Prop: Energy-conserved} may not be the only forces that conserve energy. However, the $\ad_u f_i^*$ form of the noise is geometric, so the energy preserving property does apply to all fluid systems that have a variational formulation. Thus, it is possible to construct a class of energy preserving stochastic systems using the geometric formulation. As discussed further in section \ref{sec: SDP SFLT-erg}, the noise terms in this class of energy preserving stochastic systems appear the form of a `frozen' (constant coefficient) Lie-Poisson bracket, whose properties are discussed, e.g., in Appendix B of \cite{HMRW1985}.
\begin{example}[Casimir-preserving SFLP equations]\label{Prop: Casimir-conserved}
	Given a Casimir function $C(m)$, the Casimir preserving forces satisfy $F_i = \ad^*_{\frac{\delta C}{\delta m}}f_i$ where $f_i$ are arbitrary. Then 
	\begin{align*}
	\rmd C = \SCP{\rmd m}{\frac{\delta C}{\delta m}} = \SCP{-\ad^*_u m\,dt - \sum_i \ad^*_{\frac{\delta C}{\delta m}} f_i \circ dW^i_t}{{\frac{\delta C}{\delta m}}} = 0,
	\end{align*}
	\emph{provided} both $\SCP{\ad^*_u m}{{\frac{\delta C}{\delta m}}} = 0$ and $\SCP{\ad^*_{\frac{\delta C}{\delta m}} f_i}{{\frac{\delta C}{\delta m}}} = 0$, because of the degeneracy of the LP bracket and the anti-symmetry of the commutator, respectively. One concludes that the choice of external forces $F_i = \ad^*_\frac{\delta C}{\delta m} f_i$ cannot in general preserve all of the Casimirs seen in the unperturbed LP equation, unlike the case with SALT, where \emph{all of the Casimirs} are preserved. Note that the energy are no longer conserved, since
	\begin{align*}
	    \SCP{\rmd m}{\frac{\delta h}{\delta m}} = \SCP{-\ad^*_{\frac{\delta h}{\delta m}} m\,dt - \sum_i \ad^*_{\frac{\delta C}{\delta m}} f_i \circ dW^i_t}{\frac{\delta h}{\delta m}} = \SCP{- \sum_i \ad^*_{\frac{\delta C}{\delta m}} f_i \circ dW^i_t}{\frac{\delta h}{\delta m}}\,,
	\end{align*}
	does not vanish trivially.
\end{example}\bigskip

\begin{proposition}[Helicity preservation for a stochastic 3D Euler fluid]\label{Prop: Helicity-conserved}$\,$
	
	The deterministic 3D Euler fluid equations preserve the helicity,
	\[
	C(m) = \langle m,\bd m\rangle=\frac12\int \bs{m}\cdot {\rm curl}\bs{m}\,d^3x\,,
	\] 
	where $m=\bs{m}\cdot d \bs{x}$ is the circulation 1-form.
	The SFLT model will preserve the helicity $C(m)$ when the constraint force is taken to be 
	\[
	F_i = \ad^*_{\frac{\delta C}{\delta m}}f_i \in\mathfrak{X}^*
	\]
	where $\frac{\delta C}{\delta m}= \bd m$ is the vorticity 2-form, $\bd m=({\rm curl}\bs{m})\cdot d\bS$ and . 
\end{proposition}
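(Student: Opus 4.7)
The plan is to realize this proposition as a direct specialization of the Casimir-preserving SFLP construction in the preceding example, since helicity $C(m)=\tfrac{1}{2}\int\bs{m}\cdot{\rm curl}\,\bs{m}\,d^3x$ is the classical Casimir of the 3D Euler Lie-Poisson bracket on $\mathfrak{X}^*_{\rm div}(\mathcal{D})$. Two geometric ingredients must be checked: that $\delta C/\delta m = \bd m$ (the vorticity), and that $\ad^*_{\bd m}m = 0$ (the Casimir identity). Both are classical, but they are the content that specializes the general template to this case.

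First I would verify the variational derivative by computing, with $\delta m = \delta\bs{m}\cdot d\bs{x}$,
\begin{align*}
\delta C = \frac{1}{2}\int\bigl(\delta\bs{m}\cdot{\rm curl}\,\bs{m} + \bs{m}\cdot{\rm curl}\,\delta\bs{m}\bigr)\,d^3x = \int\delta\bs{m}\cdot{\rm curl}\,\bs{m}\,d^3x,
\end{align*}
the last equality coming from an integration by parts whose boundary terms vanish under periodic or no-penetration conditions on $\p\mathcal{D}$. This identifies $\delta C/\delta m$ with the vorticity vector field ${\rm curl}\,\bs{m}\cdot\p/\p\bs{x}$, whose associated 2-form density is $\bd m = ({\rm curl}\,\bs{m})\cdot d\bS$. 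The Casimir identity $\ad^*_{\bd m}m = 0$ is then the standard statement that transport of the circulation 1-form along its own vorticity field is exact and hence trivial as an element of $\mathfrak{X}^*_{\rm div}$; equivalently, $\{C,H\}=0$ for every smooth Hamiltonian~$H$.

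With these two ingredients in place, I would substitute the prescribed force $F_i = \ad^*_{\bd m}f_i$ into the SFLP equation \eqref{SFLP-general-noise} and apply the Stratonovich chain rule, mimicking the computation in the Casimir-preserving example:
\begin{align*}
\rmd C = \SCP{\rmd m}{\bd m} = -\,\SCP{\ad^*_u m}{\bd m}\,dt \;-\; \sum_i \SCP{\ad^*_{\bd m}f_i}{\bd m}\circ dW_t^i.
\end{align*}
The drift term vanishes by the Casimir identity after moving the coadjoint across the pairing, $\SCP{\ad^*_u m}{\bd m} = \SCP{m}{[u,\bd m]} = -\SCP{\ad^*_{\bd m}m}{u} = 0$, and the stochastic sum vanishes by antisymmetry of the commutator, $\SCP{\ad^*_{\bd m}f_i}{\bd m} = \SCP{f_i}{[\bd m,\bd m]} = 0$. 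The only nontrivial obstacle is bookkeeping: checking that the pairings and the integration by parts leading to the Casimir identity remain valid on the prevailing function space and boundary conditions. Once that is established, helicity preservation is immediate from the Casimir-preserving template.
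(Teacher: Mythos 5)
Your proposal is correct and follows essentially the same route as the paper: both reduce the claim to the Casimir-preserving template of Example \ref{Prop: Casimir-conserved} and then verify the one nontrivial condition $\SCP{\ad^*_u m}{\delta C/\delta m}=0$, the paper by pairing $\ad^*_u m$ directly with $\curl\bs{m}$ and integrating by parts against $\div\curl\bs{m}=0$, you by moving the coadjoint across the pairing and noting that $\ad^*_{\bd m}m$ is exact and hence annihilated by the divergence-free $u$ --- the same integration by parts seen from the other side. Your explicit check that $\delta C/\delta m=\curl\bs{m}$ is a harmless addition that the paper simply asserts.
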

\begin{proof}
	In vector calculus terms, one computes
	\[
	\SCP{\ad^*_u m}{{\frac{\delta C}{\delta m}}}
	=
	\Scp{- \bs{u}\times {\rm curl}\bs{m} + \nabla (\bs{u\cdot m}) }
	{ {\rm curl}\bs{m}}
	= 
	\oint_{\p\mathcal{D}}(\bs{u\cdot m}) \,{\rm curl}\bs{m}\cdot \bs{\widehat{n}} dS
	= 0\,,
	\]
	which is obtained after an integration by parts in the second summand and taking the boundary term to vanish as usual for the helicity.
\end{proof}

\begin{example}[Proto-SALT SFLP equation]
	Let us choose $F_i = \ad^*_{f_i(x)} m$ in equation \eqref{SFLP-general-noise}, where $f_i \in \mathfrak{X}$ are arbitrary functions taking values in the Lie algebra of smooth vector fields $\mathfrak{X}$. Then, the SFLP equation \eqref{SFLP-general-noise} recovers an expression similar to the momentum terms in the SALT fluid motion equation \eqref{eq:SLP eq SALT}; namely,
	\begin{align}
	{\rmd} m + {\ad}^*_{\rmd x_t }m  = 0\,, \quad \text{where } u = \frac{\delta h}{\delta m}
	\quad\hbox{and}\quad
	\rmd x_t = u \,dt + \sum_i f_i \circ dW^i_t
	\,.
	\label{LDP-Ham-SALT-Euler}
	\end{align}
	This is the SALT fluid motion equation \eqref{eq:SLP eq SALT} in the absence of advected quantities.
\end{example}\bigskip


\subsection{Energy preserving SFLT for semidirect-product motion and material entrainment} \label{sec: SDP SFLT-erg}
We next extend the SFLT approach to case of coadjoint motion of fluids carrying advected quantities such as mass and heat which are associated with potential energy.\,\footnote{Advected quantities are also known as order parameters in condensed matter physics.} The motion of advected quantities arises from the semidirect-product action of the diffeomorphisms on the vector spaces containing the advected quantities. The vector spaces containing the advected quantities comprise coset spaces obtained from symmetry-breaking of the full diffeomorphism group to the remaining isotropy subgroup of the initial conditions of the advected quantities. 

Keeping in sight the objective of the paper to derive a class of energy-preserving stochastic models of fluid dynamics, the same energy-preserving noise as the Euler fluids in Example \ref{Prop: Energy-conserved} and its vortex force \ref{Cor: VortexForce} will be chosen in the construction that follows. This extension enables the derivation of stochastic vortex forces which model the uncertainty of unresolved slow-fast interaction effects as energy-preserving stochastic perturbations of fluid models which possess a potential energy. In addition to vortex forces, the introduction of stochasticity in the passive advection relations allows the modelling of quantities that do not passively follow the drift velocity field. Physical applications of this type of stochastic modification are discussed in Remark \ref{remark material entrainment}
For symmetry-reduced semidirect-product motion, considering the energy preserving noise of the form $F_i = \ad^*_u f_i$, the stochastic RLDP principle in \eqref{SRLDP action on G} becomes
\begin{align}
0=\delta S = \delta \int^b_a \ell(u,a)\,dt + \SCP{m}{\rmd g\, g^{-1} - u\,dt} + \SCP{ \rmd b}{a_0g^{-1} - a} - \int_a^b \sum_i\SCP{\ad^*_u f_i}{\delta g\,g^{-1}}\circ dW^i_t,
\label{LDPaction w a}
\end{align}
where the variations $\delta g, \delta u, \delta m, \delta \rmd b, \delta a$ are arbitrary with $\delta g$ vanishing at the endpoints, $t=a,b$. 
Taking the indicated variations in \eqref{LDPaction w a} yields the following result,
\begin{align*}
0&=\int_a^b \SCP{\frac{\delta \ell}{\delta u}}{\delta u}\,dt + \SCP{\frac{\delta \ell}{\delta a}}{\delta a}\,dt + \SCP{\delta m}{\rmd g\,g^{-1} - u\,dt} + \SCP{ \rmd b}{a_0\delta g^{-1}- \delta a} + \SCP{\delta  \rmd b}{a_0 g^{-1}-a}\\
& \quad + \SCP{m}{\delta(\rmd g\, g^{-1}) -  u\,dt} - \sum_i\SCP{\ad^*_u f_i}{\delta g\,g^{-1}}\circ dW_t^i \\
& = \int_a^b \SCP{\frac{\delta \ell}{\delta u}-m}{\delta u}\,dt + \SCP{\frac{\delta \ell}{\delta a}}{\delta a}\,dt + \SCP{\delta m}{\rmd g\,g^{-1} - u\,dt} + \SCP{\rmd  b}{-a\eta- \delta a} + \SCP{\delta  \rmd b}{a_0 g^{-1}-a}\\
& \quad + \SCP{m}{\rmd \eta - \ad_{\rmd g\,g^{-1}}\eta } - \sum_i\SCP{\ad^*_u f_i}{\eta}\circ dW_t^i ,
\end{align*}
where we denote $\eta = \delta g\,g^{-1}$, as before. Vanishing of the coefficients of the variations implies the following relations,
\begin{align}
\begin{split}
&\frac{\delta \ell}{\delta u} = m, \quad \rmd g\,g^{-1} = u\,dt, \quad  \rmd b = \frac{\delta \ell}{\delta a}\,dt \,, \\
&\rmd m = - \ad^*_{\rmd g\,g^{-1}}m +  \rmd b \diamond a - \sum_i\ad^*_u f_i \circ dW^i_t, \quad \rmd a = -\mathsterling_{\rmd g\,g^{-1}}a\,.
\end{split} \label{eq:SFEP without entrainment}
\end{align}
Assembling these relations yields the SFEP equations with advected quantities
\begin{align}
\rmd \frac{\delta \ell}{\delta u} = - \ad^*_{u}\frac{\delta \ell}{\delta u}\,dt - \sum_i \ad^*_u f_i \circ dW^i_t + \frac{\delta \ell}{\delta a} \diamond a \,dt, \quad \rmd a = -\mathsterling_{u} a \,dt\,.
\end{align}
Now, an application of the Legendre transform $h(m,a) = \scp{m}{u} - \ell(u,a)$, followed by calculations similar to those made in the deterministic case arrives at the SFLP equations with advected quantities
\begin{align}
\rmd m +  \ad^*_{\frac{\delta h}{\delta m}}m\,dt + \sum_i \ad^*_{\frac{\delta h}{\delta m}} f_i \circ dW^i_t + \frac{\delta h}{\delta a} \diamond a \,dt = 0, \quad \rmd a = -\mathsterling_{u} a \,dt\,.
\label{eq:SAPS force m only}
\end{align}
These equations can also be written in the form of a Poisson operator, as
\begin{align}
\rmd \begin{bmatrix}m \\ a \end{bmatrix} = -
\begin{bmatrix}
\ad^*_{\fbox{}} \left(m\,dt + \sum_i f_i \circ dW^i_t\right) & {\fbox{}}\diamond a\,dt \\
\mathsterling_{\fbox{}}\,a\,dt & 0 
\end{bmatrix}
\begin{bmatrix}
{\delta h}/{\delta m} \\ {\delta h}/{\delta a} 
\end{bmatrix}\,.
\end{align}
\begin{remark}[Alternative formulation for a reduced Hamilton-d'Alembert principle with advected quantities]
	The SFLP equation with advected quantities \eqref{eq:SAPS force m only} can also  be derived from the following reduced Hamilton-d'Alembert phase space principle with advected quantities, 
	\begin{align*}
	0=\delta S = \delta \int^b_a \SCP{m}{\rmd g\, g^{-1}} + \SCP{\rmd b}{a_0g^{-1} - a} - h(m,a)\,dt - \int^b_a \sum_i \SCP{\ad^*_{\frac{\delta h}{\delta m}}f_i}{\delta g\,g^{-1}}\circ dW^i_t.
	\end{align*}
	The proof of this statement is a direct calculation following the same pattern as for the EP derivation.
\end{remark}
\begin{theorem}[SFLT Kelvin-Noether theorem]\label{thm:SFLT KN}
    The Kelvin-Noether quantity $\SCP{\mathcal{K}(g(t)c_0, a(t))}{m(t)}$ associated with equation \eqref{eq:SAPS force m only} satisfies the following stochastic Kelvin-Noether relation.
    \begin{align}
    \rmd \SCP{\mathcal{K}(g(t)c_0, a(t))}{m(t)} = \SCP{\mathcal{K}(g(t)c_0, a(t))}{- \frac{\delta h}{\delta a}\diamond a\,dt - \sum_i \ad^*_u f^m_i \circ dW^i_t},
    \end{align}
    where the identification $\rmd g\,g^{-1} = \frac{\delta h}{\delta m}\,dt$ is obtained from the stochastic RLDP  constrained variational principle in \eqref{LDPaction w a}.    
\end{theorem}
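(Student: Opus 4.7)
The plan is to follow the classical Holm--Marsden--Ratiu derivation of the Kelvin--Noether theorem from equivariance of the map $\mathcal{K}$, and carry it through in Stratonovich form so that the standard Leibniz rule applies. A key simplifying observation from \eqref{eq:SFEP without entrainment} is that the flow is purely deterministic in SFLT ($\rmd g\,g^{-1} = u\,dt$), even though the momentum equation carries noise; this means the transport of the loop $c(t) = g(t)c_0$ and the advected quantity $a(t)$ is deterministic, and only the pairing with $m(t)$ will contribute stochastic increments.

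First, I would recall that $\mathcal{K}: C \times V^* \to \mathfrak{X}^{**}$ is required to be equivariant, so that $\mathcal{K}(g(t)c_0, a(t)) = \Ad_{g(t)} \mathcal{K}(c_0, a_0)$. Since the flow is deterministic, the standard Lie-group chain rule for the adjoint action gives
\begin{align*}
\rmd \mathcal{K}(g(t)c_0,a(t)) = \ad_u \mathcal{K}(g(t)c_0,a(t)) \,dt\,.
\end{align*}
Next, I would apply the Stratonovich product rule (valid because Stratonovich calculus obeys the ordinary Leibniz rule) to the pairing:
\begin{align*}
\rmd \SCP{\mathcal{K}(g(t)c_0, a(t))}{m(t)}
= \SCP{\ad_u \mathcal{K}(g(t)c_0,a(t))}{m(t)}\,dt + \SCP{\mathcal{K}(g(t)c_0,a(t))}{\rmd m(t)}\,.
\end{align*}
Transposing via the duality $\SCP{\ad_u \mathcal{K}}{m} = \SCP{\mathcal{K}}{\ad^*_u m}$ then collects both terms into $\SCP{\mathcal{K}}{\rmd m + \ad^*_u m\,dt}$.

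The final step is to substitute the SFLP equation with advected quantities \eqref{eq:SAPS force m only}, which reads $\rmd m + \ad^*_u m\,dt = -\frac{\delta h}{\delta a}\diamond a\,dt - \sum_i \ad^*_u f_i \circ dW^i_t$, yielding the claimed identity (with $f^m_i$ playing the role of $f_i$). The identification $\rmd g\,g^{-1} = \frac{\delta h}{\delta m}\,dt = u\,dt$ is read off directly from the stationarity conditions of the stochastic RLDP principle \eqref{LDPaction w a}, which justifies the use of $u$ in the Lie-transport formula above.

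The main conceptual obstacle is ensuring the equivariant Lie-chain-rule identity $\rmd \mathcal{K} = \ad_u \mathcal{K}\,dt$ is legitimate in the infinite-dimensional diffeomorphism setting and carries through under Stratonovich differentials paired against $m(t)$. Because the flow $g(t)$ itself carries no noise, this reduces to the deterministic transport identity already established in the HMR framework, and the Stratonovich treatment enters only through the pairing with the semimartingale $m(t)$. No It\^o correction term appears, precisely because the noise in \eqref{eq:SAPS force m only} is written in Stratonovich form and the loop and advected quantity are deterministically transported; this is the structural reason the Kelvin--Noether identity extends so cleanly from the deterministic setting to SFLT.
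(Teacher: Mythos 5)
Your proof is correct and is exactly the standard Holm--Marsden--Ratiu equivariance argument that the paper relies on: Theorem \ref{thm:SFLT KN} is stated without a written proof, the intended argument being the same pull-back/transport computation used for the SALT Kelvin--Noether theorem in Appendix \ref{app: SALT}, and your key observation --- that $\rmd g\,g^{-1}=u\,dt$ and $\rmd a = -\mathcal{L}_u a\,dt$ are of finite variation, so $\mathcal{K}(g(t)c_0,a(t))=\Ad_{g(t)}\mathcal{K}(c_0,a_0)$ evolves by the deterministic Lie chain rule and only the pairing with the semimartingale $m(t)$ contributes stochastic increments, with no It\^o correction --- is precisely the point. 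Substituting \eqref{eq:SAPS force m only} into $\SCP{\mathcal{K}}{\rmd m+\ad^*_u m\,dt}$ then yields the stated identity.
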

\begin{remark}\label{thm:SFLT KC}
    Upon assuming that the fluid density $D$ is also advected by the flow, so that $\partial_t D + \mathcal{L}_u D = 0$, the Kelvin circulation theorem may be expressed as
    \begin{align}
    \rmd \oint_{c(u)} \frac{m}{D} = \oint_{c(u)} \frac{1}{D}\left[ - \frac{\delta h}{\delta a}\diamond a\,dt - \sum_i \ad^*_u f^m_i \circ dW^i_t\right], 
    \end{align}
    where the notation of Lagrangian loop $c(u)$ denotes the Lagrangian loop moving with the deterministic fluid velocity $u$, as in the deterministic case. 
\end{remark}
\paragraph{Natural generalisation of SFLT} 
Note that the SFLP equations with advected quantities in \eqref{eq:SAPS force m only} modifies the momentum equation alone, while keeping the advection equation of $a\in V^*$ unchanged. In terms of co-adjoint motion, the natural generalisation of \eqref{SAPS-LP-eq} to semidirect-product Lie group action would be
\begin{align}
\rmd(m,a) = - \ad^*_{\left(\frac{\delta h}{\delta m},\frac{\delta h}{\delta a}\right)}(m,a)\,dt - \sum_i \ad^*_{\left(\frac{\delta h}{\delta m},\frac{\delta h}{\delta a}\right)}(f^m_i,f^a_i)\circ dW_t^i\,,
\label{eq:SFLP semidirect}
\end{align}
where $(m,a), (f^m_i, f^a_i) \in \mathfrak{s}^*$, $\left(\frac{\delta h}{\delta m},\frac{\delta h}{\delta a}\right)\in\mathfrak{s}$ and $\ad^*$ is the coadjoint operator on $\mathfrak{s}^*$. The maps $f^m_i:\mathfrak{X}\times V^* \rightarrow \mathfrak{X}^*$ and $f^a_i:\mathfrak{X}\times V^* \rightarrow V^*$ are arbitrary for all $i$. Using definition of semidirect product coadjoint action discussed in appendix \ref{app: semidirect ad^*}, the individual equations are then
\begin{align}
    \begin{split}
    &\rmd m + \ad^*_{\frac{\delta h}{\delta m}} m\,dt + \frac{\delta h}{\delta a}\diamond a\,dt + \sum_i \ad^*_{\frac{\delta h}{\delta m}} f^m_i\circ dW^i_t + \frac{\delta h}{\delta a}\diamond f^a_i\circ dW^i_t = 0\,, \\
    & \rmd a + \mathcal{L}_{\frac{\delta h}{\delta m}} a\,dt + \sum_i \mathcal{L}_{\frac{\delta h}{\delta m}} f^a_i\circ dW_t^i = 0\,.
    \end{split} 
    \label{ExplicitSDPergEqns}
\end{align}
These individual equations can be written in the form of a Poisson operator, as follows,
\begin{align}
\rmd \begin{bmatrix}m \\ a \end{bmatrix} = -
\begin{bmatrix}
\ad^*_{\fbox{}} \left(m\,dt + \sum_i f^m_i \circ dW^i_t\right) & \fbox{}\diamond \left(a\,dt + \sum_i f^a_i\circ dW_t^i\right) \\
\mathsterling_{\fbox{}}\left(a\,dt + \sum_i f^a_i\circ dW_t^i\right) & 0 
\end{bmatrix}
\begin{bmatrix}
{\delta h}/{\delta m} \\ {\delta h}/{\delta a} 
\end{bmatrix}
\,.\label{eq:SPO new}
\end{align}
Energy is conserved since the Poisson operator is skew-symmetric. Thus, 
\begin{align*}
\rmd h(m,a) &= \SCP{\rmd (m,a)}{\left(\frac{\delta h}{\delta m}, \frac{\delta h}{\delta a}\right)} 
\\&= \SCP{\ad^*_{\left(\frac{\delta h}{\delta m},\frac{\delta h}{\delta a}\right)}(m,a)\,dt - \sum_i \ad^*_{\left(\frac{\delta h}{\delta m},\frac{\delta h}{\delta a}\right)}(f^m_i,f^a_i)\circ dW_t^i}{\left(\frac{\delta h}{\delta m}, \frac{\delta h}{\delta a}\right)} = 0,
\end{align*}
where the last equality uses the anti-symmetry of the $\ad$ operator. The class of SFLP equations in \eqref{eq:SFLP semidirect} can be obtained via a phase-space variational principle, as we discuss next. 

\paragraph{Variational principle for semidirect product SFLP equation}
The SFLP equation \eqref{eq:SFLP semidirect} can be derived from a reduced Hamilton-d'Alembert phase space variational principle in terms of the full semidirect product group $S = G\circledS V$ with the associated semidirect-product Lie algebra $\mathfrak{s}=\mathfrak{X}\circledS V$. This phase-space variational principle reads
\begin{align}
\delta \int_{t_1}^{t_2} \SCP{(m,a)}{(\rmd u, \rmd b)}_\mathfrak{s} - h(m,a) \,dt - \sum_i \int_{t_1}^{t_2}\SCP{\ad^*_{(\frac{\delta h}{\delta m}, \frac{\delta h}{\delta a})}(f^m_i, f^a_i)}{(\eta, w)}_\mathfrak{s} \circ dW^i_t = 0\,, \label{eq: SDP varprinc}
\end{align}
for arbitrary variations of $\delta (m,a)$ of $(m,a) \in \mathfrak{s}^*$ and constrained variations $\delta (\rmd u,\rmd b)$ of $(\rmd u, \rmd b)\in \mathfrak{s}$. The constrained variation of $(\rmd u, \rmd b)$ takes the form 
\begin{align*}
    \delta \rmd u = \rmd \eta + \qv{\rmd u}{\eta}
    \quad\hbox{and}\quad
    \delta \rmd b = \rmd  w - \rmd b\,\eta + w \rmd u\,,
\end{align*}
in which $\eta \in \mathfrak{X}$ and $w\in V$ are arbitrary. 
Here the notation $\SCP{\cdot}{\cdot}_\mathfrak{s}$ is the semidirect product pairing where 
\begin{align*}
    \SCP{(m,a)}{(u,b)}_\mathfrak{s} := \SCP{m}{u}_\mathfrak{X} + \SCP{a}{b}_V.
\end{align*}
In the following we will continue to suppress the pairing subscript when the context is clear. Note that the constrained variations of $\rmd u$ and $\rmd b$ are related to the adjoint action of $\mathfrak{s}$ by
\begin{align*}
    \delta (\rmd u, \rmd b) = \left(\rmd \eta + \qv{\rmd u}{\eta}, \rmd  w - \rmd b\,\eta + w \rmd u\right) =  \rmd (\eta, w) - \ad_{(\rmd u, \rmd b)}(\eta ,w).
\end{align*}
Computing the variations and applying the constrained variations yields
\begin{align*}
    0 = &\int_{t_2}^{t_1} \SCP{\rmd u - \frac{\delta h}{\delta m}\,dt}{\delta m} + \SCP{\rmd b - \frac{\delta h}{\delta a}\,dt}{\delta a} + \SCP{m}{\rmd \eta + \qv{\rmd u}{\eta}} + \SCP{a}{\rmd w - \rmd b\eta + w\,\rmd u} \\
    &\quad - \SCP{\ad^*_{\frac{\delta h}{\delta m}}f^m_i + \frac{\delta h}{\delta a}\diamond f^a_i}{\eta} \circ dW^i_t - \SCP{\mathcal{L}_{\frac{\delta h}{\delta m}}f^a_i}{w} \circ dW^i_t\\
    = &\int_{t_2}^{t_1} \SCP{\rmd u - \frac{\delta h}{\delta m}\,dt}{\delta m} + \SCP{\rmd b - \frac{\delta h}{\delta a}\,dt }{\delta a} + \SCP{-\rmd  m - \ad^*_{\rmd u} m}{\eta} + \SCP{-\rmd a - a\, \rmd u}{w} + \SCP{a\diamond \rmd b}{\eta} \\
    &\quad - \SCP{\ad^*_{\frac{\delta h}{\delta m}}f^m_i + \frac{\delta h}{\delta a}\diamond f^a_i}{\eta} \circ dW^i_t - \SCP{\mathcal{L}_{\frac{\delta h}{\delta m}}f^a_i}{w} \circ dW^i_t\,.
\end{align*}
Consequently, one may collect terms to find the following system of motion and advection equations
\begin{align}
    \begin{split}
    &\rmd m + \ad^*_{\rmd u} m + \rmd  b\,\diamond a +  \ad^*_{\frac{\delta h}{\delta m}} f^m_i\circ dW^i_t + \frac{\delta h}{\delta a}\diamond f^a_i\circ dW^i_t = 0\,, \\
    & \rmd u = \frac{\delta h}{\delta m}\,dt,\quad \rmd b =\frac{\delta h}{\delta a}\,dt, \quad \rmd a + \mathcal{L}_{\rmd u} a + \mathcal{L}_{\frac{\delta h}{\delta m}} f^a_i\circ dW_t^i = 0\,.
    \end{split} 
\end{align}

\begin{remark}[Reduced Hamilton-d'Alembert Pontryagin phase space variation principle]
    The choice of variation principle \eqref{eq: SDP varprinc} is not the Hamiltonian version of the RLDP variation principle used previous sections. It is, however, equivalent to the reduced Hamilton-d'Alembert Pontryagin phase space variation principle 
    \begin{align}
        \delta \int_{t_1}^{t_2} \SCP{(m,a)}{\rmd (g,v)\,(g,v)^{-1}}_\mathfrak{s} - h(m,a) \,dt - \sum_i \int_{t_1}^{t_2}\SCP{\ad^*_{(\frac{\delta h}{\delta m}, \frac{\delta h}{\delta a})}(f^m_i, f^a_i)}{\delta(g,v)\,(g,v)^{-1}}_\mathfrak{s} \circ dW^i_t = 0\,, 
    \end{align}
    where the variations $\delta (g,v) \in S$ and $\delta (m,a)\in \mathfrak{s}$ are arbitrary.
\end{remark}

\begin{proposition}[It\^o form of semidirect product SFLP equation]
    The It\^o form of \eqref{ExplicitSDPergEqns} are
    \begin{align}
    \begin{split}
    &\rmd m + \ad^*_{\frac{\delta h}{\delta m}}\left(m\,dt + \sum_i f^m_i\, dW_t^i \right)+ \frac{\delta h}{\delta a} \diamond \left(a\, dt + \sum_i f^a_i\, dW_t^i\right) + \frac{1}{2}\sum_i\left(\ad^*_{\sigma_i}f^m_i + \theta_i\diamond f^a_i\right)\,dt= 0, \\
    &\rmd a + \mathsterling_{\frac{\delta h}{\delta m}} \left(a\,dt + \sum_i f^a_i\, dW_t^i\right) + \frac{1}{2}\sum_i\mathcal{L}_{\sigma_i}f^a_i\,dt= 0,
    \end{split}
    \end{align}
    where one defines
    \begin{align*}
    \sigma_i := \qp{\frac{\delta^2 h}{\delta m^2}}{-\ad^*_{\frac{\delta h}{\delta m}}f^m_i - \frac{\delta h}{\delta a}\diamond f^a_i}
    \quad \hbox{and}\quad
    \theta_i := \qp{\frac{\delta^2 h}{\delta a^2}}{-\mathcal{L}_{\frac{\delta h}{\delta m}}f^a_i}.
    \end{align*}
    Again the notation $\qp{\cdot}{\cdot}$ denotes contraction, not $L^2$ pairing.    
\end{proposition}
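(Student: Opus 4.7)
The plan is to carry out the standard Stratonovich-to-It\^o conversion on the coupled system \eqref{ExplicitSDPergEqns}. Recall that for a semimartingale of the form $\rmd Y = A(Y)\,dt + \sum_i B_i(Y)\circ dW^i_t$, the equivalent It\^o form is $\rmd Y = A(Y)\,dt + \sum_i B_i(Y)\,dW^i_t + \tfrac{1}{2}\sum_i DB_i(Y)[B_i(Y)]\,dt$, where $DB_i(Y)[Z]$ denotes the Fr\'echet derivative of $B_i$ at $Y$ in the direction $Z$. I would apply this identity to the pair $Y=(m,a)$, using the chain rule to track how the diffusion coefficients depend on $(m,a)$ only through $u := \delta h/\delta m$ and $b := \delta h/\delta a$.

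First, I would read off the Stratonovich diffusion vector fields from \eqref{ExplicitSDPergEqns} as
\begin{align*}
B^m_i(m,a) = -\ad^*_{u} f^m_i - b \diamond f^a_i,
\qquad
B^a_i(m,a) = -\mathcal{L}_{u} f^a_i.
\end{align*}
Since the spatial profiles $f^m_i$ and $f^a_i$ are fixed and the operators $\ad^*_{(\cdot)}$, $\mathcal{L}_{(\cdot)}$, and $(\cdot)\diamond(\cdot)$ are linear in their subscript or first slot, the only object that has to be differentiated is the dependence of $u$ and $b$ on $(m,a)$. A variation of $m$ in the direction $B^m_i$ changes $u$ by $\qp{\frac{\delta^2 h}{\delta m^2}}{B^m_i} = \sigma_i$, and a variation of $a$ in the direction $B^a_i$ changes $b$ by $\qp{\frac{\delta^2 h}{\delta a^2}}{B^a_i} = \theta_i$. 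Substituting these into $DB^m_i[(B^m_i, B^a_i)]$ and $DB^a_i[(B^m_i, B^a_i)]$ gives the corrections $-\ad^*_{\sigma_i} f^m_i - \theta_i\diamond f^a_i$ and $-\mathcal{L}_{\sigma_i} f^a_i$ respectively. Moving the $\tfrac{1}{2}$-weighted sums to the left-hand side of the equations produces the claimed It\^o form.

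The main obstacle is the bookkeeping associated with the chain rule when $(m,a)$ evolves as a coupled semimartingale. In particular, one should verify that the mixed second variation $\delta^2 h/(\delta m\,\delta a)$ does not contribute to $\sigma_i$ or $\theta_i$ as defined in the statement; this holds tacitly under the assumption that the Hamiltonian decomposes so that $\delta h/\delta m$ depends on $m$ alone and $\delta h/\delta a$ on $a$ alone, which is the situation for the kinetic-plus-potential Hamiltonians treated in the examples of section \ref{sec: Examples}. A secondary point is to justify interchanging the Fr\'echet derivative with $\ad^*_{(\cdot)}$ and $\mathcal{L}_{(\cdot)}$; this is immediate from their linearity in the subscript argument, but should be stated for completeness. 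Once these points are settled, the proof reduces to the term-by-term substitution outlined above.
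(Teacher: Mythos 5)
Your proposal is correct and follows essentially the same route as the paper's proof in Appendix \ref{app: Ito form}: both reduce the It\^o correction to the stochastic differential of $\delta h/\delta m$ and $\delta h/\delta a$, which by linearity of $\ad^*_{(\cdot)}$, $\mathcal{L}_{(\cdot)}$ and $\diamond$ in the velocity slot yields exactly $\sigma_i$ and $\theta_i$ --- the paper merely phrases the computation in weak form against test elements $\phi\in\mathfrak{X}$, $\psi\in V$ and evaluates the quadratic covariation brackets directly rather than invoking the Fr\'echet-derivative form of the Stratonovich-to-It\^o conversion. Your observation that the mixed second variation $\delta^2 h/(\delta m\,\delta a)$ must not contribute for the stated $\sigma_i$, $\theta_i$ to be the complete correction is a hypothesis the paper also leaves tacit, so flagging it is a minor improvement rather than a deviation.
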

The proof is similar to the case without advected quantities are it is given in appendix \ref{app: Ito form}.
\begin{remark}
The external forces $\ad_{(\frac{\delta h}{\delta m}, \frac{\delta h}{\delta a})}(f^m_i, f^a_i)$ introduced in \eqref{eq: SDP varprinc} are energy preserving only. For a general sets of forces $F^m_i$ and $F^a_i$, the semidirect product SFLP equation will be derived from the variational principle 
\begin{align*}
    \delta \int_{t_1}^{t_2} \SCP{(m,a)}{(\rmd u, \rmd b)} - h(m,a) \,dt - \sum_i \int_{t_1}^{t_2}\SCP{(F^m_i, F^a_i)}{(\eta, w} \circ dW^i_t = 0\,,
\end{align*}
where the variations have the same condition as principle \eqref{eq: SDP varprinc}.
The resulting equations are the following
\begin{align*}
    &\rmd m + \ad^*_{\frac{\delta h}{\delta m}} m\,dt + \frac{\delta h}{\delta a}\diamond a\,dt + \sum_i F^m_i \circ dW^i_t = 0\,, \quad 
    \rmd a + \mathcal{L}_{\frac{\delta h}{\delta m}} a\,dt + \sum_i F^a_i\circ dW_t^i = 0\,.
\end{align*}
\end{remark}

\begin{remark}[Stochastic material entrainment modelled in equation \eqref{ExplicitSDPergEqns}]
	When stochastic processes $f^a_i$ are included in the Lie-derivative action on the fluid variables, $a(t)$, then one can no longer say that $a(t)$ is passively advected by the flow $g \in G$, i.e., $a(t) \neq a_0 g^{-1}(t)$. Consider the case that the variable $a(t) $ represents the mass density of the fluid. Then, the class of stochastic equations in \eqref{ExplicitSDPergEqns} or \eqref{eq:SPO new} could model a fluid containing parcels whose density does not quite passively follow the drift velocity flow. Examples of such deviations from passive transport might include inertial fluid parcels whose density has a certain probability of being heavier or lighter than the ambient (or average, or expected) density. The motion of these inertial parcels would then be uncertain, as modelled by a stochastic variation in their density, relative to parcels undergoing passive advection by the flow. We include this feature of equation \eqref{ExplicitSDPergEqns} because it may provide a useful single-fluid approach to dealing with stochastic entertainment of material particles into fluid flows such as Langmuir circulations. This stochastic model of material entrainment into fluid flows introduces probabilistic aspects into the theory, rather than dealing with the intricacies of multiphase flow models. The applications of this feature may include, for example, ice slurries in the Arctic Ocean, or dust clouds, or debris in tornadoes, or fluid flows with gas bubbles, or well-mixed oil spills, or transport of algae, or plastic detritus in the ocean. In the case where $f^a_i = 0$ for all $i$, then $a(t) $ would be passively advected and would satisfy the standard relation $a(t) = a_0 g^{-1}(t)$. For a recent review of deterministic LES turbulent models of this type of mixed-buoyancy fluid transport, see \cite{CCM2019}. \label{remark material entrainment}
\end{remark}
    
\begin{remark}[SFLT Kelvin-Noether theorem]
    Note that the inclusion of forcing terms $f^a_i$ on advected quantities $a$ implies $a(t) \neq a_0g^{-1}$. This means one cannot take the pull back the Kelvin-Noether quantity $\SCP{\mathcal{K}(g(t)c_0, a(t))}{m(t)}$ by $g(t)$ to the Kelvin-Noether quantity defined by the initial conditions $c_0$ and $a_0$, i.e. $\SCP{\mathcal{K}(g(t)c_0, a(t))}{m(t)}\neq \SCP{\mathcal{K}(c_0, a_0)}{\Ad^*_g m(t)}$. Hence there is no simple modification of the Kelvin-Noether theorem associated with \eqref{ExplicitSDPergEqns}. For fluids, the map $\mathcal{K}$ is the circulation integral around a material loop $c$ which is independent of the advected quantities $a$. Thus a Kelvin circulation theorem exists as formulated below.
\end{remark}

\begin{theorem}[SFLT Kelvin circulation theorem with entrainment]
    Upon assuming that the fluid density $D$ is also advected by the flow, so that $\partial_t D + \mathcal{L}_u D = 0$, the Kelvin circulation theorem associated with \eqref{ExplicitSDPergEqns} may be expressed as
    \begin{align}
    \rmd \oint_{c(u)} \frac{m}{D} = \oint_{c(u)} \frac{1}{D}\left[- \sum_i \ad^*_u f^m_i \circ dW^i_t - \frac{\delta h}{\delta a}\diamond \left(a\,dt + \sum_i f^a_i\circ dW^i_t\right)\right], 
    \label{Kelvin-entrain}
    \end{align}
    where the notation of Lagrangian loop $c(u)$ denotes the Lagrangian loop moving with the deterministic fluid velocity $u$, as in the deterministic case. 
\end{theorem}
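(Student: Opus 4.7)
The plan is to reduce the loop computation to a direct application of the stochastic Lie chain rule for transport along the deterministic drift velocity $u=\delta h/\delta m$. For any 1-form $\alpha$ evolving as a Stratonovich semimartingale, a Lagrangian loop $c(u)$ advected by $u$ satisfies
\[
\rmd \oint_{c(u)} \alpha \;=\; \oint_{c(u)}\bigl(\rmd \alpha + \mathcal{L}_u \alpha\, dt\bigr).
\]
Since $m \in \mathfrak{X}^*$ is a 1-form density and $D$ is a scalar density advected by $u$ with no noise, the ratio $m/D$ is an honest 1-form, which is the appropriate integrand for a circulation around a material loop. Applying the transport identity with $\alpha = m/D$ reduces the task to computing $\rmd(m/D)$ and simplifying.

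Next I would compute $\rmd(m/D)$ using the Stratonovich product rule, which preserves the ordinary Leibniz form:
\[
\rmd(m/D) \;=\; \frac{\rmd m}{D} - \frac{m}{D^2}\,\rmd D.
\]
Substituting the first equation of \eqref{ExplicitSDPergEqns} for $\rmd m$, together with $\rmd D = -\mathcal{L}_u D\, dt$, produces a drift part $-\tfrac{1}{D}\ad^*_u m\, dt + \tfrac{m}{D^2}\mathcal{L}_u D\, dt$, the advected-quantity drift $-\tfrac{1}{D}\tfrac{\delta h}{\delta a}\diamond a\, dt$, and the stochastic contributions $-\tfrac{1}{D}\sum_i\bigl(\ad^*_u f^m_i + \tfrac{\delta h}{\delta a}\diamond f^a_i\bigr)\circ dW^i_t$.

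The crucial simplification is the Leibniz rule for Lie derivatives, which combined with the identity $\ad^*_u m = \mathcal{L}_u m$ on 1-form densities yields
\[
\mathcal{L}_u(m/D) \;=\; \frac{\mathcal{L}_u m}{D} - \frac{m}{D^2}\mathcal{L}_u D \;=\; \frac{\ad^*_u m}{D} - \frac{m\,\mathcal{L}_u D}{D^2}.
\]
Thus the two $\ad^*_u$-related drift contributions in $\rmd(m/D)$ assemble into $-\mathcal{L}_u(m/D)\, dt$, which cancels exactly against the $+\mathcal{L}_u(m/D)\, dt$ produced by the transport formula. What survives is precisely the $\delta h/\delta a$ drift term and the two stochastic forcing terms, which regroup into the right-hand side of \eqref{Kelvin-entrain}.

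The main obstacle I anticipate is justifying the loop transport identity in the present stochastic setting, where the loop $c(u)$ is advanced by a purely deterministic drift while the integrand itself carries Stratonovich noise. This works precisely because the hypothesis $\partial_t D + \mathcal{L}_u D = 0$ is purely deterministic, so $m/D$ is pointwise in $\omega$ a well-defined 1-form, and the Stratonovich calculus permits the ordinary transport argument to be applied $\omega$-by-$\omega$ on a deterministic loop. This is the structural feature that distinguishes SFLT Kelvin theorems from their SALT counterparts: the noise in \eqref{Kelvin-entrain} enters as a direct stochastic forcing on the circulation rather than as stochastic Lie transport of the loop itself.
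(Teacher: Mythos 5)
Your proposal is correct and follows essentially the same route the paper uses: the paper states this theorem without an explicit proof, but its proof of the analogous EA~SFLT Kelvin circulation theorem proceeds exactly as you do, writing $m=D\alpha$, using the product rule with the purely deterministic $\rmd D=-\mathcal{L}_u D\,dt$, identifying $\ad^*_u m=\mathcal{L}_u m$ on 1-form densities, and applying the loop transport identity $\rmd\oint_{c(u)}\alpha=\oint_{c(u)}(\rmd+\mathcal{L}_{u\,dt})\alpha$ so that the $\mathcal{L}_u(m/D)$ terms cancel. Your closing remark about why the deterministic loop suffices here (in contrast to SALT) is a correct and worthwhile observation, consistent with the paper's framework.
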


\begin{remark}\label{rem: entrain}
	Note that the presence of the stochastic material entrainment terms $f^i_a$ in equation \eqref{Kelvin-entrain} may have a significant effect as a source of circulation of the fluid flow. 
\end{remark}

\begin{remark}[An unified variational approach]\label{remark: Unified approach}
	By introducing stochastic Hamiltonians into the variational principle \eqref{eq: SDP varprinc}, one can formulate a stochastic RLDP principle which encompasses both SALT and SFLT. The augmented stochastic RLDP principle becomes
	\begin{align}
	\begin{split}
            &\delta \int_{t_1}^{t_2} \SCP{(m,a)}{(\rmd u, \rmd b)} - h(m,a) \,dt - \sum_i h_i(m,a)\circ dW^i_t \\
            &\qquad \qquad - \sum_i \int_{t_1}^{t_2}\SCP{(F^m_i, F^a_i)}{(\eta, w)} \circ dW^i_t = 0\,.
	\end{split}
	\label{LDP+SALT action}
	\end{align}
	Here variations of $\delta (m,a)$ of $(m,a) \in \mathfrak{s}^*$ are arbitrary and variations $\delta (g,v)$ of $(g, v)\in S$ are given by
	\begin{align*}
        \delta \rmd u = \rmd \eta + \qv{\rmd u}{\eta}
        \quad\hbox{and}\quad
        \delta \rmd b = \rmd  w - \rmd b\,\eta + w \rmd u\,,
    \end{align*}
	where $(\eta, w) \in \mathfrak{s}$ are arbitrary are vanishes at the boundaries.
	The set of Hamiltonians $h_i$ generates SALT type noise and the set of forces $(F^m_i, F^a_i)$ generates SFLT type of noise. 
\end{remark}

\section{Eulerian Averaged SFLT}\label{sec: EA SFLT}
Ed Lorenz captured the essence of the climate science problem in his celebrated unpublished paper \cite{Lorenz1995} in which he motivated his discussion by invoking the old adage that 
\begin{quote}
``Climate is what you expect. Weather is what you get." 
\end{quote}
Lorenz's lesson was that climate science is fundamentally probabilistic. Much later, this lesson inspired the derivation of the LA SALT fluid model, which also exploited an idea introduced in \cite{DrivasHolm2019} to apply Lagrangian-averaging (LA) in \emph{probability space} to the fluid equations governed by stochastic advection by Lie transport (SALT) which were introduced in \cite{Holm2015}. The general theory of LA SALT and applications to 2D Euler-Boussinesq equations can be found in \cite{DHL2020} and \cite{AdeLHT2020} respectively. 

Here, we apply the probabilistic approach to derive the corresponding Eulerian-averaged SFLT model (EA SFLT) by decomposing the Eulerian solutions of the energy-preserving SFLT model into the sums of their expectations and their fluctuations. As for the energy-preserving SFLT models, the EA SFLT equations admit a Kelvin circulation theorem and preserve the deterministic energy. For systems resulting from quadratic Hamiltonians, this modification of the SFLT model allows the dynamics of the statistical properties of the solutions of EA-SFLT such as the evolution of the expectation of energy to be considered explicitly. As for the LA-SALT models, the EA-SFLT models can be viewed as the interaction of the expected quantities and the fluctuations in a conservative system. In this system, the energy of the expected quantities is dynamically converted into the energy of fluctuations whilst keeping the total energy invariant as shown in Theorem \ref{Thm; Ebal4EA-advec}. The Kelvin circulation theorem for EA-SFLT differs from that of SFLT by the presence of an additional forcing. This feature makes the EA-SFLT approach particularly apt for the examples of quadratic fluid Hamiltonians which are discussed in section \ref{sec: Examples}.
\subsection{EA SFLT}
In Eulerian Averaged SFLT, the expectation of the \emph{Eulerian} quantity $m$ from equation \eqref{SAPS-LP-eq} is the taken over the underlying probability space, and the Eulerian Averaged Stochastic Forced Lie Poisson (EA SFLP) equation is proposed, as
\begin{align}
    \rmd m + \ad^*_u \E{m}\,dt + \sum_i \ad^*_u f^i \circ dW^i_t = 0\,, \quad u = \frac{\delta h}{\delta m}.
    \label{eq:EA SFLP}
\end{align}
This modification still preserves the deterministic energy of the original SFLP equation since
\begin{align}
    \rmd h(m) = \SCP{\rmd m}{u} = -\SCP{\ad^*_u \E{m}\,dt + \sum_i \ad^*_u f^i\circ dW^i_t}{u} = 0, 
\end{align}
by anti-symmetry of of the $\ad$ operation. The evolution of the expectation $\mathbb{E}[m]$ can be determined by considering the It\^o form of the equation \eqref{eq:EA SFLP} 
\begin{align}
    \rmd m + \ad^*_u \E{m}\,dt + \sum_i \ad^*_u f^i\, dW^i_t + \frac{1}{2}\sum_i\ad^*_{\sigma_i} f^i \,dt= 0, 
    \label{eq:EA SFLP Ito}
\end{align}
where $\sigma_i = \qp{\frac{\delta^2 h}{\delta m^2}}{-\ad^*_{\frac{\delta h}{\delta m}}f_i}$. The proof of the It\^o form is similar to the case of the SFLP equation in appendix \ref{app: Ito form}. Taking the expectation of \eqref{eq:EA SFLP Ito} yields the following partial differential equation (PDE) for the expectation of the momentum density, $\E{m}$, 
\begin{align}
    \partial_t \E{m} + \ad^*_{\E{u}} \E{m}+ \frac{1}{2} \sum_i \ad^*_{\E{\sigma_i}}f^i = 0 \,.
    \label{eq:EA SFLP expectation}
\end{align}
The expectation of \eqref{eq:EA SFLP Ito} has produced a deterministic PDE because $\ad^*$ is a linear operation and the expectation of $dW^i_t$ vanishes by It\^o's Lemma. The PDE \eqref{eq:EA SFLP expectation} closes, whenever the Hamiltonian is quadratic in $m$, i.e. $h = \frac12\SCP{m}{\mathbb{I}^{-1}m}$ where $I:\mathfrak{g}\rightarrow \mathfrak{g}^*$ is a constant invertible symmetric operator which commutes with taking the expectation. Furthermore, $\mathbb{I}$ is assumed to be positive definite. In this case, one finds the vector field
\begin{align*}
    \E{\sigma_i} = \qp{-\ad^*_{\E{u}}f^i}{\frac{\delta^2 h}{\delta m^2}} = -\mathbb{I}^{-1} \ad^*_{\E{u}}f^i\,.
\end{align*}
Assuming quadratic Hamiltonian $h$, one can computes also the Hamiltonian function of the expectation of $m$, $\E{m}$ to have 
\begin{align}
    \partial_t h(\E{m}) = \SCP{\E{u}}{\partial_t \E{m}} = -\frac{1}{2}\SCP{\mathbb{I}^{-1} \ad^*_{\E{u}}f^i}{\ad^*_{\E{u}}f^i} < 0\,, \label{eq:EA SFLT h(E(m))}
\end{align}
Define fluctuations $u' = u - \E{u}$, $m' = m - \E{m}$ and $\sigma_i' = \sigma_i - \E{\sigma_i}$ and obtain the fluctuation dynamics of $m$ by subtracting \eqref{eq:EA SFLP expectation} from \eqref{eq:EA SFLP Ito} to obtain
\begin{align}
    \rmd m' + \ad^*_{u'} \E{m}\,dt + \sum_i \ad^*_u f^i\, dW^i_t + \frac{1}{2}\sum_i\ad^*_{\sigma_i'} f^i \,dt= 0 \,.
    \label{eq: fluctdyn}
\end{align}
The time derivative of the Hamiltonian of the fluctuation of $m$, $h(m')$ is found as
\begin{align}
    \begin{split}
        \rmd h(m') &= \SCP{\frac{\delta h}{\delta m'}}{\rmd m'} + \frac{1}{2}\rmd \SCP{m'}{\frac{\delta h}{\delta m'}}_t\\
        &= \SCP{\frac{\delta h}{\delta m'}}{-\ad^*_{u'}\E{m}\,dt - \sum_k \ad^*_u f^k\,dW^k_t - \frac{1}{2}\ad^*_{\sigma_k'}f^k\, dt} + \frac{1}{2}\rmd \SCP{m'}{\frac{\delta h}{\delta m'}}_t\,dt
    \end{split}
\end{align}
This expression simplifies dramatically when the Hamiltonian is quadratic, where 
\begin{align*}
    \frac{\delta h}{\delta m'} = \mathbb{I}^{-1} m' = u', \quad \sigma_i' = -\mathbb{I}^{-1}\ad^*_{u'}f^i.
\end{align*}
A dynamical expression can be calculated as
\begin{align}
    \begin{split}
        \rmd h(m') &= \SCP{u'}{-\ad^*_{u'}\E{m}\,dt - \sum_k \ad^*_u f^k\,dW^k_t - \frac{1}{2}\ad^*_{\sigma_k'}f^k\, dt} + \frac{1}{2}\sum_k\SCP{\ad^*_u f^k}{\mathbb{I}^{-1}\ad^*_u f^k}\,dt\\
        &= \sum_k\SCP{- \ad^*_u f^k\,dW^k_t - \frac{1}{2}\ad^*_{\sigma_k'}f^k\, dt}{u'} + \frac{1}{2}\SCP{\ad^*_u f^k}{\mathbb{I}^{-1}\ad^*_u f^k}\,dt\\
        &= \sum_k\SCP{u'}{-\ad^*_u f^k}\,dW^k_t +\SCP{\sigma_k'}{\frac{1}{2}\ad^*_{u'}f^k}\,dt + \frac{1}{2}\SCP{\ad^*_u f^k}{\mathbb{I}^{-1}\ad^*_u f^k}\,dt\\
        &= \sum_k\SCP{u'}{-\ad^*_u f^k}\,dW^k_t + \SCP{-\mathbb{I}^{-1} \ad^*_uf^k + \mathbb{I}^{-1} \ad^*_{\E{u}} f^k}{\frac{1}{2}\ad^*_{u'}f^k}\,dt + \frac{1}{2}\SCP{\ad^*_u f^k}{\mathbb{I}^{-1}\ad^*_u f^k}\,dt\\
        & = \sum_k\SCP{u'}{-\ad^*_u f^k}\,dW^k_t + \frac{1}{2}\SCP{\mathbb{I}^{-1} \ad^*_u f^k }{\ad^*_{\E{u}}f^k}\,dt - \frac{1}{2}\SCP{\mathbb{I}^{-1} \ad^*_{\E{u}} f^k }{\ad^*_{u}f^k}\,dt \\
        & \qquad \qquad \qquad + \frac{1}{2}\SCP{\mathbb{I}^{-1} \ad^*_{\E{u}} f^k }{\ad^*_{\E{u}}f^k}\,dt\,.
    \end{split}
\end{align}
Taking the expectation yields 
\begin{align}
    \partial_t \mathbb{E}\left[h(m')\right] = \frac{1}{2}\SCP{\mathbb{I}^{-1}\ad^*_{\E{u}}f^k}{\ad^*_{\E{u}}f^k}\,.
\end{align}
The above calculation with \eqref{eq:EA SFLT h(E(m))} have proved the following.
\begin{theorem}[Energy balance for EA SFLT]
The sum of energies $h(\E{m}) + \E{h(m')}$ is preserved by the dynamics of EA SFLT in \eqref{eq:EA SFLP}, 
\begin{align}
    \partial_t h(\E{m}) + \partial_t \E{h(m')} = 0\,.
\end{align}
\end{theorem}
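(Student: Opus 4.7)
The plan is to assemble the two evolution identities already set up in the preceding derivation and observe that their sum collapses to zero. Concretely, I would first isolate $\partial_t h(\E{m})$ in a single expression, then separately compute $\partial_t \E{h(m')}$, and verify that the two are negatives of each other.

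First I would rederive $\partial_t h(\E{m})$ cleanly: since $h$ is quadratic with symmetric, positive-definite $\mathbb{I}$, the chain rule gives $\partial_t h(\E{m}) = \SCP{\E{u}}{\partial_t \E{m}}$ with $\E{u} = \mathbb{I}^{-1}\E{m}$. Substituting the PDE \eqref{eq:EA SFLP expectation}, the coadjoint term $\ad^*_{\E{u}}\E{m}$ pairs with $\E{u}$ to zero by antisymmetry of $\ad$, and the remaining contribution uses $\E{\sigma_i} = -\mathbb{I}^{-1}\ad^*_{\E{u}}f^i$, producing the negative-definite expression \eqref{eq:EA SFLT h(E(m))}.

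Next, for the fluctuation energy $\E{h(m')}$, I would start from the It\^o-form fluctuation dynamics \eqref{eq: fluctdyn}. Applying It\^o's formula to the quadratic functional $h(m') = \tfrac12 \SCP{m'}{\mathbb{I}^{-1}m'}$ yields
\begin{align*}
    \rmd h(m') = \SCP{u'}{\rmd m'} + \tfrac12 \sum_k \SCP{\ad^*_u f^k}{\mathbb{I}^{-1}\ad^*_u f^k}\,dt,
\end{align*}
where the quadratic-variation term comes from the martingale part of $\rmd m'$. Taking expectations kills the $dW^k_t$ pieces and leaves the deterministic cross-terms between $u'$ and the It\^o correction $\tfrac12 \ad^*_{\sigma'_k}f^k$. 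Using $\sigma'_k = -\mathbb{I}^{-1}\ad^*_{u'}f^k$, symmetry of $\mathbb{I}^{-1}$, and antisymmetry of $\ad$, these cross-terms split into a piece quadratic in $\E{u}$ and mixed pieces involving $u'$ that vanish under expectation (since $\E{u'}=0$); what survives is exactly $+\tfrac12 \SCP{\mathbb{I}^{-1}\ad^*_{\E{u}}f^k}{\ad^*_{\E{u}}f^k}$.

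Finally, summing the two expressions yields the claimed identity. The main obstacle is the bookkeeping in the fluctuation computation: one has to split $u = \E{u} + u'$ inside each pairing, identify which bilinears survive the expectation and which vanish because $\E{u'}=0$, and repeatedly invoke symmetry of $\mathbb{I}^{-1}$ together with antisymmetry of $\ad$ so that every term except the $\E{u}$-quadratic one cancels, leaving precisely the negative of $\partial_t h(\E{m})$.
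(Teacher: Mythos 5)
Your proposal is correct and follows essentially the same route as the paper: the theorem is established by the direct calculation of $\partial_t h(\E{m})$ from the expectation PDE and of $\partial_t \E{h(m')}$ via It\^o's formula applied to the fluctuation dynamics, with the quadratic-variation term and the $\sigma'_k$ cross-terms combining under expectation (using $\E{u'}=0$, symmetry of $\mathbb{I}^{-1}$ and antisymmetry of $\ad$) to give exactly the negative of $\partial_t h(\E{m})$. The paper also notes a one-line alternative --- pathwise conservation of $h(m)$ plus the quadratic identity $\E{h(m)} = h(\E{m}) + \E{h(m')}$ --- but its primary argument is the same bookkeeping you outline.
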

\begin{proof}
The previous calculation demonstrates this theorem by direct calculation. Having done so, an 
alternative proof suggests itself, for quadratic Hamiltonians, one has  $h(m) = h(
\E{m}+m') = const \Longrightarrow \E{h(m)} = h(\E{m}) + \E{h(m')}$.
\end{proof}

\subsection{EA SFLT with advected quantities}
It is straight forward to extend the EA SFLT framework to SFLT systems with advected quantities. Starting with the energy conserving SFLP equation with advected quantities \eqref{ExplicitSDPergEqns}, one can take the average of $m$ and $a$ in the underlying probability space to have the EA SFLP equation with advected quantities
\begin{align}
\begin{split}
    &\rmd m + \ad^*_u \E{m}\,dt + b\diamond \E{a}\,dt + \sum_i \left(\ad^*_u f^i\circ dW^i_t + b\diamond f^a_i\circ dW^i_t\right) = 0\\
    &\rmd a + \mathcal{L}_u \E{a}\,dt + \sum_i \mathcal{L}_u f^a_i \circ dW^i_t = 0\,, \quad (u,b) = \left(\frac{\delta h}{\delta m}, \frac{\delta h}{\delta a}\right)\,,
\end{split} \label{eq:EA SFLP adv}
\end{align}
which can be written more succintly by using the $\ad^*$ action on the semidirect product Lie algebra, more specifically, 
\begin{align}
    \rmd (m,a) + \ad^*_{(u,b)}\mathbb{E}[(m,a)] \,dt + \sum_i \ad^*_{(u,b)}(f^i, f^a_i)\circ dW^i_t = 0\,.
\end{align}
Energy preservation is inherited from the SFLP equation with advected quantities,  
\begin{align}
    \rmd h(m,a) = \SCP{\rmd (m,a)}{\left(\frac{\delta h}{\delta m}, \frac{\delta h}{\delta a}\right)} = -\SCP{ \ad^*_{(u,b)}\mathbb{E}[(m,a)] \,dt + \sum_i \ad^*_{(u,b)}(f^i, f^a_i)\circ dW^i_t}{(u,b)} = 0\,,
\end{align}
where the last equality uses the anti-symmetry of $\ad^*$ of the semidirect product Lie algebra. The It\^o form of \eqref{eq:EA SFLP adv} is
\begin{align}
    \begin{split}
    &\rmd m + \ad^*_u \E{m}\,dt + b\diamond \E{a}\,dt + \sum_i \left(\ad^*_u f^i\, dW^i_t + b\diamond f^a_i\, dW^i_t + \frac{1}{2}\left(\ad^*_{\sigma_i}f^i + \theta_i\diamond f^a_i\right)\,dt \right) = 0, \\
    &\rmd a + \mathcal{L}_u \E{a}\,dt + \sum_i \left(\mathcal{L}_u f^a_i\, dW^i_t + \frac{1}{2}\mathcal{L}_{\sigma_i}f^a_i\,dt\right) = 0,
    \end{split} \label{eq:EA SFLP adv Ito}
\end{align}
where one defines
\begin{align*}
    \sigma_i := \qp{\frac{\delta^2 h}{\delta m^2}}{-\ad^*_{\frac{\delta h}{\delta m}}f^i - \frac{\delta h}{\delta a}\diamond f^a_i}
    \quad \hbox{and}\quad
    \theta_i := \qp{\frac{\delta^2 h}{\delta a}}{-\mathcal{L}_{\frac{\delta h}{\delta m}}f^a_i}.
\end{align*}
The proof is similar to the case of SFLP equation with advected quantities which is included in appendix \ref{app: Ito form}. Taking expectation of \eqref{eq:EA SFLP adv Ito} to have temporal evolution of the expectation of $m$ and $a$. These equations are deterministic because of the linearity of $\ad^*$ and $\diamond$ operators, as well as the $dW^i_t$ terms vanishing by It\^o's Lemma.
\begin{align}
    \begin{split}
        &\partial_t \E{m} + \ad^*_{\E{u}} \E{m} + \E{b}\diamond \E{a} + \frac{1}{2}\sum_i \left(\ad^*_{\E{\sigma_i}}f^m_i + \E{\theta_i}\diamond f^a_i\right) = 0\\
        &\partial_t \E{a} + \mathcal{L}_{\E{u}} \E{a} + \frac{1}{2}\sum_i \mathcal{L}_{\E{\sigma_i}}f^a_i = 0.
    \end{split}\label{eq: EA SFLT E(m) E(a) evo}
\end{align}
Equations \eqref{eq: EA SFLT E(m) E(a) evo} closes when the Hamiltonian is assumed to be quadratic in $m$ and $a$, i.e., $\frac{\delta^2 h}{\delta m^2} = \mathbb{I}^{-1}$ and $\frac{\delta^2 h}{\delta a^2} = \mathbb{J}^{-1}$, where the $\mathbb{I}:\mathfrak{g}\rightarrow \mathfrak{g}^*$ and $\mathbb{J}:V \rightarrow V^*$ are the inertia tensors for the vector fields and advected quantities respectively, which are also assumed to be positive definite. 
\begin{align*}
    \frac{\delta h}{\delta m'} = \mathbb{I}^{-1}m' = u'\,, \quad \frac{\delta h}{\delta a'} = \mathbb{J}^{-1}a' = b'
\end{align*}
In this case one have the vector fields
\begin{align}
\begin{split}
    &\E{\sigma_i} = \qp{-\frac{\delta^2 h}{\delta m^2}}{\ad^*_{\E{u}}f^i + \E{b}\diamond f^a_i} = -\mathbb{I}^{-1}\left(\ad^*_{\E{u}}f^i + \E{b}\diamond f^a_i\right)\\
    &\E{\theta_i} = \qp{-\frac{\delta^2 h}{\delta a^2}}{\mathcal{L}_{\E{u}}f^a_i} = - \mathbb{J}^{-1} \left(\mathcal{L}_{\E{u}}f^a_i\right)\,,
\end{split}
\end{align}
and the equations \eqref{eq: EA SFLT E(m) E(a) evo} closes. Using the evolution of the expectations $\E{m}$ and $\E{a}$, we have the following time derivative of a quadratic Hamiltonian of $\E{m}$ and $\E{a}$
\begin{align}
\begin{split}
    \partial_t h(\E{m}, \E{a}) &= -\sum_i \frac{1}{2}\left(\SCP{\E{u}}{\ad^*_{\E{\sigma_i}}f^m_i + \E{\theta_i}\diamond f^a_i} + \SCP{\E{b}}{\mathcal{L}_{\E{\sigma_i}}f^a_i}\right)\\
    & = -\sum_i \frac{1}{2}\left(\SCP{\mathbb{I}^{-1}\left(\ad^*_{\E{u}}f^i + \E{b}\diamond f^a_i \right)}{\left(\ad^*_{\E{u}}f^i + \E{b}\diamond f^a_i \right)} + \SCP{\mathbb{J}^{-1} \mathcal{L}_{\E{u}}f^a_i}{\mathcal{L}_{\E{u}}f^a_i}\right) < 0\,.
\end{split}\label{eq:EA SFLT H(E(m), E(a)) evo}
\end{align}
The fluctuation $m' = m - \E{m}$ and $a' = a - \E{a}$ are computed in It\^o form using \eqref{eq:EA SFLP adv Ito} and \eqref{eq: EA SFLT E(m) E(a) evo} as 
\begin{align}
\begin{split}
    &\rmd m' + \ad^*_{u'} \E{m}\,dt + b'\diamond \E{a}\,dt + \sum_i \left(\ad^*_u f^i\,dW^i_t + b\diamond f^a_i \, dW^i_t + \frac{1}{2}\left(\ad^*_{\sigma_i'} f^i + \theta_i' \diamond f^a_i\right)\, dt \right) = 0\\
    &\rmd a' + \mathcal{L}_{u'}\E{a}\,dt + \sum_i \left(\mathcal{L}_u f^a_i \,dW^i_t + \frac{1}{2} \mathcal{L}_{\sigma_i'}f^a_i \,dt \right) = 0\,,
\end{split}
\end{align}
where the notations $b'$, $\sigma_i'$ and $\theta_i'$ are defined as $b' = b - \E{b}$, $\sigma_i' = \sigma_i - \E{\sigma_i}$ and $\theta_i' = \theta_i - \E{\theta_i}$.
Then the stochastic time derivative of quadratic Hamiltonians of fluctuations can be computed in similar fashion as in the case without advected quantities. 
\begin{align}
\begin{split}
    \rmd h(m',a') &= \SCP{\rmd m'}{\frac{\delta h}{\delta m'}} + \SCP{\rmd a'}{\frac{\delta h}{\delta a'}} + \frac{1}{2}\rmd \SCP{m'}{\frac{\delta h}{\delta m'}}_t + \frac{1}{2}\rmd\SCP{a'}{\frac{\delta h}{\delta a'}}_t \\
    &= -\sum_i\SCP{\ad^*_u f^i + b\diamond f^a_i}{u'}\,dW^i_t + \SCP{\mathcal{L}_u f^a_i}{b'}dW^i_t \\
    & \qquad - \frac{1}{2}\SCP{\mathbb{I}^{-1}\left(\ad^*_{u'}f^i + b'\diamond f^a_i \right)}{\left(\ad^*_{u'}f^i + b'\diamond f^a_i \right)}\,dt - \frac{1}{2}\SCP{\mathbb{J}^{-1} \mathcal{L}_{u'}f^a_i}{\mathcal{L}_{u'}f^a_i}\,dt\\
    & \qquad + \frac{1}{2}\SCP{\mathbb{I}^{-1}\left(\ad^*_{u}f^i + b\diamond f^a_i \right)}{\left(\ad^*_{u}f^i + b\diamond f^a_i \right)}\,dt + \frac{1}{2}\SCP{\mathbb{J}^{-1} \mathcal{L}_{u}f^a_i}{\mathcal{L}_{u}f^a_i}\,dt
\end{split}
\end{align}
Taking expectation gives 
\begin{align}
    \partial_t \E{h(m',a')} = \sum_i \frac{1}{2}\left(\SCP{\mathbb{I}^{-1}\left(\ad^*_{\E{u}}f^i + \E{b}\diamond f^a_i \right)}{\left(\ad^*_{\E{u}}f^i + \E{b}\diamond f^a_i \right)} + \SCP{\mathbb{J}^{-1} \mathcal{L}_{\E{u}}f^a_i}{\mathcal{L}_{\E{u}}f^a_i}\right)\,.
\end{align}
The above calculations with \eqref{eq:EA SFLT H(E(m), E(a)) evo} proves the following theorem
\begin{theorem}[Energy balance for EA SFLT with advected quantities]\label{Thm; Ebal4EA-advec}
Equations \eqref{eq:EA SFLP adv} lead to the energy balance,
\begin{align}
    \partial_t h(\E{m}, \E{a}) + \partial_t \E{h(m', a')} = 0\,.
    \label{eq: TotalErgCons}
\end{align}

\end{theorem}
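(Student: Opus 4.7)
The plan is to give a conceptual proof that sidesteps the term-by-term cancellation already visible in the paragraphs preceding the theorem, and then note that the explicit computation confirms the result. My starting observation is that the EA SFLP equations with advected quantities \eqref{eq:EA SFLP adv} pathwise preserve the deterministic Hamiltonian $h(m,a)$: this was established in the energy-preservation computation for \eqref{eq:EA SFLP adv} using the antisymmetry of $\ad^*$ on the semidirect-product Lie algebra. Consequently $h(m,a)(t)$ is a (pathwise) constant, hence $\E{h(m,a)}$ is constant in $t$, so $\partial_t \E{h(m,a)} = 0$.

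The second step is a quadratic-decomposition identity. Write $m = \E{m} + m'$ and $a = \E{a} + a'$ with $\E{m'}=0$ and $\E{a'}=0$. Because $h$ is assumed quadratic with $h(m,a) = \tfrac12\SCP{m}{\mathbb{I}^{-1}m} + \tfrac12\SCP{a}{\mathbb{J}^{-1}a}$, the polarisation identity gives
\begin{align*}
h(m,a) = h(\E{m},\E{a}) + h(m',a') + \SCP{\E{m}}{\mathbb{I}^{-1}m'} + \SCP{\E{a}}{\mathbb{J}^{-1}a'}.
\end{align*}
Taking expectation kills the two cross terms since $\E{m}$, $\E{a}$ are deterministic and $\E{m'}=\E{a'}=0$, yielding the clean decomposition
\begin{align*}
\E{h(m,a)} = h(\E{m},\E{a}) + \E{h(m',a')}.
\end{align*}
Differentiating in $t$ and using $\partial_t \E{h(m,a)} = 0$ gives exactly \eqref{eq: TotalErgCons}.

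As a sanity check I would verify that the direct computation already carried out in the excerpt reproduces this. The expression \eqref{eq:EA SFLT H(E(m), E(a)) evo} for $\partial_t h(\E{m},\E{a})$ is the negative of the sum of two manifestly non-negative quadratic forms in $\ad^*_{\E{u}}f^i + \E{b}\diamond f^a_i$ and $\mathcal{L}_{\E{u}}f^a_i$, while the stochastic derivative $\rmd h(m',a')$ computed just above, after taking expectations, kills all $dW^i_t$ martingale increments and leaves the very same quadratic forms with the opposite sign. The two expressions thus cancel termwise, confirming the identity.

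The only subtle point in the argument is justifying the pathwise energy-conservation step rigorously: one must commute $\rmd$ with the pairing $\SCP{\cdot}{\cdot}$, which is legitimate for sufficiently regular $m,a$ but requires the Stratonovich product rule rather than the It\^o rule, and one must invoke the antisymmetry of $\ad^*$ under the semidirect pairing. This was done earlier in the paper at the level of \eqref{eq:EA SFLP adv}, so I expect no further obstacle beyond importing that calculation. The polarisation identity and the vanishing of cross terms are straightforward given that $\mathbb{I}^{-1}$ and $\mathbb{J}^{-1}$ are constant symmetric operators that commute with expectation.
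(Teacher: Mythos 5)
Your proposal is correct and follows essentially the same route as the paper: the paper's own proof consists of the preceding direct calculation plus exactly your conceptual argument, namely that pathwise constancy of the quadratic Hamiltonian $h(m,a)$ together with the decomposition $\E{h(m,a)} = h(\E{m},\E{a}) + \E{h(m',a')}$ (cross terms vanishing since $\E{m'}=\E{a'}=0$) yields \eqref{eq: TotalErgCons}. You merely invert the emphasis, leading with the polarisation argument and demoting the term-by-term computation to a sanity check, which is a legitimate presentation of the same proof.
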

\begin{proof}
The previous calculation demonstrates this theorem by direct calculation. Having done so, an 
alternative proof suggests itself for quadratic Hamiltonians, since in this case one has  $h(m,a) = h(
\E{m}+m', \E{a} + a') = const \Longrightarrow \E{h(m, a)} = h(\E{m}, \E{a}) + \E{h(m', a')}$.
\end{proof}
\begin{theorem}[EA SFLT Kelvin circulation theorem]
    Upon assuming that the fluid density $D$ is also advected by the flow, so that $\partial_t D + \mathcal{L}_u D = 0$, the Kelvin circulation theorem may be expressed as
    \begin{align}
    \rmd \oint_{c(u)} \frac{m}{D} = \oint_{c(u)} \mathcal{L}_u \left(\frac{m}{D}\right) \,dt + \frac{1}{D}\left[ \frac{m}{D}\mathcal{L}_u D\, dt - \ad^*_u \E{m}\,dt-b\diamond \E{a}\,dt -  \sum_i \left( \ad^*_u f^m_i \circ dW^i_t + b\diamond f^i_a\circ dW^i_t\right)\right], 
    \label{eq: EA SFLT kelvin}
    \end{align}
    where the notation of Lagrangian loop $c(u)$ denotes the Lagrangian loop moving with the deterministic fluid velocity $u$, as in the deterministic case.
\end{theorem}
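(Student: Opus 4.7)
The plan is to mimic the derivation of the SFLT Kelvin circulation theorem (Remark \ref{thm:SFLT KC} and the entrainment version that follows it), but with the crucial observation that in the EA SFLT setting the momentum equation contains $\ad^*_u\E{m}$ rather than $\ad^*_u m$, so the usual cancellation between $\mathcal{L}_u(m/D)\,dt$ and $\ad^*_u m\,dt/D$ no longer occurs and the difference must be kept explicitly.

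First, because the loop $c(u)$ moves with the deterministic drift velocity $u=\delta h/\delta m$, its motion contributes no stochastic terms and one has the standard Lie-transport identity for any 1-form $\alpha$,
\begin{align*}
\rmd \oint_{c(u)} \alpha \;=\; \oint_{c(u)}\Bigl[\,\rmd \alpha + \mathcal{L}_u\alpha\,dt\,\Bigr],
\end{align*}
which I will apply with $\alpha = m/D$. Here $m$ is a 1-form density and $D$ a density, so $m/D$ is a genuine 1-form and $\mathcal{L}_u(m/D)$ makes sense pointwise.

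Second, I would compute $\rmd(m/D)$ via the Leibniz rule (valid in Stratonovich form, since Stratonovich calculus obeys the ordinary chain rule). Using the deterministic continuity equation $\partial_t D + \mathcal{L}_u D=0$, i.e.\ $\rmd D = -\mathcal{L}_u D\,dt$, one gets
\begin{align*}
\rmd\!\left(\frac{m}{D}\right) \;=\; \frac{1}{D}\,\rmd m - \frac{m}{D^{2}}\,\rmd D \;=\; \frac{1}{D}\Bigl[\,\rmd m + \frac{m}{D}\mathcal{L}_u D\,dt\,\Bigr].
\end{align*}

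Third, I would substitute the EA SFLP momentum equation \eqref{eq:EA SFLP adv},
\begin{align*}
\rmd m \;=\; -\,\ad^*_u\E{m}\,dt \;-\; b\diamond\E{a}\,dt \;-\; \sum_i\bigl(\ad^*_u f^m_i + b\diamond f^a_i\bigr)\circ dW^i_t,
\end{align*}
into the previous line and then plug the result into the Lie-transport identity. Combining the pieces directly yields
\begin{align*}
\rmd\oint_{c(u)}\frac{m}{D} \;=\; \oint_{c(u)}\!\left\{\mathcal{L}_u\!\left(\frac{m}{D}\right)dt + \frac{1}{D}\Bigl[\tfrac{m}{D}\mathcal{L}_u D\,dt - \ad^*_u\E{m}\,dt - b\diamond\E{a}\,dt - \sum_i\bigl(\ad^*_u f^m_i + b\diamond f^a_i\bigr)\circ dW^i_t\Bigr]\right\},
\end{align*}
which is \eqref{eq: EA SFLT kelvin} (with the understanding that the single $\oint_{c(u)}$ symbol encompasses every term on the right-hand side).

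The main — really the only — subtle point is bookkeeping: in the pure SFLT case the term $\mathcal{L}_u m = \ad^*_u m$ (for 1-form densities) conveniently absorbs $\mathcal{L}_u(m/D)\,dt$ together with the $(m/D)\mathcal{L}_u D\,dt/D$ contribution, leaving the clean right-hand side of Remark \ref{thm:SFLT KC}. In the EA case, $\ad^*_u m$ is replaced by $\ad^*_u\E{m}$, so the fluctuation piece $\ad^*_u(m-\E{m})\,dt/D$ is not present to perform that cancellation; hence both $\mathcal{L}_u(m/D)\,dt$ and the apparently redundant $(m/D)\mathcal{L}_u D\,dt/D^{2}$ term persist in the final expression. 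Keeping track of which Lie derivative acts on a 1-form versus a 1-form density, and resisting the temptation to cancel them as in the local SFLT case, is the entire content of the argument; the rest is algebraic substitution.
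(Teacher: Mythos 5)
Your proposal is correct and follows essentially the same route as the paper's proof: the paper writes $m = D\alpha$, uses $\rmd D = -\mathcal{L}_u D\,dt$ to solve for $\rmd\alpha = \frac{1}{D}\rmd m + \frac{\alpha}{D}\mathcal{L}_u D\,dt$ (your Leibniz-rule step in different notation), substitutes the EA SFLP momentum equation, and inserts the result into the loop-transport identity $\rmd\oint_{c(u)}\alpha = \oint_{c(u)}(\rmd + \mathcal{L}_{u\,dt})\alpha$. Your additional commentary on why the $\mathcal{L}_u(m/D)$ and $(m/D)\mathcal{L}_u D$ terms survive (the replacement of $\ad^*_u m$ by $\ad^*_u\E{m}$ breaking the usual cancellation) is accurate and goes slightly beyond what the paper states explicitly.
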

\begin{proof}
    When $m\in \mathfrak{X}^*$, one can identify the $\ad^*$ action as the Lie derivative, i.e. $\ad^*_u m = \mathcal{L}_u m$ for all $u \in \mathfrak{u}$. Let $m = D\alpha$ where $\alpha \in \Lambda^1$, the stochastic time derivative of $m$ can be written as
    \begin{align}
        \rmd m &= \alpha \, \rmd D + D \,\rmd \alpha = -\alpha \mathcal{L}_u D\,dt + D \,\rmd \alpha, 
    \end{align}
    thus
    \begin{align}
        \rmd \alpha = \frac{\alpha}{D}\mathcal{L}_u D\,dt -\frac{1}{D} \left(\ad^*_u \E{m}\,dt + b\diamond \E{a}\,dt + \sum_i (\ad^*_u f^i\circ dW^i_t + b\diamond f^a_i\circ dW^i_t)\right)
    \end{align}
    Inserting the expression of $\rmd \alpha$ into the circulation integral to have the result.
    \begin{align}
    \begin{split}
        \rmd \oint_{c(u)}\alpha &= \oint_{c(u)}\left(\rmd + \mathcal{L}_{u\, dt}\right)\alpha \\
        &= \oint_{c(u)} \mathcal{L}_u \alpha \,dt + \frac{\alpha}{D}\mathcal{L}_u D\,dt -\frac{1}{D} \left(\ad^*_u \E{m}\,dt + b\diamond \E{a}\,dt + \sum_i (\ad^*_u f^i\circ dW^i_t + b\diamond f^a_i\circ dW^i_t)\right)
    \end{split}
    \end{align}
\end{proof}
\begin{remark}
This section has responded to Lorentz's lesson in \cite{Lorenz1995} that climate is essentially probabilistic. Accordingly, one might imagine that climate change science would be predicated on the dynamics of variances and higher moments of the fluctuations, which might even apply to the considerations introduced here. Investigations of the potential for applications of the EA SFLT formulation to climate science have been left for future work.
\end{remark}
\section{Examples of SFLT applications}\label{sec: Examples}
\subsection{Heavy Top}\label{subsec: HeavyTop}
The motion of a heavy top under gravity is a good first example application for SFLT, because it is a finite degree-of-freedom subsystem of Euler-Boussinesq fluid motion \cite{Holm1986}.
Following, e.g., \cite{Holm2011}, the configuration space of the heavy top can be taken as the semidirect-product Lie group of Euclidean motions by rotation and translation,  $S = {\rm SO}(3)\circledS \mathbb{R}^3$. The associated Lie algebra and its dual are,  respectively,
\[
\mathfrak{s} = \mathfrak{so}(3)\circledS \mathbb{R}^3 \cong \mathbb{R}^3\circledS \mathbb{R}^3
\quad\hbox{and}\quad 
\mathfrak{s}^* = \mathfrak{so}(3)^*\circledS \mathbb{R}^3 \cong \mathbb{R}^3\circledS \mathbb{R}^3
\,.\]
The natural pairing between $\mathbb{R}^3$ and its dual is the dot-product, for the pairing of Euclidean vectors in  $\mathbb{R}^3$.
The heavy top Hamiltonian $H(\mb{\Pi}, \mb{\Gamma})$ comprises the sum of its rotational kinetic energy and its gravitational potential energy,
\begin{align}
H(\mb{\Pi}) = \frac12 \mb{\Pi} \cdot \mathbb{I}^{-1}\mb{\Pi} + mg\mb{\Gamma}\cdot\mb{\chi}\,.
\label{HT-Ham}
\end{align}
The Hamiltonian for the heavy top is written in the body frame, in terms of the body angular momentum, $\mb{\Pi}$, and the vertical unit vector as seen from the body, $\mb{\Gamma}= \mb{\widehat{z}}O(t)^{-1}$, where $O(t)\in SO(3)$ is the time-dependent rotation from the reference configuration of the top to its current configuration.
Here, $\mathbb{I}$ is the moment of inertia in the body frame and $\mb{\chi}$ is the vector from the point of support to the centre of mass in the body frame. Finally, $m$ is the mass of the body and $g$ is the constant acceleration of gravity. \\
The variational derivatives of the heavy top Hamiltonian are 
\[
{d H}/{d \mb{\Pi}} = \mathbb{I}^{-1}\mb{\Pi} = \mb{\Omega}
\quad\hbox{and}\quad 
{d H}/{d \mb{\Gamma}} = mg\mb{\chi}
\,.\]
Upon using the `hat' map isomorphism  \cite{Holm2011} to identify the Lie derivative and $\ad^*$ operation as the cross product in the $\mathbb{R}^3$ representation, 
the deterministic LP equation is written as
\begin{align*}
\frac{d \mb{\Pi}}{d t} =  \mb{\Pi} \times \mb{\Omega} + \mb{\Gamma}\times mg\mb{\chi}, \quad \frac{d \mb{\Pi}}{d t} =  \mb{\Gamma} \times \mb{\Omega}\,,
\end{align*}
which can also be written in Poisson operator form
\begin{align*}
\frac{d}{dt} \begin{bmatrix} \mb{\Pi} \\ \mb{\Gamma} \end{bmatrix}
= 
\begin{bmatrix}
\mb{\Pi}\times & \mb{\Gamma}\times\\ \mb{\Gamma}\times & 0
\end{bmatrix}
\begin{bmatrix}
\delta H / \delta \mb{\Pi} 
\\
\delta H / \delta \mb{\Gamma}
\end{bmatrix}.
\end{align*}
Defining the force vectors $\mb{f^m_i}, \mb{f^a_i} \in \mathbb{R}^3$, the SFLP equations corresponding to \eqref{ExplicitSDPergEqns} may be written as 
\begin{align}
\begin{split}
\rmd \mb{\Pi} &= \left(\mb{\Pi}\,dt + \sum_i\mb{f^m_i} \circ dW^i_t\right)\times \mb{\Omega} +\left(\mb{\Gamma}\,dt + \sum_i\mb{f^a_i} \circ dW^i_t\right)\times mg\mb{\chi}
\,,\\
\rmd \mb{\Gamma} &= \left(\mb{\Gamma}\,dt + \sum_i\mb{f^a_i} \circ dW^i_t\right)\times \mb{\Omega} \,.   
\end{split}\label{eq:HT SFLP}
\end{align}
When written in Poisson bracket form, these equations become
\begin{align}
\rmd
\begin{bmatrix}
\mb{\Pi} \\ \mb{\Gamma}  
\end{bmatrix}
= 
\begin{bmatrix}
(\mb{\Pi}\,dt + \sum_i\mb{f^m_i} \circ dW^i_t)\times & (\mb{\Gamma}\,dt + \sum_i\mb{f^a_i}\circ dW^i_t)\times\\
(\mb{\Gamma}\,dt + \sum_i\mb{f^a_i}\circ dW^i_t)\times & 0
\end{bmatrix}
\begin{bmatrix}
\delta H / \delta \mb{\Pi} 
\\
\delta H / \delta \mb{\Gamma}
\end{bmatrix}.
\end{align}
One observes that these equations preserve the Hamiltonian in \eqref{HT-Ham}, which follows because of skew symmetry of the matrix Poisson operator. 
\paragraph{Eulerian averaged heavy top}
Consider the EA SFLP equations associated to \eqref{eq:HT SFLP}, which read
\begin{align}
    \begin{split}
    \rmd \mb{\Pi} &= \left(\E{\mb{\Pi}}\,dt + \sum_i\mb{f^m_i} \circ dW^i_t\right)\times \mb{\Omega} + \left(\E{\mb{\Gamma}}\,dt + \sum_i\mb{f^a_i} \circ dW^i_t\right)\times mg\mb{\chi}
\,,\\
\rmd \mb{\Gamma} &= \left(\E{\mb{\Gamma}}\,dt + \sum_i\mb{f^a_i} \circ dW^i_t\right)\times \mb{\Omega} \,.     
    \end{split}
\end{align}
Energy conservation is immediate since the Poisson struture is preserved. In It\^o form, these equation reads 
\begin{align}
    \begin{split}
        \rmd \mb{\Pi} &= \left(\E{\mb{\Pi}}\,dt + \sum_i\mb{f^m_i} \, dW^i_t\right)\times \mb{\Omega} + \left(\E{\mb{\Gamma}}\,dt + \sum_i\mb{f^a_i} \, dW^i_t\right)\times mg\mb{\chi} \\
        & \qquad \qquad + \frac{1}{2}\sum_i \mb{f^m_i}\times \mathbb{I}^{-1}\left(\mb{f^m_i}\times \mb{\Omega} + \mb{f^a_i}\times mg\mb{\chi}\right) \,,\\
        \rmd \mb{\Gamma} &= \left(\E{\mb{\Gamma}}\,dt + \sum_i\mb{f^a_i} \, dW^i_t\right)\times \mb{\Omega} + \frac{1}{2}\sum_i \mb{f^m_i}\times \mathbb{I}^{-1}\left(\mb{f^m_i}\times \mb{\Omega} + \mb{f^a_i}\times mg\mb{\chi}\right)\,.
    \end{split}
\end{align}
Taking expectations to have the time evolution of expectations $\E{\mb{\Pi}}$ and $\E{\mb{\Gamma}}$ as 
\begin{align}
    \begin{split}
        \partial_t \E{\mb{\Pi}} &= \E{\mb{\Pi}} \times \E{\mb{\Omega}} + \E{\mb{\Gamma}} \times mg\mb{\chi} + \frac{1}{2}\sum_i \mb{f^m_i}\times \mathbb{I}^{-1}\left(\mb{f^m_i}\times \E{\mb{\Omega}} + \mb{f^a_i}\times mg\E{\mb{\chi}}\right)\,,\\
        \partial_t \E{\mb{\Gamma}} &= \E{\mb{\Gamma}}\times \E{\mb{\Omega}} + \frac{1}{2}\sum_i \mb{f^m_i}\times \mathbb{I}^{-1}\left(\mb{f^m_i}\times \E{\mb{\Omega}} + \mb{f^a_i}\times mg\E{\mb{\chi}}\right)\,.
    \end{split}
\end{align}
As the Hamiltonian is quadratic, the above equations close and we have 
\begin{align}
    \partial_t H(\E{\mb{\Pi}}, \E{\mb{\Gamma}}) = -\frac{1}{2}\left(\mb{f^m_i}\times \E{\mb{\Omega}} + \mb{f^a_i}\times mg\E{\mb {\chi}}\right)\mathbb{I}^{-1}\left(\mb{f^m_i}\times \E{\mb{\Omega}} + \mb{f^a_i}\times mg\E{\mb{\chi}}\right)\,.
\end{align}
If the inertial tensor $\mathbb{I}$ is positive definite, the energy of the expectations $\E{\mb{\Pi}}$ and $\E{\mb{\Gamma}}$ will decay to zero. Defining the fluctuation of $\mb{\Pi}$ and $\mb{\Gamma}$ as $\mb{\Pi'}:= \mb{\Pi} - \E{\mb{\Pi}}$ and $\mb{\Gamma'}:= \mb{\Gamma} - \E{\mb{\Gamma}}$ respectively, the evolution of the fluctuation can be written in It\^o form as
\begin{align}
    \begin{split}
        \rmd \mb{\Pi'} &= \left(\mb{\Pi'}\,dt + \sum_i\mb{f^m_i} \, dW^i_t\right)\times \mb{\Omega'} + \left(\mb{\Gamma'}\,dt + \sum_i\mb{f^a_i} \, dW^i_t\right)\times mg\mb{\chi'} \\
        & \qquad \qquad + \frac{1}{2}\sum_i \mb{f^m_i}\times \mathbb{I}^{-1}\left(\mb{f^m_i}\times \mb{\Omega'} + \mb{f^a_i}\times mg\mb{\chi'}\right) \,,\\
        \rmd \mb{\Gamma'} &= \left(\E{\mb{\Gamma}}\,dt + \sum_i\mb{f^a_i} \, dW^i_t\right)\times \mb{\Omega'} + \frac{1}{2}\sum_i \mb{f^m_i}\times \mathbb{I}^{-1}\left(\mb{f^m_i}\times \mb{\Omega'} + \mb{f^a_i}\times mg\mb{\chi'}\right)\,.
    \end{split}
\end{align}
The energy of the fluctuations $h(\mb{\Pi'}, \mb{\Gamma'})$ has evoluation equation after using It\^o's Lemma
\begin{align}
    \begin{split}
        \rmd h(\mb{\Pi'}, \mb{\Gamma'}) &= \sum_i \mb{\Omega'}\cdot\left(\mb{f^m_i}\times \mb{\Omega'} + \mb{f^a_i}\times mg\mb{\chi'}\right)\,dW^i_t + mg\mb{\chi'}\cdot \mb{f^a_i} \times \mb{\Omega'}\,dW^i_t\\
        & \qquad \qquad -\frac{1}{2}\left(\mb{f^m_i}\times \mb{\Omega'} + \mb{f^a_i}\times mg\mb{\chi'}\right)\mathbb{I}^{-1}\left(\mb{f^m_i}\times \mb{\Omega'} + \mb{f^a_i}\times mg\mb{\chi'}\right) \\
        & \qquad \qquad +\frac{1}{2}\left(\mb{f^m_i}\times \mb{\Omega} + \mb{f^a_i}\times mg\mb{\chi}\right)\mathbb{I}^{-1}\left(\mb{f^m_i}\times \mb{\Omega} + \mb{f^a_i}\times mg\mb{\chi}\right)
    \end{split}
\end{align}
Taking the expectation yields
\begin{align}
    \partial_t \E{h(\mb{\Pi'} , \mb{\Gamma'})} = \frac{1}{2}\left(\mb{f^m_i}\times \E{\mb{\Omega}} + \mb{f^a_i}\times mg\E{\mb {\chi}}\right)\mathbb{I}^{-1}\left(\mb{f^m_i}\times \E{\mb{\Omega}} + \mb{f^a_i}\times mg\E{\mb{\chi}}\right)\,.
\end{align}
Together with the expression of $\partial_t H(\E{\mb{\Pi}}, \E{\mb{\Gamma}})$, we have exemplified Theorem \ref{Thm; Ebal4EA-advec} for the EA HT dynamics.


\subsection{Energy-preserving stochastic rotating shallow water equations (RSW)}\label{subsec: RSW}

Let $S = \text{Diff}(\mathcal{D})\circledS \text{Den}(\mathcal{D})$, where $\text{Diff}(\mathcal{D})$ denotes the group of diffeomorphisms acting on the planar domain $\mathcal{D}$. Let $\eta\in \mathcal{F}(\mathcal{D})$, and let $\eta\,dV \in \text{Den}(\mathcal{D})$ denote the density on $\mathcal{D}$. The associated Lie algebra and its dual are $\mathfrak{s} = \mathfrak{X}(\mathcal{D})\circledS \text{Den}(\mathcal{D})$ and $\mathfrak{s}^* = (\Lambda^1(\mathcal{D})\otimes \text{Den}(\mathcal{D}))\circledS \text{Den}(\mathcal{D})$.
We denote $\bu\cdot \frac{\partial}{\partial \bx} = u \in \mathfrak{X}(\mathcal{D})$ and $m = \bfm \cdot d\bx\otimes dV \in \Lambda^1(\mathcal{D})\otimes \text{Den}(\mathcal{D})$. In this notation, the LP equations can be written in Cartesian coordinates as the following partial differential equations, 
\begin{align}
    \frac{\partial }{\partial t}\left(\frac{\bfm}{\eta}\right) + (\bu\cdot \GRAD)\frac{\bfm}{\eta} + \frac{m_j}{\eta}\GRAD u^j + \GRAD \frac{\delta H}{\delta \eta} = 0
    \,, \qquad \frac{\partial}{\partial t}\eta + \DIV(\eta\bu) = 0\,,
    \label{vf LP-eq}
\end{align}
and the LP operator can be written as
\begin{align}
    \frac{\partial}{\partial t}
    \begin{bmatrix}
    m_i \\ \eta
    \end{bmatrix}
    = -
    \begin{bmatrix}
    \partial_j m_i + m_j \partial_i & \eta \partial_i \\
    \partial_j \eta & 0  
    \end{bmatrix}
    \begin{bmatrix}
    \delta H / \delta m_j \\
    \delta H / \delta \eta \\
    \end{bmatrix}\,.
    \label{LPBmatrix-op}
\end{align}
For rotating shallow water equations (RSW) in $\mathbb{R}^2$, the Hamiltonian $H$ is given by
\begin{align}
    H = \int_{\mathcal{D}} \frac{1}{2\epsilon\eta}\left|\bfm - \eta \bR \right|^2 + \frac{(\eta - B)^2}{\epsilon \mathcal{F}} \, d^2x\,, \label{RSW Hamiltonian}
\end{align}
in which $\epsilon\ll1$ denotes Rossby number and $\mathcal{F}=O(1)$ denotes the Froude number. The mean depth is $B$ and the surface elevation is $(\eta - B)$.
The reduced Legendre transform yields $\bu = {\delta H}/{\delta \bfm} = (\bfm - \eta\bR)/{\epsilon \eta}$. The variational derivatives of the RSW Hamiltonian are obtained as 
\begin{align*}
    \delta H = \int_{\mathcal{D}} \bu\cdot \delta \bfm + \left(\frac{\eta - B}{\epsilon \mathcal{F}} - \frac{\epsilon}{2}|\bu|^2 - \bu\cdot \bR\right)\cdot \delta \eta \, d^2x\,.
\end{align*}
Substituting into \eqref{vf LP-eq} and using the relation $(\curl \bu) \CROSS \bv + \GRAD(\bu\cdot \bv) = (\bu\cdot\GRAD)\bv + u_i\GRAD v^i$ yields the standard set of RSW equations governing motion and continuity, 
\begin{align}
    \epsilon \frac{\partial \bu}{\partial t} + \left(\curl \bR + \epsilon\, \curl \bu\right)\CROSS \bu + \GRAD\left(\frac{\eta - B}{\epsilon\mathcal{F}} + \frac{\epsilon}{2}|\bu|^2\right) = 0\,, \quad \frac{\partial}{\partial t}\eta + \DIV(\eta\bu) = 0\,.
\end{align}
Consider the Poisson operator of the form \eqref{eq:SPO new} applied to $\mathfrak{s}^*$. It reads
\begin{align}
    \rmd
    \begin{bmatrix}
    m_i \\ \eta
    \end{bmatrix}
    = -
    \begin{bmatrix}
    \partial_j m_i + m_j \partial_i & \eta \partial_i \\
    \partial_j \eta & 0  
    \end{bmatrix}
    \begin{bmatrix}
    \delta H / \delta m_j \\
    \delta H / \delta \eta \\
    \end{bmatrix}\,dt
    - \sum_k
    \begin{bmatrix}
    \partial_j f^k_i + f^k_j \partial_i & g^k \partial_i \\
    \partial_j g^k & 0  
    \end{bmatrix}
    \begin{bmatrix}
    \delta H / \delta m_j \\
    \delta H / \delta \eta \\
    \end{bmatrix}\circ dW_t^k\,, \label{eq:SPO RSW}
\end{align}
where $f^k_i$ are the components of $\bff^k$ such that $f^k = \bff^k \cdot d\bx \otimes dV \in \Lambda^1(\mathcal{D})\otimes \text{Den}(\mathcal{D})$ and $g^k \in \mathcal{F}(\mathcal{D})$ for all $k$. Then the SFLP equations become
\begin{align}
\begin{split}
&\rmd m_i + (\partial_j(m_i u^j) + m_j\partial_i u^j)\,dt + (\partial_j(f^k_i u^j) + \sum_k f^k_j\partial_i u^j)\circ dW_t^k + \eta\partial_i \frac{\delta H}{\delta \eta}\,dt + g^k\partial_i \frac{\delta H}{\delta \eta}\circ dW_t^k = 0, \\
&\rmd \eta + \DIV(\eta\bu)\,dt + \sum_k \DIV(g^k\bu)\circ dW_t^k = 0.
\end{split}
\label{SLP-RSWeqns}
\end{align}
\begin{remark}
	When noise is introduced into the continuity equation for density $\eta$, one can still write the momentum equation in terms of ${\bfm}/{\eta}$. Namely, it reads
	\begin{align}
	\begin{split}
	\rmd \frac{m_i}{\eta}&+ \left(u^j\partial_j\frac{m_i}{\eta}+ \frac{m_j}{\eta}\partial_i u^j\right)\,dt + \partial_i \frac{\delta H}{\delta \eta}\,dt 
	\\& + \sum_k \frac{1}{\eta}\left(- \frac{m_i}{\eta}\partial_j(g^ku^j) + \partial_j(f^k_i u^j) + f^k_j\partial_i u^j + g^k\partial_i \frac{\delta H}{\delta \eta}\right)\circ dW_t^k = 0, 
	\end{split}
	\label{SRSW-eqn}
	\end{align}
	where the inhomogenous term $-\eta^{-2}{m_j}\partial_i(g^k u^j)$ in the second line of \eqref{SRSW-eqn} is due to the modified advection relation for $\eta\, dV$. The SFLP equations for RSW dynamics in \eqref{SLP-RSWeqns} written more succinctly using Lie derivatives and denoting the momentum 1-form as $\alpha = \bfm\cdot d\bx$, then
	\begin{align}
	\begin{split}
	&\rmd \frac{\alpha}{\eta} + \left(\mathcal{L}_u \frac{\alpha}{\eta} + \bd \frac{\delta H}{\delta \eta}\right)\,dt + \sum_k\frac{1}{\eta\, dV} \left(- \frac{\alpha}{\eta}\mathcal{L}_u g^k + \mathcal{L}_u f^k + \frac{\delta H}{\delta \eta}\diamond g^k \right)\circ dW^k_t = 0,\\ 
	&\rmd (\eta\, dV) + \mathcal{L}_u (\eta\, dV)\,dt + \sum_k\mathcal{L}_u (g^k_i\, dV)\circ dW_t^i = 0\,.
	\end{split}\label{eq:SLP 1-form}
	\end{align}
\end{remark}
In vector calculus notation, the stochastic RSW equation obtained by substituting the variational derivatives of $H$ into equation \eqref{eq:SPO RSW} yields
\begin{align}
\begin{split}
&\epsilon \rmd \bu + \left(\curl \bR + \epsilon \,\curl \bu\right)\CROSS \bu\,dt + \GRAD\left(\frac{\eta - B}{\epsilon \mathcal{F}} + \frac{\epsilon}{2}|\bu|^2\right)\,dt \\
& \quad + \sum_k\frac{1}{\eta}\left(-\left(\epsilon \bu +\bR\right)\DIV(g^k\bu) + (\curl \bff^k)\CROSS \bu + \GRAD(\bff^k\cdot \bu) + \bff^k\DIV\bu + g^k\GRAD\pi_{RSW} \right)\circ dW_t^k = 0\,,\\
& \rmd \eta + \DIV(\eta\bu)\,dt + \sum_k\DIV(g^k\bu)\circ dW_t^k = 0,
\end{split} \label{eq:SRSW eq}
\end{align}
where $\pi_{RSW} = \left(\frac{\eta - B}{\epsilon \mathcal{F}} - \frac{\epsilon}{2}|\bu|^2 - \bu\cdot \bR\right)$.

The corresponding Kelvin circulation theorem follows easily from \eqref{eq:SLP 1-form} in geometric form, as
\begin{align}
\rmd \oint_{c(u)} \frac{\alpha}{\eta} = -\oint_{c(u)}\sum_k\frac{1}{\eta\, dV}\left(-\frac{\alpha}{\eta}\mathcal{L}_u g^k + \mathcal{L}_u f^k + g^k d\frac{\delta H}{\delta \eta} \right)\circ dW^k_t\,.
\label{Kel-SRSW}
\end{align}
The stochastic RSW equations in  \eqref{eq:SRSW eq} enable the Kelvin circulation equation \eqref{Kel-SRSW} to be written in vector calculus form as
\begin{align}
&\rmd \oint_{c(u)} (\epsilon\bu + \bR)\cdot d\bx 
\\&\quad= -\oint_{c(u)}\sum_k \frac{1}{\eta}\left(-\left(\epsilon \bu +\bR\right)\DIV(g^k\bu) + (\curl \bff^k)\CROSS \bu + \GRAD(\bff^k\cdot \bu) + \bff^k(\DIV\bu) + g^k\GRAD\pi_{RSW} \right)\cdot d\bx \circ dW_t^k.
\nonumber\end{align}
Here, ones sees the effects of the material entrainment terms proportional to $g^k$ appearing in the generation of Kelvin circulation.
Setting $g^k=0$ simplifies equation \eqref{Kel-SRSW} to 
\begin{align}
\rmd \oint_{c(u)} \frac{\alpha}{\eta} = -\oint_{c(u)}\sum_k\frac{1}{\eta\, dV}
\left( \mathcal{L}_u f^k \right)\circ dW^k_t\,.
\end{align}
The vector form of the motion equation also simplifies for $g^k=0$ to
\begin{align}
\begin{split}
&\epsilon \rmd \bu + \left(\curl \bR + \epsilon\,\curl \bu\right)\CROSS \bu\,dt + \GRAD\left(\frac{\eta - B}{\epsilon \mathcal{F}} + \frac{\epsilon}{2}|\bu|^2\right)\,dt \\
& \qquad = -  \sum_k\frac{1}{\eta}\left( (\curl \bff^k)\CROSS \bu + \GRAD(\bff^k\cdot \bu) + \bff^k\DIV\bu\right)\circ dW_t^k 
\,,
\end{split} \label{eq:SRSW vector eq}
\end{align}
whose right hand side may be regarded as a compressible version of the CL vortex force. The Ito form is 
\begin{align}
\begin{split}
&\epsilon \rmd \bu + \left(\curl \bR + \epsilon\,\curl \bu\right)\CROSS \bu\,dt + \GRAD\left(\frac{\eta - B}{\epsilon \mathcal{F}} + \frac{\epsilon}{2}|\bu|^2\right)\,dt \\
&\qquad +\sum_k\frac{1}{\eta}\left( (\curl \bff^k)\CROSS \bu + \GRAD(\bff^k\cdot \bu) + \bff^k\DIV\bu\right)\, dW_t^k 
 = \frac{1}{2}\sum_k\frac{1}{\eta}\left( (\curl \bff^k)\CROSS \bsigma^k + \GRAD(\bff^k\cdot \bsigma^k) + \bff^k\DIV\bsigma^k\right)\,dt
\,,
\end{split} \label{eq:SRSW vector eq ito}
\end{align}
where $\bsigma^k = \frac{1}{\eta}\left((\curl \bff^k)\CROSS \bu + \GRAD(\bff^k\cdot \bu) + \bff^k\DIV\bu\right)$. By taking the exterior derivative of equation \eqref{eq:SLP 1-form} and noting $\bd \frac{\alpha}{\eta} = \eta q\, dV$ where $q$ is the potential vorticity and $dV$ is the area element, one finds that the vorticity density $\eta q\, dV$ satisfies 
\begin{align}
    \rmd (\eta q \,dV) + \mathcal{L}_u (\eta q \, dV)\,dt + \sum_k \bd \frac{1}{\eta \, dV}\left[-\frac{\alpha}{\eta}\mathcal{L}_u g^k + \mathcal{L}_u f^k + g^k \bd\frac{\delta H}{\delta \eta}\right] = 0\,.\label{eq:RSW-PV-dyn}
\end{align}
\begin{remark}
In coordinates, the last two summands in equation \eqref{eq:RSW-PV-dyn} can be written as divergences, so one finds the pathwise conservation law  $ \rmd \int_{\mathcal{D}}(\eta q \,dV)=0$ for appropriate (homogeneous, or periodic) boundary conditions. 
\end{remark}

\subsection{Euler-Boussinesq (EB) equations}\label{subsec: EB}

Consider the case where $S = \text{Diff}(\mathcal{D})\circledS \mathcal{F}(\mathcal{D})\circledS \text{Den}(\mathcal{D})$, where $D\,dV\in \text{Den}(\mathcal{D})$ and $b \in \mathcal{F}(\mathcal{D})$. The Poisson operator form of the LP equation in this case is 
\begin{align}
\frac{\partial}{\partial t}
\begin{bmatrix}
m_i \\ D \\ b 
\end{bmatrix}
= -
\begin{bmatrix}
\partial_j m_i + m_j \partial_i & D\partial_i & -b_{,i} \\
\partial_jD & 0  & 0 \\
b_{,j} & 0  & 0 
\end{bmatrix}
\begin{bmatrix}
\delta H / \delta m_j \\
\delta H / \delta D \\
\delta H / \delta b
\end{bmatrix}\,.
\end{align}
The stochastic extension of equation \eqref{eq:SPO new} to include buoyancy reads
\begin{align}
\rmd
\begin{bmatrix}
m_i \\ D \\ b 
\end{bmatrix}
= -
\begin{bmatrix}
\partial_j m_i + m_j \partial_i & D\partial_i & -b_{,i} \\
\partial_jD & 0  & 0 \\
b_{,j} & 0  & 0 
\end{bmatrix}
\begin{bmatrix}
\delta H / \delta m_j \\
\delta H / \delta D \\
\delta H / \delta b
\end{bmatrix}\,dt
- \sum_k
\begin{bmatrix}
\partial_j f^k_i + f^k_j \partial_i & g^k\partial_i & -a^k_{,i} \\
\partial_jg^k & 0  & 0 \\
a^k_{,j} & 0  & 0 
\end{bmatrix}
\begin{bmatrix}
\delta H / \delta m_j \\
\delta H / \delta D \\
\delta H / \delta b
\end{bmatrix}\circ dW_t^k\,.
\label{stoch-PB}
\end{align}
Here, $f^k_i$ are the components of $\bff^k$ such that $f^k = \bff^k \cdot d\bx \otimes dV \in \Lambda^1(\mathcal{D})\otimes \text{Den}(\mathcal{D})$, $g^k \in \mathcal{F}(\mathcal{D})$ and $a^k \in \mathcal{F}(\mathcal{D})$ for all $k$. For simplicity, let us consider incompressible flows with $D=1$ and neglect the stochastic part of the advection of $D$. That is, we set $g^k = 0$. This case will yield the standard  incompressibility condition, $\DIV\bu = 0$. 

The EB Hamiltonian is
\begin{align}
H = \int_\mathcal{D} \bigg[ \frac{1}{2D}\big| \bfm - D\bR \big|^2 +  gDbz + p(D-1)  \bigg]d^3x\,,
\label{Ham-det-EB}
\end{align}
where the momentum density $\bfm = D \Big(\bu + \bR(\bx)\Big)$ and the pressure $p$ is a Lagrange multiplier which enforces incompressibility. The variational derivatives of $H$ are given by 
\begin{align}
\delta H = \int_\mathcal{D} \bu\cdot \delta\bfm + \delta D\Big(gbz +p - \frac12|\bu|^2 - \bu\cdot\bR(\bx)\Big) + (gDz)\delta b \,d^3x\,. 
\label{Ham-var-CL}
\end{align}
The deterministic EB equations then follow as
\begin{align} 
\begin{split}    
\partial_t \bu - \bu\times {\rm curl}\big(\bu + \bR(\bx)\big) 
&= -\,\nabla \Big( p + \frac12 |\bu|^2 \Big) - gb\, \mathbf{\hat{z}}  
\, ,   \\
\partial_t D + {\rm div}(D\bu) &= 0\, ,   \quad \hbox{with}\quad D=1
\, ,   \\
\partial_t b + \bu\cdot \nabla b &= 0\,. 
\end{split}   
\label{CL-eqns}   
\end{align}
One can obtain the energy preserving stochastic EB equations by substituting the variational derivatives into stochastic Poisson operator in \eqref{stoch-PB} to find
\begin{align}
\begin{split}
\rmd \bu &- \bu \CROSS \curl(\bu + \bR)\,dt + \GRAD\left(\rmd p + \frac{1}{2}|\bu|^2\,dt\right) + gb\mathbf{\Hat{z}}\,dt + \sum_k \left[ - \bu \CROSS \curl(\bff^k) + \GRAD(\bff^k\cdot\bu) - gz\cdot \GRAD a^k\right]\circ dW_t^k = 0\, ,\\
&\partial_t D + {\rm div}(D\bu) = 0\, ,   \quad \hbox{with}\quad D=1 \, , \\
&\rmd b + \bu\cdot \GRAD b\,dt + \sum_k\bu\cdot\GRAD a^k\circ dW_t^k = 0\,.    
\end{split}\label{eq:SEB eq}
\end{align}
\begin{remark}
	Following \cite{SC2020}, the change of pressure $p \rightarrow \rmd p = p\,dt + \sum_i p_i \circ dW_t^i$ has been made, so that the incompressibility condition, $\DIV\bu = 0$ will be enforced in both the drift and stochastic parts of $\bu$. 
\end{remark}
Similar to the case of $S = \text{Diff}(\mathcal{D})\circledS \text{Den}(\mathcal{D})$ for energy-preserving stochastic RSW equation, a convenient way of considering the Kelvin circulation theorem for the Euler-Boussinesq equations is to write the associated SFLP equations in terms of the momentum 1-form $\alpha = \bfm\cdot d\bx$. These equations then read
\begin{align}
\begin{split}
&\rmd \alpha + \left(\mathcal{L}_u \alpha - \frac{\delta H}{\delta b} \bd b\right)\,dt + \bd\left(\rmd p - \frac{1}{2}|u|^2\,dt - \bu\cdot \bR \,dt\right) + \sum_k\frac{1}{D}\left(\mathcal{L}_u\beta^k - \frac{\delta H}{\delta b}\bd a^k\right)\circ dW^k_t = 0\,,\\
&\partial_t \left(D\, dV\right) + \mathcal{L}_u(D\, dV) = 0\, ,   \quad \hbox{with}\quad D=1 \, , \\
&\rmd b + \bu\cdot \GRAD b\,dt + \sum_k\bu\cdot\GRAD a^k\circ dW_t^k = 0\,,
\end{split}
\label{stoch-EBeqns-erg}
\end{align}
where the 1-form $\beta^k := \bff^k\cdot d\bx$. The Kelvin circulation theorem for these equations is immediate, as
\begin{align}
\rmd\oint_{c(u)} \alpha = \oint_{c(u)} \frac{\delta H}{\delta b}\bd b\,dt - \sum_k\oint_{c(u)} \left(\mathcal{L}_u\beta^k - \frac{\delta H}{\delta b}\bd a^k\right) \circ dW_t^k\,.
\end{align}
In vector calculus notation, this Kelvin circulation theorem is written as
\begin{align}
\rmd\oint_{c(u)}(\bu + \bR)\cdot d\bx = \oint_{c(u)} gz\GRAD b\cdot d\bx \,dt - \sum_k\oint_{c(u)}\left((\curl \bff^k)\CROSS \bu - gz\GRAD a^k \right)\cdot d\bx \circ dW^k_t
\end{align}
\begin{remark}\label{stochCL-remark}
	In the case where $\bff^0 = -\bu^S(\bx)$, $dW_t^0 = dt$, and $a^k=0$ for $k=0,\ldots$, one finds that the energy-preserving stochastic EB equations in \eqref{eq:SEB eq} produce a stochastic contribution to the vortex force in the Craik-Leibovich equations \cite{CL1976} whose formulation with Hamilton's principle is discussed in \cite{HolmCL1996}. Namely, they reduce as follows, 
	\begin{align}
	\begin{split}    
	\rmd \bu - \bu\times {\rm curl}\big(\bu - \bu^S(\bx) + \bR(\bx)\big) dt
	&= -\,\nabla \Big( p + \frac12 |\bu|^2 + \bu\cdot\bu^S \Big)dt  - gb\, \mathbf{\hat{z}} \, dt
	\\&\qquad + \sum_{k>0} \Big( \bu\times {\rm curl}\,\bff^k 
	- \,\nabla \big(\bu\cdot \bff^k\big)\Big) \circ dW^k_t
	\, ,   \\
	\partial_t D + {\rm div}(D\bu) &= 0\, ,   \quad \hbox{with}\quad D=1
	\, ,   \\
	\partial_t b + \bu\cdot \nabla b &= 0\,, 
	\end{split}   
	\label{SCL-eqns}
	\end{align}
	where one interprets the semimartingale $\rmd \bu^S = \bu^S(\bx)\,dt + \sum_k \bff^k(\bx)  \circ dW^k_t $ as a stochastic augmentation of the usual steady prescribed Stokes drift velocity. 
\end{remark}

\paragraph{Physical interpretation of the functions $\bff^k$ and $a^k$ in terms of stochastic PV fluxes}
The energy-preserving stochastic EB equations in \eqref{eq:SEB eq} imply the following equation for potential vorticity density, $q \,dV$, defined by $q\, dV := \bd\alpha \wedge \bd b = (\curl \bfm)\cdot \GRAD b \,dV = (\bs{\omega} + 2 \bs{\Omega})\cdot \GRAD b \,dV$, where $\bfm=\bu+\bR$ for $D=1$. Namely,
\begin{align}
\begin{split}
\rmd (q \,dV)+ \mathcal{L}_u (q\,dV) \,dt &= \bd\left(\rmd\alpha + \mathcal{L}_u\alpha \,dt\right)\wedge \bd b + \bd\alpha \wedge \bd\left(\rmd b + \mathcal{L}_u b \,dt\right)\\
& =  - \sum_k\bd\bigg[\big(\mathcal{L}_u \beta^k) - g z \bd a^k\big) \wedge \bd b -  \big(\mathcal{L}_u a^k\big)\bd \alpha \bigg]\circ dW_t^k
\,,
\end{split}
\label{eqn: PV-StochEB-geom}
\end{align}
where we recall that $\beta^k := \bff^k\cdot d\bx$ is a 1-form.
In vector calculus notation, after using the incompressibility condition $\DIV \bu = 0$, the potential vorticity equation \eqref{eqn: PV-StochEB-geom} can be written in terms of $q = (\curl \,\bfm)\cdot \GRAD b$ as, cf. \cite{{BodnerFK2020}}
\begin{align}
\begin{split}
\rmd q + \bu\cdot\GRAD q\,dt &= \sum_k {\rm div}  \Big[\big(\bu \CROSS \curl\, \bff^k + gz\nabla a^k \big)\times \GRAD b 
+ (\bs{\omega} + 2 \bs{\Omega}) \cdot \GRAD(\bu\cdot \GRAD a^k) 
\Big]\circ dW^k_t
\\ &= \sum_k {\rm div} \Big[\mathbf{F}^k \times  \nabla b 
+ (\bs{\omega} + 2 \bs{\Omega}) \cdot \big(\nabla \mathfrak{D}^k\big)\Big]\circ dW^k_t
=: -\,\sum_k {\rm div} \mathbf{J}^k \circ dW^k_t  
\,,\end{split}
\label{eqn: PV-StochEB-calc}
\end{align}
where $\curl \,\bfm = \bs{\omega} + 2 \bs{\Omega} $ is the total vorticity, and the quantities $\mathbf{F}^k$ and $\mathfrak{D}^k$ are defined as 
\begin{align}
\mathbf{F}^k := \bu \CROSS \curl \,\bff^k + gz\nabla a^k
\quad\hbox{and}\quad
\mathfrak{D}^k := \bu\cdot \GRAD a^k
\,.
\label{eqn: PV-StochEB-FD}
\end{align}
The summands in $\mathbf{J}^k = - \,\mathbf{F}^k \times \nabla b - (\bs{\omega} + 2 \bs{\Omega}) \cdot \big(\nabla \mathfrak{D}^k) $  are called the ``$J$-fluxes of PV'' and are identified with ``frictional'' and ``diabatic'' effects, respectively, in \cite{HM1987, Marshall&Nurser1992}.  See also \cite{BodnerFK2020} for LES turbulence interpretations of these fluxes. 

In summary, while the energy-preserving stochastically-augmented CL vortex force and entrainment effects in equations \eqref{eq:SEB eq} or \eqref{SCL-eqns} can locally create stochastic Langmuir circulations, the total volume-integrated potential vorticity $Q = \int_{\mathcal{D}}q \,dV$ will be preserved for appropriate boundary conditions.
See \cite{SullivMcW-LangTurb2019} for more information about Langmuir circulations and their importance in the mixing processes in the upper ocean boundary layer. The sub-mesoscale excitations created by the $J$-fluxes of PV are a subject of intense present research aimed at understanding the effects of turbulence on oceanic frontogenesis, as well as wave forcing which transports materials such as sediment, gases, algae (carbon), oil spills and plastic detritus, \cite{McWilliams2003,McWilliams2016,McWilliams2017,McWilliams2018,LiFK2017,Bodner-etal, Deng-etal2019, TM-etal2020}. 

\begin{remark}[Eulerian averaged Euler Bousinessq equation]
For completeness, we mention that the corresponding Eulerian averaged equation to the stochastic EB equation \eqref{eq:SEB eq}. Namely, the EA SLFT EB equation is given by
\begin{align}
\begin{split}
    \rmd \bu &- \bu \CROSS \curl\,\E{\bu + \bR}\,dt + \GRAD\left(\rmd p - \frac{1}{2}|\bu|^2\,dt - \bu\cdot \bR\,dt + \E{\bu+ \bR}\cdot\bu\,dt \right) - gz\GRAD \E{b}\,dt \\
    &\qquad + \sum_k \left[ - \bu \CROSS \curl(\bff^k) + \GRAD(\bff^k\cdot\bu) - gz\cdot \GRAD a^k\right]\circ dW_t^k = 0\, ,\\
    &\partial_t D + {\rm div}(D\bu) = 0\, ,   \quad \hbox{with}\quad D=1 \, , \\
    &\rmd b + \bu\cdot \GRAD \E{b}\,dt + \sum_k\bu\cdot\GRAD a^k\circ dW_t^k = 0\,.    
\end{split}\label{eq:EA SFLT EB}
\end{align}
Time evolution of expectation of $\bu$ and $b$ can be found by passing to the Hamiltonian side and we have 
\begin{align}
    \partial_t H(\E{\bfm}, \E{b}) = -\frac{1}{2}\sum_i \int_\mathcal{D} \left|-\E{\bu}\CROSS \curl\bff^i + \GRAD(\E{\bu}\cdot \bff^i) - \E{gz}\cdot \GRAD a^i\right|^2 \,d^3x
\end{align}
and 
\begin{align}
    \partial_t \E{H(\bfm', b')} = \frac{1}{2}\sum_i \int_\mathcal{D} \left|-\E{\bu}\CROSS \curl\bff^i + \GRAD(\E{\bu}\cdot \bff^i) - \E{gz}\cdot \GRAD a^i\right|^2 \,d^3x
\end{align}
which agrees with Theorem \ref{Thm; Ebal4EA-advec}.
\end{remark}

\section{Conclusion and outlook}\label{sec: conclude}
The motivation of this paper has been to determine what type of stochastic perturbations can be added to fluid dynamics that will preserve the fundamental properties of energy conservation, Kelvin circulation theorem and conserved quantities arising from the Lagrangian particle relabelling symmetry. The geometric framework employed in this paper introduces stochastic forcing by Lie transport (SFLT) noise as a series of perturbations which automatically produce a Kelvin circulation theorem and can be chosen to satisfy either energy, or Casimir preservation. In this paper, we have mainly focused on preserving energy conservation. These stochastic external forces can be seen as the slow $+$ fast decomposition of external forces corresponding to the slow $+$ fast decomposition of fluid flow. For Euler fluid equations, the stochastic CL vortex force will always be energy conserving and its physical interpretation as a wave-averaged forces fits well into the external forces considered in the reduced Lagrange-d'Alembert-Pontryagin (RLDP) principle. 
In comparison with the location uncertainty (LU) approach by M\'emin \cite{Memin2014}, the present paper gives an \emph{alternative} set of fluid equations which are energy preserving. The relation to LU has been left for future work. Numerical simulations of the stochastically forced Lie-Poisson (SFLP) equations \eqref{ExplicitSDPergEqns} will be needed to classify solution behaviours of these new stochastic extension of classical fluid equations. As in the applications of the stochastic advection by Lie transport (SALT) and LU approaches, computational simulations of Langmuir fluid circulations and their material entrainment using  equations \eqref{eq:SEB eq} and \eqref{SCL-eqns} will require the calibration of the functions $(f^m_i,f^a_i)$, perhaps via data analysis methods similar to those used in the approach detailed in \cite{CCHOS18a, CCHOS18}. Computational simulations of the equations resulting from the SFLT and SFLP modelling approaches introduced here, as well as simulations of the EA SFLT equations in section \ref{sec: EA SFLT} have all been left for future work. 




\section*{Acknowledgements}
We are grateful to our friends and colleagues who have generously offered their time, thoughts and encouragement in the course of this work during the time of COVID-19. 
Thanks to E. Luesink, S. Takao, W. Pan, D. Crisan, O. Street, F. Gay-Balmaz, E. M\'emin, B. Chapron,  C. Franzke, J. C. McWilliams, B. Fox-Kemper, A. J. Roberts and W. Bauer for their thoughtful comments and discussions. We are also grateful to the anonymous referee for their constructive comments.
DH is also grateful for partial support from ERC Synergy Grant 856408 - STUOD (Stochastic Transport in Upper Ocean Dynamics). RH is supported by an EPSRC scholarship [grant number EP/R513052/1].

\section*{Data availability} 
No data was created or used in writing this paper.

\appendix


\section{Coadjoint operator of semidirect-product Lie-Poisson brackets}\label{app: semidirect ad^*}
Following \cite{HMR1998} and \cite{CHR2018}, consider a Lie Group $G$ which acts from the left by linear maps on a vector space $V$ which induces a left action of $G$ on $V^*$. In the right representation of $G$ on the vector space $V$, the semidirect product group $S = G\circledS V$ has group multiplication
\begin{align}
(g_1, v_1)(g_2,v_2) = (g_1g_2, v_2 + v_1g_2),
\end{align}
where the action of $G$ on $V$ is denoted by concatenation $vg$. The identity element in $S$ is $(e,0)$ where $e$ is the identity in $G$. The inverse of an element in $S$ is given by
\begin{align*}
(g,v)^{-1}  = (g^{-1}, -vg^{-1})
\end{align*}
The Lie algebra bracket on the semidirect product Lie algebra $\mathfrak{s} = \mathfrak{g}\circledS V$ is given by 
\begin{align}
[(\xi_1, v_1),(\xi_2, v_2)] = ([\xi_1, \xi_2], v_2\xi_1 - v_1\xi_2) \label{SDP-ad op}
\end{align}
where the induced action of $\mathfrak{g}$ on $V$ is denoted by concatenation $v\xi$. The operation ${\rm AD}: S\times S\to S$ is defined by
\begin{align}
{\rm AD}_{(g_1,v_1)}(g_2, v_2) = (g_1,v_1)(g_2,v_2)(g_1,v_1)^{-1} = (g_1g_2g_1^{-1}, -v_1g_1^{-1} + v_2g_1^{-1} + v_1g_2g_1^{-1})\,.
\end{align}
Taking the time derivatives of $g_2$ and $v_2$, then evaluating them at the identity $t=0$ yields the Adjoint operation ${\rm Ad}: S \times \mathfrak{s} \to \mathfrak{s}$ which is defined by
\begin{align}
\Ad_{(g,v)}(\xi, a) = \frac{d}{dt}\biggr|_{t=0}{\rm AD}_{(g,v)}(\Tilde{g}(t), \tilde{v}(t)) = (g\xi g^{-1}, (a+v\xi)g^{-1}),
\end{align}
where $\frac{d}{dt}\big|_{t=0}\Tilde{g}(t) = \xi$ and $\frac{d}{dt}\big|_{t=0}\Tilde{v}(t) = a$.
The coAdjoint operation $\Ad^*$ is the formal adjoint of $\Ad$ with respect to the pairings $\big<\cdot,\cdot\big>_\mathfrak{g}$ and $\big<\cdot,\cdot\big>_V$ which can be computed as
\begin{align}
\Ad^*_{(g,v)}(\mu, a) =  (g\mu + (vg^{-1})\diamond(ag^{-1}), ag^{-1}),
\end{align}
where the diamond operator $\diamond$ is defined as 
\begin{align*}
\Scp{\beta \diamond \alpha }{\xi}_\mathfrak{g}:= \Scp{b}{- a \xi }_V.
\end{align*}
The notation $ag^{-1}$ denotes the inverse of the dual isomorphism defined by $g \in G$ (so that $g \rightarrow ag^{-1}$ is a right action). Note that the adjoint and coadjoint actions are left actions. In this case, the $\mathfrak{g}$-actions on $\mathfrak{g}^*$ and $V^*$ are defined as before to be minus the dual map given by the $\mathfrak{g}$-actions on $\mathfrak{g}$ and $V$ and are denoted, respectively, by $\xi\mu$ (left action) and $a\xi$ (right action).
Taking time derivative of $(g,v)$ in the definition of $\Ad_{(g,v)}(\xi,u)$ and evaluating at the identity gives the adjoint operator $\ad$ which coincides with the Lie algebra bracket. One computes the formal adjoint of $\ad$ with respect to the pairing
\begin{align*}
\SCP{\ad_{(\xi, b)} (\tilde{\xi}, \tilde{b})}{(\mu, a)} = \SCP{\left(\ad_\xi \tilde{\xi}, \tilde{b}\xi - b\tilde{\xi}\right)}{(\mu, a)} = \SCP{(\ad^*_\xi \mu + b\diamond a, -a\xi)}{\left(\tilde{\xi}, \tilde{b}\right)} = \SCP{(\ad^*_{(\xi,b)} (\mu,a))}{\left(\tilde{\xi}, \tilde{b}\right)},
\end{align*}
where in the first equality we have used the left Lie algebra action in \eqref{SDP-ad op} to obtain 
\begin{align}
\ad^* _{(\xi,b)}(\mu, a) = (\ad^*_\xi \mu + b\diamond a, -a\xi).
\label{eq: left ad-star}
\end{align}
When $\mathfrak{g} = \mathfrak{X}$, it is formally the right Lie algebra of ${\rm Diff}(\mathcal{D})$, that is, its standard left Lie algebra bracket is minus the usual Lie bracket for vector fields. To distinguish between these brackets, we denote with $[u,v]$ the standard Jacobi-Lie bracket of the vector fields where $u,v\in \mathfrak{X}$ such that $\ad_u v = -[u,v]$. Here $\ad$ operation is the adjoint action of the left Lie algebra to itself. Then the adjoint action of $\mathfrak{s} = \mathfrak{X}\circledS V$ on itself is then
\begin{align}
    \ad_{(\xi_1, v_1)}(\xi_2, v_2) = (-[\xi_1, \xi_2], v_2\xi_1 - v_1\xi_2), \label{eq:spd ad action vf}
\end{align}
and the coadjoint action is taken to be the formal dual of \eqref{eq:spd ad action vf}.
Identifying $-a\xi = \mathcal{L}_\xi a = -\mathcal{L}^T_\xi a$ yields the expression for the Lie-Poisson bracket for the left action of a semidirect-product Lie algebra by simply computing the dual of the semidirect product Lie algebra $\mathfrak{s} = \mathfrak{g}\circledS V$ action given in equation \eqref{SDP-ad op}. Namely, one computes

\begin{align}
\begin{split}
\{f,h \}(\mu, a) &= \SCP{(\mu,a)}{\ad_{(\frac{\delta h}{\delta \mu}, \frac{\delta h}{\delta a})}\left(\frac{\delta f}{\delta \mu},\frac{\delta f}{\delta a}\right)}\\
&=\SCP { \bigg( {\rm ad}^*_{\frac{\delta h}{\delta \mu} } \mu - a\diamond \frac{\delta h}{\delta a} 
	\,,\,-\,\mathcal{L}_{\frac{\delta h}{\delta \mu} }^T a \bigg) } 
{ \bigg( \frac{\delta f}{\delta \mu}\,,\, \frac{\delta f}{\delta a}  \bigg) } \\
&= \SCP{\ad^*_{(\frac{\delta h}{\delta \mu},\frac{\delta h}{\delta a})}(\mu,a)}{\left(\frac{\delta f}{\delta \mu}, \frac{\delta f}{\delta a}\right)}
\,,\end{split}
\end{align}
which of course agrees with \eqref{eq: left ad-star}.

\section{Stochastic advection by Lie transport (SALT)}\label{app: SALT}

\subsection{Deterministic semidirect-product coadjoint motion for ideal fluids}
\begin{proposition}[Euler-Poincar\'e theorem \cite{HMR1998}]
	The Euler-Poincar\'e (EP) equations of a reduced Lagrangian $\ell(u, a)$, $\ell: \mathfrak{X}\times V^* \rightarrow \mathbb{R}$ defined over the space of smooth vector fields with elements $u\in \mathfrak{X}$ acting by Lie derivative on elements $a\in V^*$ of a vector space $V^*$are written as \cite{HMR1998}
	\begin{align}
	\frac{d}{dt}\frac{\delta \ell}{\delta u} + \ad^*_u \frac{\delta \ell}{\delta u} - \frac{\delta \ell}{\delta a}\diamond a = 0, \quad \frac{d}{dt}a + \mathsterling_u a=0.
	\label{EPeqn}
	\end{align}
\end{proposition}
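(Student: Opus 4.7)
The plan is to derive the EP equations with advected quantities from Hamilton's principle on the full diffeomorphism group $G = \mathrm{Diff}(\mathcal{D})$, following exactly the same pattern the paper uses for its deterministic RLDP principle \eqref{eqn:RLDP} and which it later upgrades to the stochastic case in \eqref{LDPaction w a}. I would start with the action $S = \int_a^b \ell(u, a)\, dt$, imposing the kinematic constraints $u = \dot g\, g^{-1}$ and $a(t) = a_0 g(t)^{-1}$ for a smooth curve $g : [a,b] \to G$ with $\delta g$ vanishing at the endpoints. Setting $\eta := \delta g \cdot g^{-1} \in \mathfrak{X}$, the induced constrained variations are
\[
\delta u = \partial_t \eta - \ad_u \eta, \qquad \delta a = -\mathcal{L}_\eta a,
\]
the first obtained by equating mixed partials of $g(t,\epsilon)$ (exactly the calculation the authors perform just after \eqref{LDPaction w a}), and the second by differentiating $a_0 g^{-1}$ in the variational direction, which is the infinitesimal push-forward.

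Next, I would substitute these variations into $\delta S = 0$ and integrate by parts in time, using the endpoint vanishing of $\eta$. The dualities $\langle m, \ad_u \eta \rangle = \langle \ad^*_u m, \eta \rangle$ and the defining relation for the diamond operation, $\langle b \diamond a, \eta \rangle := -\langle b, \mathcal{L}_\eta a \rangle$ (quoted from Appendix \ref{app: semidirect ad^*}), then give
\[
0 = \int_a^b \Big\langle -\partial_t \tfrac{\delta \ell}{\delta u} - \ad^*_u \tfrac{\delta \ell}{\delta u} + \tfrac{\delta \ell}{\delta a} \diamond a,\ \eta \Big\rangle\, dt.
\]
Because $\eta$ is arbitrary, the fundamental lemma of the calculus of variations yields the first EP equation in the statement. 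The advection equation is then immediate: differentiating $a(t) = a_0 g(t)^{-1}$ and applying the Lie chain rule $\partial_t a = -\mathcal{L}_{\dot g g^{-1}} a$ from the introduction's discussion of advective transport gives $\partial_t a + \mathcal{L}_u a = 0$.

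The main obstacle is sign bookkeeping rather than any deep analytical difficulty. Since $\mathfrak{X}$ is treated here as the \emph{right} Lie algebra of $\mathrm{Diff}(\mathcal{D})$, the paper adopts $\ad_u \eta = -[u, \eta]_{\mathrm{JL}}$, which has the opposite sign to the naive Jacobi-Lie bracket; this sign then propagates through the derivation of $\delta u$ and through the integration by parts. Similarly, the diamond identity must be applied with the correct convention for the action of $\xi \in \mathfrak{X}$ on $a \in V^*$ as set out in Appendix \ref{app: semidirect ad^*}. Once these conventions are fixed, the argument is a clean specialisation of the (stochastic) semidirect-product derivation already carried out by the authors in Section \ref{sec: SDP SFLT-erg} — one simply drops the Stratonovich terms and replaces $\rmd$ by $dt$.
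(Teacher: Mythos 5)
Your derivation is correct and reproduces \eqref{EPeqn} with the paper's sign conventions intact: the constrained variations $\delta u = \partial_t\eta - \ad_u\eta$ and $\delta a = -\mathcal{L}_\eta a$, combined with the dualities \eqref{ad-star-def} and \eqref{diamond-def} and one integration by parts in time, give exactly $\partial_t\,{\delta \ell}/{\delta u} + \ad^*_u\,{\delta \ell}/{\delta u} - ({\delta \ell}/{\delta a})\diamond a = 0$ together with the advection law. Your route, however, is not quite the one the paper writes out. The paper states the proposition without proof, citing \cite{HMR1998}, and the derivation it actually performs (in the proposition immediately following, and in the stochastic generalisations \eqref{LDPaction w a} and \eqref{StochHPVP-def}) is the Hamilton--Pontryagin form \eqref{HPVP-def}: the constraints $\dot{g}g^{-1}=u$ and $a_0g^{-1}=a$ are imposed by Lagrange multipliers $m\in\mathfrak{X}^*$ and $b\in V$, \emph{all} variations are free, and the EP equation is assembled afterwards from the separate stationarity conditions $m={\delta\ell}/{\delta u}$, $b={\delta\ell}/{\delta a}$ and the $\eta$-equation. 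Your approach --- substituting the constraints into $\int\ell(u,a)\,dt$ and working with the induced constrained variations --- is the classical Euler--Poincar\'e reduction of the cited reference and is the more economical proof of the deterministic statement; the paper's multiplier formulation buys the flexibility it needs later, since the multiplier $b$ can be promoted to a semimartingale and the reconstruction relation $\rmd g\,g^{-1}$ can absorb noise and external forces, which is precisely how the SALT and SFLT equations are obtained. One small attribution slip: the authors do not actually carry out the mixed-partials computation of $\delta(\rmd g\,g^{-1})=\rmd\eta-\ad_{\rmd g\,g^{-1}}\eta$ after \eqref{LDPaction w a}; they assert it, and your argument is the standard justification.
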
\noindent
In the EP equation in \eqref{EPeqn}, the variational derivative is defined as usual by,
\begin{align}
\delta \ell (u,a) = \frac{d}{d\ep}\bigg|_{\ep=0}\ell (u_\ep,a_\ep)
= \left\langle\,\frac{\delta \ell}{\delta u}\,,\,\delta u\,\right\rangle_{\mathfrak{X}}
+ \left\langle\,\frac{\delta \ell}{\delta a}\,,\,\delta a\,\right\rangle_V\,.
\label{VarDeriv-def}
\end{align}
The quantities $a\in V^*$ and $\delta \ell / \delta a \in V$ in \eqref{VarDeriv-def} are dual to each other under the $L^2$ pairing $\langle\,\cdot\,,\,\cdot\,\rangle_V:V\times V^*\to\mathbb{R}$. Likewise, the velocity vector field $u\in \mathfrak{X}$ and momentum 1-form density $m := \delta \ell / \delta u \in \mathfrak{X}^*$ are dual to each other under  the $L^2$ pairing $\langle\,\cdot\,,\,\cdot\,\rangle_\mathfrak{X}:\mathfrak{X}\times \mathfrak{X}^*\to\mathbb{R}$. In terms of these pairings and the Lie derivative operator $\mathcal{L}_u$ with respect to the vector field $u\in \mathfrak{X}$, the coadjoint operator  $\ad_u^*$ and the diamond operator $(\diamond)$ in \eqref{EPeqn} are defined by 
\begin{align}
\Scp{\ad^*_u \frac{\delta \ell}{\delta u} }{v}_\mathfrak{X} 
:= \Scp{\frac{\delta \ell}{\delta u}}{- \mathcal{L}_u v }_\mathfrak{X}
=: \Scp{\frac{\delta \ell}{\delta u}}{\ad_u v }_\mathfrak{X},
\label{ad-star-def}
\end{align}
where $v\in \mathfrak{X}$, and $\ad:\mathfrak{X}\times\mathfrak{X}\to \mathfrak{X}$ is defined as $\ad_u v := - [u,v] := - (u^jv_{,j}^i-v^ju_{,j}^i)\p_i$, and
\begin{align}
\Scp{b \diamond a}{v}_\mathfrak{X} := \Scp{b}{- \mathcal{L}_v a }_V.
\label{diamond-def}
\end{align}
As we shall see later, the coadjoint operator  $\ad_u^*$ and the diamond operator $(\diamond)$ enter together in \eqref{EPeqn} as a form of coadjoint motion for semidirect product action of the Lie algebra 

\begin{proposition}[The EP equation \eqref{EPeqn} also follows from the HPVP]
The EP equation in \eqref{EPeqn} can be derived by direct calculation from the following Hamilton-Pontryagin variational principle (HPVP)
\begin{align}
0 = \delta S = \delta \int^b_a \Big(\ell(u,a) + \SCP{m}{\dot{g}g^{-1} - u} + \SCP{b}{a_0g^{-1} - a}\Big)\, dt,
\label{HPVP-def}
\end{align}
where $m\in \mathfrak{X}^*$, $b\in V$, $g\in G$ and the variations are taken to be arbitrary. 
\end{proposition}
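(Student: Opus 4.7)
The plan is to take independent, unconstrained variations of all five variables appearing in the action, namely $m \in \mathfrak{X}^*$, $b \in V$, $u \in \mathfrak{X}$, $a \in V^*$ and $g \in G$, with $\delta g$ vanishing at the endpoints $t=a,b$. In this formulation $m$ and $b$ play the role of Lagrange multipliers enforcing the reconstruction constraints that relate the Eulerian variables $(u,a)$ to the material variable $g(t)$, so the derivation reduces to a clean Pontryagin-style computation of exactly the kind already carried out in \eqref{LDPaction w a}–\eqref{eq:SFEP without entrainment} for the stochastic case.

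First I would take the free variations $\delta m$ and $\delta b$: these immediately recover the kinematic constraints $\dot g g^{-1} = u$ and $a = a_0 g^{-1}$. Next, the free variations $\delta u$ and $\delta a$ identify the Lagrange multipliers with the usual variational derivatives,
\begin{align*}
m = \frac{\delta \ell}{\delta u}\,, \qquad b = \frac{\delta \ell}{\delta a}\,.
\end{align*}

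The main step is the variation in $g$. Setting $\eta := \delta g\, g^{-1} \in \mathfrak{X}$ (which vanishes at the endpoints), two standard group-theoretic identities are required: the kinematic variation formula $\delta(\dot g\, g^{-1}) = \dot\eta - \ad_{\dot g\, g^{-1}} \eta$ (in the left Lie algebra convention $\ad_u v = -[u,v]$), and the infinitesimal right action $a_0 \delta g^{-1} = -a\eta$ once the constraint $a = a_0 g^{-1}$ is invoked. Substituting these into $\delta S$ produces the integrand $\SCP{m}{\dot\eta - \ad_u \eta} + \SCP{b}{-a\eta}$. Integrating by parts in time (boundary terms vanish as $\eta|_{t=a,b}=0$) and applying the definitions \eqref{ad-star-def} of $\ad^*$ and \eqref{diamond-def} of $\diamond$, this reduces to $-\SCP{\dot m + \ad^*_u m - b \diamond a}{\eta}$. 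Arbitrariness of $\eta$ yields the EP momentum equation $\dot m + \ad^*_u m - b \diamond a = 0$, which upon substituting $m = \delta \ell / \delta u$ and $b = \delta \ell / \delta a$ is exactly the first equation in \eqref{EPeqn}. The advection equation $\dot a + \mathcal{L}_u a = 0$ then follows by differentiating the reconstruction constraint $a = a_0 g^{-1}$ in $t$, using $\dot g g^{-1} = u$, and invoking the identification of the right Lie algebra action with the Lie derivative as in Appendix \ref{app: semidirect ad^*}.

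The calculation is genuinely a direct verification, so the only real obstacle is bookkeeping: one must consistently track the several sign conventions in play—the left-versus-right convention on $\mathfrak{X}(\mathcal{D})$ (which enters in $\delta(\dot g g^{-1})$ through the sign of $\ad$), the sign relating $a\xi$ to $\mathcal{L}_\xi a$, and the sign conventions defining $\ad^*$ and $\diamond$. Once these are fixed in agreement with Appendix \ref{app: semidirect ad^*}, the terms assemble without difficulty into \eqref{EPeqn}, and the same template extends verbatim (with $d$ replaced by $\rmd$) to the stochastic derivations used throughout Section \ref{sec: SFLT noise}.
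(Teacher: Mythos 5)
Your derivation is correct and is essentially the paper's own argument: the paper does not write out the deterministic computation but performs the identical Hamilton--Pontryagin calculation in the stochastic setting immediately after \eqref{StochHPVP-def} (and in \eqref{LDPaction w a}--\eqref{eq:SFEP without entrainment}), and your steps --- constraints from $\delta m,\delta b$, Legendre-type identifications from $\delta u,\delta a$, the $\delta g$ variation via $\eta=\delta g\,g^{-1}$ with $\delta(\dot g g^{-1})=\dot\eta-\ad_{\dot g g^{-1}}\eta$ and $a_0\delta g^{-1}=-a\eta$, followed by integration by parts and the Lie chain rule for the advection equation --- are exactly the deterministic specialization of that computation. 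Your closing caveat about sign conventions is apt, since the paper's two definitions of $\diamond$ (in \eqref{diamond-def} and in Appendix \ref{app: semidirect ad^*}) differ by a sign, but your final equations agree with \eqref{EPeqn}.
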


\subsection{Stochastic semidirect-product coadjoint fluid motion with SALT noise}
To add noise in the SALT form to the deterministic HPVP in \eqref{HPVP-def}, we introduce the following stochastic variational principle, cf. \cite{Holm2015},\footnote{Note: we can choose separate (uncorrelated) Brownian motions in the $m$ and $a$ equations in \eqref{StochHPVP-def} by choosing $h_i(m,a) = h^m_i(m) + h^a_i(a)$ for the Stratonovich noise, $\circ dW^i_t$. The choice of Stratonovich noise enables the standard calculus chain rule and product rule to be used for the operations of differentiation and integration by parts, in which variational principles are defined.}
\begin{align}
0=\delta S = \delta \int^b_a \Big(\ell(u,a)\,dt + \SCP{m}{\rmd g\, g^{-1} - u\,dt} + \SCP{ \rmd b}{a_0g^{-1} - a} - \sum_i h_i(m,a)\circ dW^i_t\Big)\,.
\label{StochHPVP-def}
\end{align}
For brevity of notation, we will suppress the subscript labels in the pairings whenever the meaning is clear.  The symbol $\rmd$ in \eqref{StochHPVP-def} abbreviates stochastic time integrations. The action integral $S$ in \eqref{StochHPVP-def} is defined in the framework of variational principles with semimartingale constraints which was established in \cite{SC2020}. As we shall see below, the semimartingale nature of a Lagrange multiplier which imposes one of these semimartingale constraints emerges in the context of the full system of equations, which is obtained after the variations have been taken.   
The Hamiltonian functions $h_i(m,a)$ on $\mathfrak{X}^*\times V^*$  in \eqref{StochHPVP-def} will be prescribed here without discussing how they would be obtained in practice, e.g., via data assimilation. The data assimilation procedure for SALT is discussed, e.g., in \cite{CCHOS18a,CCHOS18}. 

\paragraph{Euler-Poincar\'e (EP) Lagrangian formulation.}
Taking arbitrary variations in the stochastic HPVP in equation \eqref{StochHPVP-def} yields the following determining  relations among the variables, 
\begin{align*}
0&=\int_a^b \SCP{\frac{\delta \ell}{\delta u}}{\delta u}\,dt + \SCP{\frac{\delta \ell}{\delta a}}{\delta a}\,dt + \SCP{\delta m}{\rmd g\,g^{-1} - u\,dt} + \SCP{ \rmd b}{a_0\delta g^{-1}- \delta a} + \SCP{\delta  \rmd b}{a_0 g^{-1}-a}\\
& \quad- \SCP{\frac{\delta h_i}{\delta m}}{\delta m}\circ dW_t^i - \SCP{\frac{\delta h_i}{\delta a}}{\delta a}\circ dW_t^i + \SCP{m}{\delta(\rmd g\, g^{-1}) - \delta u\,dt}\\
& = \int_a^b \SCP{\frac{\delta \ell}{\delta u}}{\delta u}\,dt + \SCP{\frac{\delta \ell}{\delta a}}{\delta a}\,dt + \SCP{\delta m}{\rmd g\,g^{-1} - u\,dt} + \SCP{ \rmd b}{-a\eta- \delta a} + \SCP{\delta \rmd b}{a_0 g^{-1}-a}\\
& \quad- \SCP{\frac{\delta h_i}{\delta m}}{\delta m}\circ dW_t^i - \SCP{\frac{\delta h_i}{\delta a}}{\delta a}\circ dW_t^i + \SCP{m}{\rmd \eta - \ad_{\rmd g\,g^{-1}}\eta - \delta u\,dt}\,.
\end{align*}
Here, $\eta = \delta g\,g^{-1}$ and natural boundary terms have been assumed. Collecting terms among the variational relations gives the following set of equations, which turn out to involve four semimartingales,
\begin{align}
\begin{split}
&\frac{\delta \ell}{\delta u} = m, \quad \frac{\delta \ell}{\delta a} = b, \quad 
\rmd g\,g^{-1} = u\,dt + \sum_i \frac{\delta h_i}{\delta m}\circ dW^i_t, \quad  \rmd b = \frac{\delta \ell}{\delta a}\,dt - \sum_i \frac{\delta h_i}{\delta a}\circ dW_t^i, \\
&\rmd m = - \ad^*_{\rmd g\,g^{-1}}m +  \rmd b \diamond a, \quad \rmd a = -\mathsterling_{\rmd g\,g^{-1}}a.
\end{split} \label{eq:EP eq parts}
\end{align}
Thus the SALT EP equations are found to be
\begin{align}
\rmd \frac{\delta \ell}{\delta u} = -\, \ad^*_{\rmd g\,g^{-1}}\frac{\delta \ell}{\delta u} + \left(\frac{\delta \ell}{\delta a}\,dt - \sum_i \frac{\delta h_i}{\delta a}\circ dW_t^i\right) \diamond a
\,, \quad 
\rmd a = -\mathsterling_{\rmd g\,g^{-1}} a\,,
\label{eq:StochEPeqns}
\end{align}
where the definition of $\rmd g\,g^{-1}$ are taken from \eqref{eq:EP eq parts}. For a similar, but more rigorous approach to the derivation of these equations, see \cite{SC2020}.

\paragraph{Lie-Poisson (LP) Hamiltonian formulation.}
Using the Legendre transform, $h(m,a) = \SCP{m}{u} - \ell(u,a)$ and taking variations yields
\begin{align}
\frac{\delta \ell}{\delta u} = m\,, \quad \frac{\delta h}{\delta m} = u\,, \quad \frac{\delta h}{\delta a} = -\frac{\delta \ell}{\delta a}\,,
\end{align}
and the corresponding SALT Lie-Poisson (LP) Hamiltonian equations obtained after a Legendre transform are
\begin{align}
\begin{split}
\rmd m = - \ad^*_{\rmd x_t}m - \left(\frac{\delta h}{\delta a}\,dt 
+ \sum_i \frac{\delta h_i}{\delta a}\circ dW_t^i\right) \diamond a
\,, \quad \rmd a = -\mathsterling_{\rmd x_t} a\,,
\end{split} \label{eq:SLP eq SALT}
\end{align}
where the Lagrangian path $\rmd x_t = \rmd g\,g^{-1}$ Legendre-transforms to the Hamiltonian side as
\begin{align}
\rmd x_t = \frac{\delta h}{\delta m}\,dt + \sum_i \frac{\delta h_i}{\delta m}\circ dW^i_t
\,.
\label{LagPath-SALT}
\end{align} 

By using the $\ad^*$ operator for semidirect product Lie algebras defined in appendix \ref{app: semidirect ad^*}, the LP equations \eqref{eq:SLP eq SALT} can be written equivalently in the following compact form,
\begin{align*}
\rmd (m,a) = -\ad^*_{(\rmd x_t,  \rmd b)}(m,a), \quad \text{where }  \rmd b = -\frac{\delta h}{\delta a}\,dt - \sum_i \frac{\delta h_i}{\delta a}\circ dW_t^i.
\end{align*}

\begin{remark}[Stochastic reduced Hamiltonian phase space variational principle for SALT]
	The SALT equations \eqref{eq:SLP eq SALT} can be derived from a stochastic reduced Hamilton phase-space variational principle, namely,
	\begin{align*}
	0=\delta S = \delta \int^b_a \SCP{m}{\rmd g\, g^{-1}} + \SCP{\rmd b}{a_0g^{-1} - a} - h(m, a)\,dt - \sum_i h_i(m,a)\circ dW^i_t,
	\end{align*}
	where the variations $\delta m, \delta g, \delta a$ and $\delta( b)$ are taken to be arbitrary.     
\end{remark}

The SALT Hamiltonian equations in \eqref{eq:SLP eq SALT} can be arranged into the Lie-Poisson (LP)  operator form
\begin{align}
\rmd \begin{bmatrix}m \\ a \end{bmatrix} = -
\begin{bmatrix}
\ad^*_{\fbox{}} m & \fbox{}\diamond a \\
\mathsterling_{\fbox{}}a & 0 
\end{bmatrix}
\begin{bmatrix}
\frac{\delta h}{\delta m}\,dt + \sum_i \frac{\delta h_i}{\delta m} \circ dW^i_t 
\\
\frac{\delta h}{\delta a}\,dt + \sum_i \frac{\delta h_i}{\delta a} \circ dW^i_t 
\end{bmatrix}\,,
\label{eq:SALT-LPB}
\end{align}
The Lie-Poisson operator in \eqref{eq:SALT-LPB} preserves the Casimirs of its deterministic counterpart, since the Poisson structures remains the same. However, the Hamiltonian is now a semimartingale, so energy depends explicitly on time and, hence, is no longer preserved.

\paragraph{Kelvin-Noether theorem.}
The SALT Lie-Poisson (LP) Hamiltonian equations in \eqref{eq:SLP eq SALT} also possess a Kelvin-Noether theorem. 
To understand this statement, consider the following $G$-equivariant map
$\mathcal{K}:C\times V^* \rightarrow \mathfrak{X}^{**}$ as explained in \cite{HMR1998}, 
\[
\SCP{\mathcal{K}(gc, ag^{-1})}{\Ad^*_{g^{-1}}v} = \SCP{\mathcal{K}(c, a)}{v}
\,,\quad  \hbox{for all}\quad g \in G\,,
\]
for a manifold $C$ on which $G$ acts from the left. For fluid dynamics, $C$ is the space of loops the fluid domain $\mathcal{D}$ and the map $\mathcal{K}$ is the circulation around the loop. More specifically, for all $\alpha \in \Lambda^1$,
\begin{align}
    \SCP{\mathcal{K}(c, a)}{\alpha} := \oint_{c} \alpha,
\end{align}
    \begin{theorem}[SALT Kelvin-Noether theorem] \label{SKN theorem}
	Given solutions $m(t), a(t)$ satisfying the SALT LP equations \eqref{eq:SLP eq SALT} and fixed $c_0 \in C$, the associated Kelvin-Noether quantity $\SCP{\mathcal{K}(g(t)c_0, a(t))}{m(t)}$ satisfies the following stochastic Kelvin-Noether relation.
	\begin{align}
	\rmd \SCP{\mathcal{K}(g(t)c_0, a(t))}{m(t)} = \SCP{\mathcal{K}(g(t)c_0, a(t))}{\left(- \frac{\delta h}{\delta a}\,dt - \sum_i \frac{\delta h_i}{\delta a}\circ dW_t^i \right)\diamond a}
	\end{align}
	where we identify $\rmd g\,g^{-1} =: \rmd x_t$. 
\end{theorem}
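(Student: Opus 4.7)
The plan is to reduce the stochastic Kelvin--Noether identity to the SALT Lie--Poisson equation \eqref{eq:SLP eq SALT} by pulling the time-dependent loop and advected quantity back to their initial values via the stochastic flow $g(t)$, then differentiating. The key observation is that since the SALT system prescribes $\rmd a = -\mathsterling_{\rmd x_t}a$, the advected quantity is passively transported: $a(t) = a_0\,g(t)^{-1}$. Combining this with the $G$-equivariance of $\mathcal{K}$, namely $\SCP{\mathcal{K}(gc,ag^{-1})}{\Ad^*_{g^{-1}}v} = \SCP{\mathcal{K}(c,a)}{v}$, one obtains the rewriting
\begin{align*}
\SCP{\mathcal{K}(g(t)c_0,a(t))}{m(t)} = \SCP{\mathcal{K}(c_0,a_0)}{\Ad^*_{g(t)}m(t)}\,,
\end{align*}
in which the pairing slot $\mathcal{K}(c_0,a_0)$ is now time-independent.

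Next, I would compute the Stratonovich differential of the right-hand side. Because $\mathcal{K}(c_0,a_0)$ does not depend on time, the only stochastic object to differentiate is $\Ad^*_{g(t)}m(t)$. Since Stratonovich calculus obeys the ordinary Leibniz and chain rules, the standard (deterministic) reconstruction identity transfers unchanged to the semimartingale setting, giving
\begin{align*}
\rmd\bigl(\Ad^*_{g(t)}m(t)\bigr) = \Ad^*_{g(t)}\bigl(\rmd m + \ad^*_{\rmd x_t}m\bigr)\,,
\end{align*}
where $\rmd x_t = \rmd g\,g^{-1}$ as given in \eqref{LagPath-SALT}. This is the step I regard as the main technical obstacle: formally it is an immediate consequence of $\frac{d}{dt}\Ad^*_{g_t} = \Ad^*_{g_t}\circ \ad^*_{\dot g_t g_t^{-1}}$ extended to Stratonovich semimartingales, but one has to verify that all the brackets and quadratic-variation terms that would appear in an It\^o treatment are absorbed by the Stratonovich $\circ\, dW^i_t$ convention.

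Granted the reconstruction formula, I would substitute the SALT LP motion equation \eqref{eq:SLP eq SALT}, which precisely isolates
\begin{align*}
\rmd m + \ad^*_{\rmd x_t}m = -\Big(\tfrac{\delta h}{\delta a}\,dt + \sum_i \tfrac{\delta h_i}{\delta a}\circ dW^i_t\Big)\diamond a\,,
\end{align*}
and then push the factor $\Ad^*_{g(t)}$ back to the other side of the pairing using equivariance of $\mathcal{K}$ once more. This yields
\begin{align*}
\rmd \SCP{\mathcal{K}(g(t)c_0,a(t))}{m(t)} = \SCP{\mathcal{K}(g(t)c_0,a(t))}{-\Big(\tfrac{\delta h}{\delta a}\,dt + \sum_i \tfrac{\delta h_i}{\delta a}\circ dW^i_t\Big)\diamond a}\,,
\end{align*}
which is the stated Kelvin--Noether relation. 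An alternative, and perhaps cleaner, route would be to derive the same statement variationally by noting that the Kelvin--Noether quantity arises as the Noether current associated with particle-relabelling symmetry in the stochastic HPVP \eqref{StochHPVP-def}; the manipulations above would then be seen as the Eulerian shadow of that Lagrangian symmetry argument.
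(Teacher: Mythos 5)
Your argument is correct and is essentially the paper's own: the paper proves this theorem in Appendix \ref{app: KIW} by computing $\rmd\oint_{c(\rmd\phi_t)}$ via the Kunita--It\^o--Wentzell formula, which is precisely the concrete differential-forms incarnation of your abstract chain of ``equivariance of $\mathcal{K}$ $\to$ Stratonovich reconstruction identity for $\Ad^*_{g(t)}m(t)$ $\to$ substitute the LP equation \eqref{eq:SLP eq SALT}.'' The one step you flag as the main technical obstacle --- justifying $\rmd\bigl(\Ad^*_{g(t)}m(t)\bigr)=\Ad^*_{g(t)}\bigl(\rmd m+\ad^*_{\rmd x_t}m\bigr)$ for Stratonovich semimartingales --- is exactly what the paper supplies by invoking the KIW theorem of \cite{dLHLT2020}, so your proof is complete once that citation is made.
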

\begin{remark}
    The proof of theorem \ref{SKN theorem} is in appendix \ref{app: KIW}. The fluid mechanics counterpart of the Kelvin-Noether theorem is expressed as
	\begin{align*}
	\rmd \oint_{c_t} \frac{m}{D} = \oint_{c_t} \frac{1}{D}\left(- \frac{\delta h}{\delta a}\,dt - \sum_i \frac{\delta h_i}{\delta a}\circ dW_t^i \right)\diamond a\,, 
	\end{align*} 
	where the material loop $c_t=g(t)c_0$ moves with stochastic velocity $\rmd g\, g^{-1}=\rmd x_t$ in equation \eqref{LagPath-SALT} and the quantity $D$ is the mass density of the fluid which is also advected as $D(t)=g_{t\,*}D_0$. That is, the mass density $D$ satisfies the stochastic continuity equation $\rmd D + \mathcal{L}_{\rmd x_t}D=0$.
\end{remark}

\begin{remark}[SALT Hamiltonians]
	In \cite{Holm2015}, the SEP equations are derived from a stochastic Clebsch variational principle where the advection of phase space Lagrangian variables are through a stochastic vector field $\rmd x_t = u\,dt + \sum_i \xi_i(x)\circ dW_t^i$. Compared to the stochastic vector field defined via $\rmd g\,g^{-1}$ in \eqref{eq:EP eq parts}, they coincide with those in \cite{Holm2015} when the noise Hamiltonians $h_i$ are linear in $m$, i.e. $h_i(m,a) = \SCP{\xi_i(x)}{m}$. The SEP equations then becomes
	\begin{align*}
	\rmd \frac{\delta \ell}{\delta u} = -\, \ad^*_{\rmd x_t}\frac{\delta \ell}{\delta u} + \frac{\delta \ell}{\delta a} \diamond a \,dt \,, \quad 
	\rmd a = -\mathsterling_{\rmd x_t} a\,, \quad \rmd x_t = u\,dt + \sum_i \xi_i(x)\circ dW_t^i\,.
	\end{align*}
	For data assimilation purposes, the choice $h_i(m,a) = \sum_i\langle m,\xi_i(x) \rangle$ was made in \cite{CCHOS18,CCHOS18a}.
\end{remark}

\subsection{SALT Kelvin theorem via the Kunita-It\^o-Wentzell theorem}\label{app: KIW}

A pair of results in Kunita \cite{kunita1981,kunita1984} provided the key to working with stochastic advection by Lie transport (SALT) in ideal fluid dynamics. In particular, if we choose the diffeomorphism  $\phi_t$ as the stochastic process obtained by homogenisation in \cite{CGH17}
\[
{\rmd }\phi_t(x)
:=
u(\phi_t(x),t)dt + \sum_{i}\xi_{i}(\phi_t(x))\circ dW_{t}^{i}
\,,
\]
then the Kunita It\^o-Wentzell change of variables formula discussed below leads to the following differential form leads to the stochastic advection law,
\[
{\rmd } \big(\phi_t^*K(t,x)\big) = \phi_t^* \Big({\rmd } K(t,x) 
+ \mathcal{L}_{\rmd \phi_t(x)} K(t,x)\Big)= 0
\,,\quad\hbox{a.s.}
\]
where $\mathcal L_{\rmd \phi_t(x)}$ is the \emph{Lie derivative} by the vector field $\rmd \phi_t(x)$ whose time integral 
$\int_0^t\rmd \phi_s(x) = \phi_t (x) - \phi_0(x) $ generates the semimartingale flow $\phi_t$ acting  on semimartingale $k$-form, $\rmd K(t,x) = G(t,x) d t + H(t,x) \circ d W_t$.  One may recall that the Lie derivative $\mathcal{L}_{\rmd \phi_t} K$ has both a dynamic and a geometric definition,
\[
\mathcal{L}_{\rmd \phi_t} K 
= \lim_{\Delta s\to0} \frac{1}{\Delta s} (\phi_{\Delta s}^* K - K) 
= \rmd \phi_t \contract dK + d (\rmd \phi_t \contract K) 
\]
in which the latter formula is attributed to Cartan. 

Here is a simplified statement of the theorem for applying the Kunita It\^o-Wentzell change of variables formula to stochastic advection of differential $k$-forms proved in \cite{dLHLT2020}, based on \cite{kunita1981,kunita1984}. See also \cite{LJW84}.

\begin{theorem}[Kunita-It\^o-Wentzell (KIW) formula for $k$-forms]
	Consider a sufficiently smooth $k$-form $K(t,x)$ in space which is a semimartingale in time
	\begin{align} \label{spde-compact}
	\rmd K(t,x) = G(t,x) dt + \sum_{i=1}^M H_i(t,x) \circ dW_t^i,
	\end{align}
	where $W_t^i$ are i.i.d. Brownian motions. Let $\phi_t$ be a sufficiently smooth flow satisfying the SDE 
	\[
	\rmd \phi_{t}(x) = b(t,\phi_{t}(x)) d t + \sum_{i=1}^N \xi_i(t,\phi_{t}(x)) \circ d B_t^i
	\,,\]
	in which  $B_t^i$ are i.i.d. Brownian motions. Then the pull-back $\phi_t^* K$ satisfies the formula
	\begin{align}
	\begin{split}
	\rmd \,(\phi_t^* K)(t,x) = &\phi_t^* G(t,x) d t + \sum_{i=1}^M \phi_t^* H_i(t,x) \circ d W_t^i \\
	&+  \phi_t^*\mathcal L_b K(t,x) d t  + \sum_{i=1}^N \phi_t^* \mathcal L_{\xi_i} K(t,x) \circ d B_t^i.
	\label{KIWkformsimplified}
	\end{split}
	\end{align}
\end{theorem}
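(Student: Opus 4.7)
My proof plan is to bootstrap from the classical scalar Kunita--Itô--Wentzell formula (the $k=0$ case, due to Kunita \cite{kunita1981,kunita1984}) to general $k$-forms by exploiting the fact that Stratonovich calculus obeys the ordinary Leibniz and chain rules, together with the naturality of the exterior derivative under pullback.

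First I would recall the scalar KIW formula: for a semimartingale scalar field $f(t,x)$ with Stratonovich decomposition $\rmd f = G_f\,dt + \sum_i H_{f,i}\circ dW_t^i$, the composition $f(t,\phi_t(x))$ satisfies
\begin{align*}
\rmd\big(f(t,\phi_t(x))\big) &= G_f(t,\phi_t(x))\,dt + \sum_{i=1}^M H_{f,i}(t,\phi_t(x))\circ dW_t^i \\
&\quad + \big(b\cdot\nabla f\big)(t,\phi_t(x))\,dt + \sum_{j=1}^N \big(\xi_j\cdot\nabla f\big)(t,\phi_t(x))\circ dB_t^j,
\end{align*}
which is exactly the claimed identity for $0$-forms, since $\mathcal{L}_X f = X\cdot\nabla f$.

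Next, I would reduce the $k$-form case to this scalar case in coordinates. Writing $K(t,x)=\sum_I K_I(t,x)\,dx^I$ with multi-index $I=(i_1,\ldots,i_k)$, the pullback is
\begin{align*}
\phi_t^*K = \sum_I K_I(t,\phi_t(x))\,d\phi_t^{i_1}(x)\wedge\cdots\wedge d\phi_t^{i_k}(x).
\end{align*}
I would then apply the Stratonovich Leibniz rule (which coincides with the ordinary product rule) to this $(k+1)$-fold product. The differential of each $K_I(t,\phi_t(x))$ is supplied by the scalar KIW formula above. For each $d\phi_t^{i_j}$, I would commute the spatial exterior derivative with the temporal stochastic differential, using $\rmd(d\phi_t^{i_j}) = d(\rmd\phi_t^{i_j}) = d b^{i_j}(t,\phi_t)\,dt + \sum_j d\xi_j^{i_j}(t,\phi_t)\circ dB_t^j$, which is justified because $d$ acts only on spatial variables while $\rmd$ acts on time and both operations are linear on the relevant function spaces.

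The final step is to identify the resulting terms with the right-hand side of the theorem. The $G_f$ and $H_{f,i}$ contributions reassemble into $\phi_t^*G\,dt$ and $\sum_i \phi_t^*H_i\circ dW_t^i$ by the naturality of the coordinate expression of pullback. The remaining terms proportional to $b\,dt$ and $\xi_j\circ dB_t^j$ — combining the directional derivative of the coefficient $K_I$ with the derivatives of the factors $d\phi_t^{i_j}$ — coincide with the coordinate expansion of $\phi_t^*\mathcal{L}_X K$ for $X=b$ or $X=\xi_j$. To avoid a tedious index manipulation, I would verify this matching via Cartan's formula $\mathcal{L}_X = \iota_X d + d\iota_X$ combined with $\phi_t^*d = d\,\phi_t^*$; alternatively one can use the dynamic definition $\mathcal{L}_X K = \tfrac{d}{ds}\big|_{s=0}(\varphi_s^X)^*K$ together with the chain rule for the composition $\phi_t\circ\varphi_s^X$.

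The main obstacle is the rigorous justification of the formal manipulations — commuting $d$ with $\rmd$, applying the Stratonovich product rule to a product of a random coefficient and the random 1-forms $d\phi_t^{i_j}$, and interchanging sums with stochastic integrals — all of which require sufficient spatial regularity of $G$, $H_i$, $b$ and $\xi_j$. I would either invoke a Wong--Zakai-type smooth approximation (piecewise-linear interpolation of $W,B$, passage to the limit in Stratonovich form where ordinary calculus applies), or simply appeal to the fully rigorous proof in \cite{dLHLT2020}, and concentrate the presentation on the clean algebraic assembly outlined above.
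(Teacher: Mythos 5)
The paper does not actually prove this theorem: it is quoted as ``a simplified statement of the theorem \ldots proved in \cite{dLHLT2020}, based on \cite{kunita1981,kunita1984}'', so there is no in-paper argument to compare against. Your outline is nevertheless correct and, as it happens, follows essentially the same strategy as the cited reference: write $\phi_t^*K = \sum_I K_I(t,\phi_t(x))\,d\phi_t^{i_1}\wedge\cdots\wedge d\phi_t^{i_k}$, apply the scalar Kunita--It\^o--Wentzell formula to each coefficient $K_I(t,\phi_t(x))$, apply the product rule to the wedge factors, and reassemble. The main difference is that \cite{dLHLT2020} carries this out in It\^o form, where the cross-variation terms between the coefficient and the factors $d\phi_t^{i_j}$ must be tracked explicitly and the regularity hypotheses quantified, and only afterwards converts to the Stratonovich statement; your choice to work in Stratonovich form from the outset buys you the ordinary Leibniz rule and hides those correction terms inside the $\circ\,dB_t^j$ integrals, at the cost of having to justify that the Stratonovich calculus applies (your Wong--Zakai remark is the right way to discharge this). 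Two small points deserve care in a full write-up: first, the identification $\rmd\,(d\phi_t^{i_j}) = d(\rmd\phi_t^{i_j})$ produces $\partial_l b^{i_j}(t,\phi_t(x))\,d\phi_t^l(x) = \phi_t^*(db^{i_j})$ via the chain rule, and it is precisely this that makes the ``derivative of the factors'' terms combine with $(b\cdot\nabla K_I)\circ\phi_t$ into the coordinate expression of $\phi_t^*\mathcal{L}_b K$ (the coordinate formula $(\mathcal{L}_X K)_I = X^l\partial_l K_I + \sum_j K_{i_1\ldots l\ldots i_k}\partial_{i_j}X^l$ is the cleanest way to check the match, rather than Cartan's formula); second, since $W$ and $B$ are independent families, no cross terms between them arise, but this should be stated rather than left implicit.
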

Formulas \eqref{spde-compact} and \eqref{KIWkformsimplified} are compact forms of equations in \cite{dLHLT2020} which are written in integral notation to make the stochastic processes more explicit.

To understand the distinction between integral and differential notation for SPEs, one may begin by writing  the stochastic `fundamental theorem of calculus' as 
\[
\phi_t^* K(t,x) - \phi_0^*  K(0,x)
:= K(t,\phi_t(x)) -  K(0,x) = \int^t_0 {\rmd } \,(\phi_s^*K_s) 
\,.\]
In the integral notation, the Kunita-It\^o-Wenzell (KIW) formula is written as
\[
\int^t_0 {\rmd } \,(\phi_s^*K_s)  = \int^t_0 \phi_s^* \big( {\rmd } K(s,x)   
+ \mathcal L_{\rmd \phi_s(x)} K(s,x)\big) { } \,.
\]\
So, in the differential notation the KIW formula `transfers' to the equivalent differential form
\[
{\rmd } \big(\phi_t^*K(t,x)\big) = \phi_t^* \Big({\rmd } K(t,x) 
+ \mathcal{L}_{\rmd \phi_t(x)} K(t,x)\Big)
\,,\quad\hbox{a.s.}
\]
\begin{remark}
	In applications, one sometimes expresses equation \eqref{KIWkformsimplified} using the differential notation
	\begin{align*}
	\rmd \,(\phi_t^* K)(t,x) = \phi_t^*\left(\rmd K + \mathcal L_{\rmd x_t} K\right)(t,x),
	\end{align*}
	where $\rmd x_t$ is the stochastic vector field $\rmd x_t(x) = b(t,x) \rmd t + \sum_{i=1}^N \xi_i(t,x) \circ \rmd B_t^i$. Importantly for fluid dynamics, this formula is also valid when $K$ is a vector field rather than a $k$-form.
\end{remark}

\paragraph{Stochastic Kelvin circulation theorem for the SALT theory.}
Having understood the differential notation for stochastic integrals, now we may assemble the stochastic fluid equations via the stochastic Kelvin circulation theorem for the SALT theory \cite{Holm2015}. For this purpose, we shall make the argument that the stochastic Kelvin circulation theorem is fundamentally a stochastic form of Newton's law of motion,
\[
{\color{red}\bf d }\oint_{c({\rmd }\phi_t)} \!\!\!\mathbf{v}\cdot d \mathbf{x}
= \oint_{c({\rmd }\phi_t) }
\underbrace{\ 
	({\color{red}\bf d } + \mathcal{L}_{{\rmd }\phi_t}) (\mathbf{v} \cdot d \mathbf{x})\ 
}_{\sf KIW\ formula}
= \oint_{c({\rmd }\phi_t) }
\underbrace{\ 
	\mathbf{f} \cdot d \mathbf{x}\  
}_{\sf Newton's\ Law}
.
\]
This formula corresponds to the \emph{motion equation} derived from Hamilton's principle 
\[
\Big({\color{red}\bf d } + \mathcal{L}_{{\rmd }\phi_t}\Big) 
\Big(\frac1D\frac{\delta \ell}{\delta \mathbf{u}}  \cdot d \mathbf{x} \Big)
= \mathbf{f} \cdot d \mathbf{x} \,,
\]
along with the law of  \emph{advection of mass} expressed in KIW form
\[
\Big({\color{red}\bf d } + \mathcal{L}_{{\rmd }\phi_t}\Big) \big(D d^3x\big) = 0\,,
\]
where the \emph{flow velocity} is given by the stochastic vector field
\[
{\rmd }\phi_t(x)
:=
u(\phi_t(x),t)dt+{\sum_{i}\xi_{i}(\phi_t(x))\circ dW_{t}^{i}}
\,.
\]
Thus, the stochastic Kelvin's circulation theorem for SALT simply describes the rate of change of momentum of a \emph{stochastically moving} material loop.

\section{It\^o form of the SFLP equation}\label{app: Ito form}
\begin{proposition}
	The It\^o form of the SFLP equation \eqref{SAPS-LP-eq} 
	\begin{align*}
	\rmd m + \ad^*_\frac{\delta h}{\delta m} m\,dt + \sum_i \ad^*_\frac{\delta h}{\delta m} f_i \circ dW^i_t = 0\,, 
	\end{align*}
	is given by 
	\begin{align}
	\rmd m + \ad^*_{\frac{\delta h}{\delta m}}m\,dt + \frac{1}{2}\sum_i \ad^*_{\sigma_i}f_i\,dt + \sum_i \ad^*_{\frac{\delta h}{\delta m}}f_i dW_t^i = 0, \quad  \sigma_i = \qp{-\ad^*_{\frac{\delta h}{\delta m}}f_i}{\frac{\delta^2 h}{\delta m^2}},
	\end{align}
	where the brackets $\qp{\cdot}{\cdot}$ in the definition of $\sigma_i$ denotes contraction, not $L^2$ pairing.
\end{proposition}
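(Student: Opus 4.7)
The plan is to apply the standard Stratonovich-to-It\^o conversion formula term by term. Only the noise coefficients involve a stochastic process (through $\delta h/\delta m$), so the drift term $\ad^*_{\delta h/\delta m} m\,dt$ requires no correction, and all the work is in converting $\sum_i \ad^*_{\delta h/\delta m} f_i \circ dW^i_t$. Recall that for a semimartingale coefficient $A_i$ one has
\begin{align*}
\sum_i A_i \circ dW^i_t = \sum_i A_i\, dW^i_t + \tfrac{1}{2}\sum_i d[A_i, W^i]_t,
\end{align*}
so I need to identify the martingale part of $A_i := \ad^*_{\delta h/\delta m} f_i$ and pair it with $W^i$.

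First I would exploit linearity of $\ad^*_{(\cdot)} f_i$ in its subscript to write $\rmd A_i = \ad^*_{\rmd (\delta h/\delta m)} f_i$. Applying the (Stratonovich) chain rule then gives $\rmd (\delta h/\delta m) = (\delta^2 h/\delta m^2)\, \rmd m$, into which I substitute the original Stratonovich equation for $\rmd m$. Only the $\circ dW^j$-part contributes to the quadratic covariation with $W^i$, and since Stratonovich and It\^o martingale parts agree to leading order, the relevant diffusion coefficient of $\delta h/\delta m$ is
\begin{align*}
\sigma_j = \qp{-\,\ad^*_{\delta h/\delta m} f_j}{\delta^2 h/\delta m^2},
\end{align*}
in agreement with the notation used in the statement. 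Using $d[W^j, W^i]_t = \delta_{ij}\, dt$ then yields $d[A_i, W^i]_t = \ad^*_{\sigma_i} f_i\, dt$, and assembling the pieces gives the claimed It\^o form.

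The only subtlety I anticipate is bookkeeping: one must be careful that the chain rule for $\delta h/\delta m$ is being applied in the Stratonovich calculus (so it takes the standard form with no It\^o correction), and that the quadratic-covariation computation only picks up diagonal terms $i=j$ because the Brownian motions $W^i$ are independent. Once these are handled, the computation is essentially mechanical, and the formula for $\sigma_i$ as the contraction of $-\ad^*_{\delta h/\delta m} f_i$ against the Hessian $\delta^2 h/\delta m^2$ drops out directly. The analogous proof for the semidirect-product case with advected quantities, referred to earlier in the paper, would proceed by the same strategy but applied to the pair $(m,a)$ and the full $\ad^*$ of $\mathfrak{s}$, producing both $\sigma_i$ and $\theta_i$ correction vectors.
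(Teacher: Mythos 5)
Your proposal is correct and follows essentially the same route as the paper's proof in Appendix C: both reduce the Stratonovich--It\^o correction to the quadratic covariation of $\ad^*_{\delta h/\delta m}f_i$ with $W^i$, obtain the diffusion coefficient of $\delta h/\delta m$ by the chain rule through $\frac{\delta^2 h}{\delta m^2}$ and substitution of the SDE for $\rmd m$, and use $\qv{dW^i_t}{dW^j_t}=\delta^{ij}dt$. The only cosmetic difference is that the paper carries out the computation weakly, paired against an arbitrary test vector field $\phi$ (moving $\ad^*$ across the pairing so the differential lands on $\delta h/\delta m$), whereas you invoke linearity of $\ad^*_{(\cdot)}f_i$ in its subscript directly; the substance is identical.
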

\begin{proof}
	This can be shown via direct computation. We ignore the drift term and suppress the indices on constants $f_i$ for ease of notation, choosing an arbitrary $\phi = \phi(x) \in \mathfrak{X}$, we have the Stratonovich stochastic equation
	\begin{align}
	\rmd \scp{m}{\phi} = \Scp{-\ad^*_{\frac{\delta h}{\delta m}} f}{\phi} \circ dW_t = \SCP{\ad^*_{\phi} f}{\frac{\delta h}{\delta m}} \circ dW_t
	\end{align}
	The corresponding It\^o form is then
	\begin{align}
	\begin{split}
	\rmd \scp{\phi}{m} &= \SCP{\ad^*_{\phi} f}{\frac{\delta h}{\delta m}}\, dW_t + \frac{1}{2}\qv{\rmd\SCP{\frac{\delta h}{\delta m}}{\ad^*_\phi f}}{dW_t}\\
	&= \SCP{\ad^*_{\phi} f}{\frac{\delta h}{\delta m}}\, dW_t + \frac{1}{2}\qv{\SCP{\rmd\frac{\delta h}{\delta m}}{\ad^*_\phi f}}{dW_t}\\ 
	&= \SCP{\ad^*_{\phi} f}{\frac{\delta h}{\delta m}}\, dW_t + \frac{1}{2}\qv{\SCP{\qp{-\frac{\delta^2 h}{\delta m^2}}{\ad^*_{\frac{\delta h}{\delta m}} f}\,dW_t}{\ad^*_\phi f}}{dW_t}\\ 
	&= -\SCP{\ad^*_{\frac{\delta h}{\delta m}} f}{\phi}\, dW_t - \frac{1}{2}\SCP{\ad^*_\sigma f}{\phi}\qv{dW_t}{dW_t}\\
	\end{split}
	\end{align}
	Since $\phi$ is arbitrary, the It\^o form of \eqref{SAPS-LP-eq} is 
	\begin{align}
	\rmd m + \ad^*_{\frac{\delta h}{\delta m}}m\,dt + \sum_i \ad^*_{\frac{\delta h}{\delta m}}f_i\, dW_t^i + \frac{1}{2}\sum_{i,j} \ad^*_{\sigma_j}f_i \qv{dW_t^i}{dW_t^j} = 0.
	\end{align}
	For Brownian motion, the quadratic variation term simplifies to $\qv{dW_t^i}{dW_t^j} = \delta^{ij}dt$ which recovers It\^o form as presented.
\end{proof}
\begin{proposition}\label{Ito form semidirect product}
    The It\^o form of the SFLP equation with advected quantity \eqref{ExplicitSDPergEqns} 
    \begin{align}
    \begin{split}
        &\rmd m + \ad^*_\frac{\delta h}{\delta m}m\,dt + \frac{\delta h}{\delta a}\diamond a\,dt + \sum_i \ad^*_\frac{\delta h}{\delta m}f^i \circ dW^i_t + \frac{\delta h}{\delta a}\diamond g^i \circ dW^i_t = 0\\
        &\rmd a + \mathcal{L}_\frac{\delta h}{\delta m}a\,dt + \sum_i \mathcal{L}_\frac{\delta h}{\delta m} g^i \circ dW^i_t = 0
    \end{split}
    \end{align}
    is given by
\begin{align}
    \begin{split}
        &\rmd m + \ad^*_\frac{\delta h}{\delta m}m\,dt + \frac{\delta h}{\delta a}\diamond a\,dt + \sum_i \ad^*_\frac{\delta h}{\delta m}f^i \, dW^i_t + \frac{\delta h}{\delta a}\diamond g^i \, dW^i_t + \frac{1}{2}\sum_i (\ad^*_{\sigma_i}f^i + \theta_i\diamond g^i)\,dt = 0\\
        &\rmd a + \mathcal{L}_\frac{\delta h}{\delta m}a\,dt + \sum_i \mathcal{L}_\frac{\delta h}{\delta m} g^i \, dW^i_t + \frac{1}{2}\sum_i \mathcal{L}_{\sigma_i}g^i \,dt = 0
    \end{split}
\end{align}
where $\sigma_i$ and $\theta_i$ are given as
\begin{align*}
    \sigma_i := \qp{\frac{\delta^2 h}{\delta m^2}}{-\ad^*_{\frac{\delta h}{\delta m}}f^m_i - \frac{\delta h}{\delta a}\diamond f^a_i}
    \quad \hbox{and}\quad
    \theta_i := \qp{\frac{\delta^2 h}{\delta a^2}}{-\mathcal{L}_{\frac{\delta h}{\delta m}}f^a_i}.
    \end{align*}
\end{proposition}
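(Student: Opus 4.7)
The plan is to mimic the argument already carried out for the Itô form of the SFLP equation without advected quantities, now tracking two equations simultaneously. The starting point is to test both equations against arbitrary time-independent objects, namely $\phi\in\mathfrak{X}$ for the momentum equation and $\psi\in V$ for the advection equation, and to suppress drift terms throughout the conversion since they are unaffected. Using $\SCP{\ad^*_\xi \mu}{\phi}=\SCP{\mu}{-\ad_\xi\phi}$ and $\SCP{\beta\diamond\alpha}{\phi}=\SCP{\beta}{-\alpha\phi}_V$, the pairings can be rearranged so that the factors $\frac{\delta h}{\delta m}$ and $\frac{\delta h}{\delta a}$ stand alone on one side of each bracket. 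This is the step that sets up the Stratonovich-to-Itô conversion, because it exposes exactly which functional of the solution needs its stochastic differential computed.

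Next, apply the general identity $A \circ dW^i = A\,dW^i + \tfrac12\,\bigl[\rmd A, dW^i\bigr]$ to each noise term. Only the stochastic part of $\rmd A$ contributes to the quadratic covariation, and by the variational chain rule the stochastic parts of $\rmd \frac{\delta h}{\delta m}$ and $\rmd \frac{\delta h}{\delta a}$ are obtained by contracting the operators $\frac{\delta^2 h}{\delta m^2}$ and $\frac{\delta^2 h}{\delta a^2}$ against the stochastic parts of $\rmd m$ and $\rmd a$ respectively. Reading off from the Stratonovich equation, the stochastic part of $\rmd m$ is $-\sum_j\bigl(\ad^*_{\delta h/\delta m}f^j + \tfrac{\delta h}{\delta a}\diamond g^j\bigr)\circ dW^j_t$ and the stochastic part of $\rmd a$ is $-\sum_j \mathcal{L}_{\delta h/\delta m} g^j\circ dW^j_t$. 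Contracting with the second variational derivatives reproduces precisely $\sigma_i$ and $\theta_i$ as defined in the statement.

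Substituting these expressions back, using $[dW^i_t, dW^j_t]=\delta^{ij}\,dt$, and undoing the pairings with the arbitrary test elements $\phi$ and $\psi$, one collects the Itô corrections. The momentum equation acquires two correction terms, $\tfrac12\,\ad^*_{\sigma_i}f^i$ from the $\ad^*$-noise and $\tfrac12\,\theta_i\diamond g^i$ from the $\diamond$-noise, exactly matching the claimed formula. The advection equation acquires a single correction $\tfrac12\,\mathcal{L}_{\sigma_i}g^i$, since only $\frac{\delta h}{\delta m}$ carries the stochastic part in its noise coefficient $\mathcal{L}_{\delta h/\delta m}g^i$.

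The principal bookkeeping obstacle is that the noise in the momentum equation depends on both $\frac{\delta h}{\delta m}$ and $\frac{\delta h}{\delta a}$, so the Itô correction in the $m$-equation draws contributions from the semimartingale differentials of \emph{both} functional derivatives, producing a $\sigma_i$-type term and a $\theta_i$-type term in the same equation. A related subtlety is that the formulas for $\sigma_i$ and $\theta_i$ use only the diagonal second variations $\frac{\delta^2 h}{\delta m^2}$ and $\frac{\delta^2 h}{\delta a^2}$; the cross variations $\frac{\delta^2 h}{\delta m\,\delta a}$ do not appear in the statement, so the computation proceeds under the implicit convention that such mixed contributions vanish (or are folded into the definitions above), consistent with the quadratic-Hamiltonian examples treated later in the paper.
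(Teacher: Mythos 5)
Your proposal is correct and follows essentially the same route as the paper's own proof in Appendix C: pair against arbitrary $\phi\in\mathfrak{X}$ and $\psi\in V$, move the pairings by duality so that $\frac{\delta h}{\delta m}$ and $\frac{\delta h}{\delta a}$ are exposed, compute the quadratic covariations via the chain rule for the functional derivatives, and read off $\sigma_i$ and $\theta_i$. Your closing remark about the cross variation $\frac{\delta^2 h}{\delta m\,\delta a}$ is a point the paper's proof silently assumes as well (it writes $\rmd\frac{\delta h}{\delta m}=\sigma_i\,dW_t$ directly), so flagging that convention explicitly is, if anything, a small improvement in care.
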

\begin{proof}
Consider similarly by ignoring drift terms and suppress the indices on the $f^i \in \mathfrak{X}^*$, $g^i\in V^*$, $\sigma_i\in \mathfrak{X}$ and $\theta_i \in V$. Consider constants $\phi = \phi(x) \in \mathfrak{X}$ and $\psi = \psi(x) \in V$, the following Stratonovich stochastic equations hold
\begin{align}
    \rmd \SCP{m}{\phi} = -\SCP{\ad^*_\frac{\delta h}{\delta m}f + \frac{\delta h}{\delta a}\diamond g}{\phi}\circ dW_t, \quad \rmd \SCP{a}{\psi} = -\SCP{\mathcal{L}_\frac{\delta h}{\delta m}g}{\psi}\circ dW_t = 0
\end{align}
The It\^o form thus satisfies
\begin{align*}
    \rmd \SCP{m}{\phi} &= -\SCP{\ad^*_\frac{\delta h}{\delta m}f + \frac{\delta h}{\delta a}\diamond g}{\phi}dW_t - \frac{1}{2}\qv{\rmd \SCP{\ad^*_\frac{\delta h}{\delta m}f + \frac{\delta h}{\delta a}\diamond g}{\phi}}{dW_t}\\
    & = -\SCP{\ad^*_\frac{\delta h}{\delta m}f + \frac{\delta h}{\delta a}\diamond g}{\phi}dW_t + \frac{1}{2}\qv{\rmd \SCP{\ad^*_\phi f}{\frac{\delta h}{\delta m}} + \rmd \SCP{\mathcal{L}_\phi g}{\frac{\delta h}{\delta a}}}{dW_t}\\
    & = -\SCP{\ad^*_\frac{\delta h}{\delta m}f + \frac{\delta h}{\delta a}\diamond g}{\phi}dW_t + \frac{1}{2}\qv{\SCP{\ad^*_\phi f}{\rmd \frac{\delta h}{\delta m}} + \SCP{\mathcal{L}_\phi g}{\rmd \frac{\delta h}{\delta a}}}{dW_t}\\
    & = -\SCP{\ad^*_\frac{\delta h}{\delta m}f + \frac{\delta h}{\delta a}\diamond g}{\phi}dW_t + \frac{1}{2}\qv{\SCP{\ad^*_\phi f}{\sigma_i\, dW_t} + \SCP{\mathcal{L}_\phi g}{\theta_i\, dW_t}}{dW_t}\\
    & = -\SCP{\ad^*_\frac{\delta h}{\delta m}f + \frac{\delta h}{\delta a}\diamond g}{\phi}dW_t - \frac{1}{2}\SCP{\ad^*_\sigma f + \mathcal{L}_\theta g}{\phi}\,\qv{dW_t}{dW_t}\\
    \rmd \SCP{a}{\psi} &= -\SCP{\mathcal{L}_\frac{\delta h}{\delta m}g}{\psi}\,dW_
    t - \frac{1}{2}\qv{\rmd \SCP{\mathcal{L}_\frac{\delta h}{\delta m}g}{\psi}}{dW_t}\\
    & = -\SCP{\mathcal{L}_\frac{\delta h}{\delta m}g}{\psi}\,dW_
    t + \frac{1}{2}\qv{\rmd \SCP{\psi\diamond g}{\frac{\delta h}{\delta m}}}{dW_t}\\
    & = -\SCP{\mathcal{L}_\frac{\delta h}{\delta m}g}{\psi}\,dW_
    t - \frac{1}{2}\qv{\SCP{\psi\diamond g}{\sigma\, dW_t}}{dW_t}\\
    & = -\SCP{\mathcal{L}_\frac{\delta h}{\delta m}g}{\psi}\,dW_
    t - \frac{1}{2}\SCP{\mathcal{L}_\sigma g}{\psi}\,\qv{dW_t}{dW_t}
\end{align*}
Since $\phi$ and $\psi$ are arbitrary, the It\^o form of $\rmd m$ and $\rmd a$ with drift are
\begin{align}
\begin{split}
&\rmd m + \ad^*_\frac{\delta h}{\delta m}m\,dt + \frac{\delta h}{\delta a}\diamond a \,dt + \sum_i \left(\ad^*_\frac{\delta h}{\delta m}f^i + \frac{\delta h}{\delta a}\diamond g^i\right)\,dW^i_t + \frac{1}{2}\sum_{i,j}(\ad^*_{\sigma_j}f^i + \theta_j\diamond g^i)\qv{dW^i_t}{dW^j_t} = 0\\
&\rmd a + \mathcal{L}_\frac{\delta h}{\delta m} a\,dt + \sum_{i,j}\mathcal{L}_{\sigma_j}g^i\,\qv{dW^i_t}{dW^j_t}= 0   
\end{split}
\end{align}
For Brownian motion, the quadratic variation term simplifies to $\qv{dW_t^i}{dW_t^j} = \delta^{ij}dt$ which completes the proof.
\end{proof}

\end{document}